\newenvironment{reminder}[1]{\smallskip
\noindent {\bf Reminder of #1  }\em}{}
\newcolumntype{C}{>{\centering\arraybackslash}p{3.9em}}
\newcommand{\zell}[2][c]{ \begin{tabular}[#1]{@{}c@{}}#2\end{tabular}}
\title{
Maximum-Flow and Minimum-Cut Sensitivity Oracles for \\ Directed Graphs
\vspace{2mm}
}
\author{
Mridul Ahi \thanks{Indian Institute of Technology, Delhi, India; {\tt mt1210901@maths.iitd.ac.in}}
\and 
Keerti Choudhary \thanks{Indian Institute of Technology, Delhi, India; {\tt keerti@iitd.ac.in}}
\and 
Shlok Pande \thanks{Indian Institute of Technology, Delhi, India; {\tt cs1210563@cse.iitd.ac.in}}
\and 
Pushpraj \thanks{Indian Institute of Technology, Delhi, India; {\tt cs5210596@cse.iitd.ac.in}}
\and 
Lakshay Saggi \thanks{Indian Institute of Technology, Delhi, India; {\tt csz228231@cse.iitd.ac.in}}
}
\date{}
\newcommand{\A}{{\cal A}}
\newcommand{\B}{{\cal B}}
\newcommand{\C}{{\cal C}}
\newcommand{\E}{{\cal E}}
\newcommand{\G}{{\cal G}}
\renewcommand{\P}{{\cal P}}
\newcommand{\R}{{\cal R}}
\newcommand{\W}{{\cal W}}
\newcommand{\x}{\bm{x}}
\newcommand{\inv}{\textsc{IN}}
\newcommand{\outv}{\textsc{OUT}}
\newcommand{\inedges}{\textsc{In-Edges}}
\newcommand{\outedges}{\textsc{Out-Edges}}
\newcommand{\maxflow}{\textsc{max-flow}}
\newcommand{\mincut}{\textsc{min-cut}}
\newcommand{\nmc}{\textsc{NMC}}
\newcommand{\fmc}{\textsc{FMC}}
\renewcommand{\geq}{\geqslant}
\renewcommand{\leq}{\leqslant}
\newcommand{\Null}{\textsc{null}}
\newcommand{\rev}{\mathrm{rev}}
\newtheorem{theorem}{Theorem}[section]
\newtheorem*{theorem*}{Theorem}
\newtheorem{lemma}[theorem]{Lemma}
\newtheorem{observation}[theorem]{Observation}
\newtheorem{corollary}[theorem]{Corollary}
\newtheorem{property}[theorem]{Property}
\newtheorem{assumption}[theorem]{Assumption}
\newtheorem{definition}[theorem]{Definition}
\newtheorem{question}{Question}
\newtheorem*{question*}{Question}
\newtheorem*{problem*}{Problem}
\begin{document}
\sloppy

\maketitle

\thispagestyle{empty}

\vspace{1mm}

\begin{abstract}
\noindent
This paper addresses the problem of designing fault-tolerant data structures for the $(s,t)$-max-flow and $(s,t)$-min-cut problems in unweighted directed graphs. Given a directed graph $G = (V, E)$ with a designated source $s$, sink $t$, and an $(s,t)$-max-flow of value $\lambda$, we present novel constructions for  max-flow and min-cut sensitivity oracles, and introduce the concept of a fault-tolerant flow family, which may be of independent interest. Our main contributions are as follows.

\medskip

\noindent
{\em 1. Fault-Tolerant Flow Family:}
For any graph $G$ with $(s,t)$\nobreakdash-max-flow value $\lambda$, we construct a family $\mathcal{B}$ of $2\lambda+1$ $(s,t)$-flows such that for every edge $e$, $\mathcal{B}$ contains an $(s,t)$\nobreakdash-max-flow of $G-e$. This covering property is tight up to constants for single failures and provably cannot extend to comparably small families for $k \ge 2$, where we show an $\Omega(n)$ lower bound on the  family size, independent of $\lambda$. 

\medskip

\noindent
{\em 2. Max-Flow Sensitivity Oracle:}
Using the fault-tolerant flow family, we construct a single as well as dual-edge sensitivity oracle for $(s,t)$-max-flow that requires only $O(\lambda n)$ space. Given any set $F$ of up to two failing edges, the oracle reports the updated max-flow value in $G-F$ in $O(n)$ time. Additionally, for the single-failure case, the oracle can determine in constant time whether the flow through an edge $x$ changes when another edge $e$ fails. To the best of our knowledge, these are the first non-trivial constructions of fault-tolerant max-flow oracle.

\medskip

\noindent
{\em 3. Min-Cut Sensitivity Oracle for Dual Failures:}
Recently, Baswana et al. (ICALP’22) designed an $O(n^2)$-sized oracle for answering $(s,t)$-min-cut size queries under dual edge failures in constant time, along with a matching lower bound. We extend this by focusing on graphs with small min-cut values $\lambda$, and present a more compact oracle of size $O(\lambda n)$ that answers such min-cut size queries in constant time and reports the corresponding $(s,t)$-min-cut partition in $O(n)$ time. We also show that the space complexity of our oracle is asymptotically optimal in this setting.

\medskip

\noindent
{\em 4. Min-Cut Sensitivity Oracle for Multiple  Failures:}
We extend our results to the general case of $k$ edge failures. For any graph with $(s,t)$-min-cut of size $\lambda$, we construct a $k$-fault-tolerant min-cut oracle with space complexity $O_{\lambda,k}(n \log n)$ that answers min-cut size queries in $O_{\lambda,k}(\log n)$ time. This also leads to improved fault-tolerant $(s,t)$-reachability oracles, achieving $O(n \log n)$ space and $O(\log n)$ query time for up to $k = O(1)$ edge failures. This is the first such construction to support more than two failures with near-linear space and sublinear query time, improving significantly upon previous work.

\end{abstract}

\newpage

\setcounter{page}{1} 

\section{Introduction}

The increasing scale and complexity of modern networks necessitate the development of efficient algorithms for solving fundamental graph problems. 
Among these problems, the {\em max-flow} and {\em min-cut} problems have received significant attention due to their broad applicability in transportation networks, data propagation, resource allocation, and numerous other domains. 
Since the seminal work of Ford and Fulkerson~\cite{FF62}, extensive research has focused on computing max-flow in networks~\cite{EdmondsK72, GoldbergT88, GoldbergR98, Dinitz06, LeeRS13, KathuriaLS20, BrandLYLSSSW21, GaoLP22, BrandGJLLPS22, BrandCKLPGSS23, BernsteinBST24}, culminating in the near-linear-time algorithm by Chen et al.~\cite{ChenKLPGS23}.
While the problem of computing max-flow and min-cut in a graph has been extensively studied in the past 70 years for static graphs, little is known about designing efficient data-structures for these problems for networks prone to failures.

In recent years, there has been significant interest in fault-tolerant data structures, motivated by the fact that real-world networks are often vulnerable to node or link failures.
In the fault-tolerant model, we assume that the total number of failures at any point remains bounded by a parameter~$k$, which depends on the robustness of the network. Typically, $k$ is much smaller than the total number of vertices in the network. Over the past two decades, many efficient fault tolerant data-structures have been developed for various classical graph problems, 
including spanners~\cite{Luk99, DK11, Par14}, distance preservers~\cite{ParterP:13, BodwinGPW:17, GK17}, distance oracles~\cite{DT:08,CLPR10,DP09,BernsteinK:08}, reachability ~\cite{BaswanaCR:16,Choudhary16,ChakrabortyC20}, connectivity~\cite{PatrascuT:07, DuanP:10, DuanP:17,GIP17,BCR19}, etc.

Despite these advances, the domain of max-flow and min-cut problems has seen limited progress in the context of network failures.
This paper aims to address this gap by studying the problem of flows and cuts in directed networks susceptible to edge failures. We introduce simple yet powerful techniques for the aforementioned problems, which lead to state-of-the-art results for the specific problems which we study.    
Our objective is to compute a compact data structure, referred to as a {\em sensitivity} or {\em fault-tolerant} oracle, which, after the failure of any set of edges in the input graph $G$ efficiently determines the max-flow and min-cut with respect to a source $s$ and sink $t$. 
This problem is formalized as follows.

\begin{question}
Given a directed, unweighted graph $G$ with a designated source $s$ and a designated sink $t$, design a data-structure that given any query set of edges $F \subseteq E$ of size $k$, outputs the $(s,t)$-max-flow and $(s,t)$-min-cut on deletion of edges in $F$ from graph $G$.
\end{question}

\subsection{Our Contributions}

\paragraph{Max-flow Sensitivity Oracle for Single and Dual failures}
The first result we discuss pertains to the max-flow problem in directed graphs.
To date, no efficient sensitivity oracles for max-flow are known, even for handling a single edge failure. While it is possible to determine efficiently whether the $(s,t)$-max-flow value decreases after an edge failure—thanks to the work of Picard and Queyranne~\cite{PicardQ82}, which yields an $O(n)$-sized data structure with $O(1)$ query time (see Theorem 4.6 of~\cite{BaswanaBP22})—the problem of efficiently reporting the updated $(s,t)$-max-flow after an edge failure remains unresolved. The complexity of this problem arises from the potential need to reroute flow through an alternative path in the residual graph, especially if the failed edge was carrying flow in the original $(s,t)$-max-flow of $G$. 

To tackle this, we first demonstrate that for any directed graph $G$, there exists a small family of flows that effectively captures the maximum flows of the graph $G - e$ for all potential edge failures $e \in E$. Specifically, we obtain the following result.

\begin{theorem}
For any directed graph $G=(V,E)$ with an $(s,t)$-max-flow of value $\lambda$, there exists a family $\mathcal{B}$ of $2\lambda+1$ $(s,t)$-flows in $G$ satisfying the following property:
\begin{center}
For each edge $e$, the family $\mathcal{B}$ includes an $(s,t)$-max-flow for the graph $G - e$.
\end{center}
\label{thm-intro:FT-flow}
\end{theorem}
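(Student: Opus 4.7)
The plan is to construct $\mathcal{B}$ explicitly around a fixed max-flow of $G$ and its path decomposition, yielding three groups $\{f_0\}\cup\{g_i\}_{i=1}^{\lambda}\cup\{h_i\}_{i=1}^{\lambda}$ with $|\mathcal{B}|=2\lambda+1$. I will first compute an $(s,t)$-max-flow $f_0$ of $G$ and decompose it into $\lambda$ edge-disjoint $s$-$t$ paths $P_1,\dots,P_\lambda$ via standard flow decomposition. I call an edge $e$ \emph{critical} if the $(s,t)$-max-flow of $G-e$ equals $\lambda-1$, and \emph{non-critical} otherwise. By the max-flow min-cut theorem, $e$ is critical iff $e$ lies on some $(s,t)$ min-cut, and hence every max-flow of $G$ saturates every critical edge.

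Next I populate $\mathcal{B}$ and verify coverage edge-by-edge. The flow $f_0$ itself is a max-flow of $G-e$ for every $e\notin\bigcup_iP_i$, since $f_0$ does not use $e$ and has value $\lambda$. For each $i$, I include a flow $g_i$ of value exactly $\lambda-1$ that avoids every edge of $P_i$: such $g_i$ exists because $\bigcup_{j\neq i}P_j$ is an $(s,t)$-flow of value $\lambda-1$ in $G-P_i$, truncating one augmenting path if the max-flow of $G-P_i$ happens to be $\lambda$. For any critical $e\in P_i$, the max-flow of $G-e$ is $\lambda-1$, so $g_i$ is a max-flow of $G-e$. Finally, for each $i$ I include an alternate max-flow $h_i$ of $G$ that avoids every non-critical edge of $P_i$: for any non-critical $e\in P_i$, the max-flow of $G-e$ is $\lambda$, and $h_i$ is a max-flow of $G-e$. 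The three cases together cover every edge of $G$.

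The main obstacle will be the construction of the $h_i$. The natural structural tool is the Picard--Queyranne decomposition of the residual graph $G_{f_0}$: an edge of $P_i$ is non-critical iff its two endpoints lie in a common strongly connected component (SCC) of $G_{f_0}$, so the non-critical edges of $P_i$ partition into blocks, one per SCC traversed by $P_i$. Within each such SCC the entry and exit vertices of the block are mutually reachable in the residual graph, providing an alternative route that can replace $P_i$'s in-SCC segment. The delicate point is that canceling non-critical edges one at a time via cycle augmentations may re-introduce other non-critical edges of $P_i$, so the reroutings inside the various SCCs along $P_i$ must be chosen jointly and compatibly with the other paths $P_j$. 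My planned resolution is to fix the decomposition $P_1,\dots,P_\lambda$ and the in-SCC reroutings by a canonical rule (e.g.\ a lexicographically extremal BFS inside each SCC), and then define $h_i$ as the max-flow obtained by swapping $P_i$'s segment to the opposite canonical route inside every SCC simultaneously, with feasibility checked via the bipartite routing structure between entry and exit vertices of each SCC.
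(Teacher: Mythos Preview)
Your handling of edges off the flow (via $f_0$) and of critical edges (via the $g_i$) is correct and in fact matches the paper's treatment of critical edges. The genuine gap is the existence of the flows $h_i$: it is \emph{not} true in general that some max-flow of $G$ avoids \emph{all} non-critical edges of a fixed path $P_i$ simultaneously.

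Here is a minimal counterexample with $\lambda=1$. Take vertices $s,x,a,b,y,t$ and edges
\[
(s,x),\ (x,a),\ (a,b),\ (b,y),\ (y,t),\ (x,b),\ (a,y).
\]
The only edge out of $s$ is $(s,x)$, so $\lambda=1$. Choose $P_1=s\,x\,a\,b\,y\,t$. Then $(x,a),(a,b),(b,y)$ are all non-critical (delete any one and an $s$--$t$ path survives), while $(s,x),(y,t)$ are critical. The max-flows of $G$ are exactly the three $s$--$t$ paths $s\,x\,a\,b\,y\,t$, $s\,x\,a\,y\,t$, $s\,x\,b\,y\,t$, and each of them uses at least one of $(x,a),(a,b),(b,y)$. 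Hence no max-flow $h_1$ avoiding all non-critical edges of $P_1$ exists. Your proposed fix via in-SCC rerouting does not help: $x$ and $y$ are in the same SCC of $G_{f_0}$, but there is no $x$--$y$ walk in $G$ (or equivalently no residual cycle pattern) that cancels all three edges at once, because the only forward edges inside the SCC are $(x,b)$ and $(a,y)$, which do not concatenate.

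The paper sidesteps this obstruction entirely. Instead of asking one flow to avoid all non-critical edges of a path, it builds an auxiliary capacitated graph $H$ in which critical edges get capacity $\lambda+1$ and non-critical edges get capacity $\lambda$, shows $\maxflow(s,t,H)=\lambda(\lambda+1)$, and then peels this into $\lambda+1$ unit-capacity max-flows $f_1,\dots,f_{\lambda+1}$ of $G$ via a circulation-with-lower-bounds argument. Since the $f_j$ sum to the $H$-flow and each non-critical edge has capacity only $\lambda$, some $f_j$ must assign it $0$. The point is that the covering of non-critical edges is distributed across $\lambda+1$ flows rather than concentrated pathwise; your $h_i$ attempts the latter, which is provably impossible in general.
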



We further prove an $\Omega(n)$ lower bound on the cardinality of $\B$, independent of $\lambda$, for $k \geq 2$ failures.

It is important to note that the above theorem provides a compact flow family resilient to a single edge failure, which can be amenable to various algorithmic applications. However, it does not directly imply a compact data structure for efficiently reporting the maximum flow in a fault-prone network. 
We thus ask the following question.

\begin{question}
Is it possible to design a compact data structure that, upon the failure of any set $F \subseteq E$, possibly of size one, can efficiently update the $(s,t)$-max-flow of the graph?
\end{question}

To the best of our knowledge, this problem has not been addressed in the existing literature. We answer this question affirmatively by presenting the following sensitivity oracle for handling both single and dual edge failures, which leverages the flow-covering result from \Cref{thm-intro:FT-flow}.

\begin{theorem}
For any directed graph $G = (V, E)$ with $n$ vertices, there exists an $(s,t)$-max-flow $f$ and a data structure of size $O(\lambda n)$, where $\lambda$ is the flow value, such that upon the failure of any set $F \subseteq E$ of up to two edges, the data structure can implicitly report the updated $(s,t)$-max-flow $f'$ for $G - F$ in $O(n)$ time, that is, it can identify all edges $e$ in $G - F$ for which $f'(e) \neq f(e)$ in $O(n)$ time. 

Moreover, for the case of single edge failure $(|F| = 1)$, the data structure can determine for any edge $e$ in $G - F$ whether $f'(e)$ is $0$ or $1$ in $O(1)$ time.
\label{theorem:FT-flow-4}
\end{theorem}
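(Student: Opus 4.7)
The plan is to bootstrap the fault-tolerant flow family $\mathcal{B}$ from \Cref{thm-intro:FT-flow} into a sensitivity oracle whose dominant space cost is a single stored max-flow, by encoding the remaining members of $\mathcal{B}$ as succinct \emph{differentials} and handling the second failure on the fly. Fix a baseline $(s,t)$-max-flow $f$ of $G$, represented as $\lambda$ edge-disjoint paths $P_1,\ldots,P_\lambda$; this uses $O(\lambda n)$ space. In parallel I would maintain an auxiliary min-cut oracle that reports whether $\mathrm{mincut}(G-F)$ drops below $\lambda$ for any $|F|\leq 2$: the single-failure version has $O(n)$ size and $O(1)$ query time (Theorem~4.6 of~\cite{BaswanaBP22}), and the dual-failure analogue of size $O(\lambda n)$ is available from the min-cut oracle developed separately in this paper. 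These two components already dispose of the easy cases: if $F\cap E(f)=\emptyset$, then $f$ itself is a max-flow of $G-F$ (empty differential); and if every edge of $F\cap E(f)$ participates in a dropped min-cut, then cancelling the $|F|$ corresponding flow-paths yields a max-flow of $G-F$ whose differential from $f$ has size $O(n)$.

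For the hard single-failure case --- $f(e)=1$ and $\mathrm{mincut}(G-e)=\lambda$ --- \Cref{thm-intro:FT-flow} guarantees a max-flow $f^e \in \mathcal{B}$ of $G-e$. Rather than store each $f^e$ explicitly (which would cost $\Theta(\lambda^2 n)$), I would keep only the edge-differential $\Delta_e:=f\triangle f^e$. The key structural claim to establish is that, possibly after replacing $f^e$ by an equivalent max-flow of $G-e$ obtained via cycle cancellations, each $\Delta_e$ decomposes into a single $s$-$t$ rerouting walk of length $O(n)$; charging every rerouting edge to the sub-segment of the path $P_j$ it replaces and summing across the $2\lambda+1$ flows of $\mathcal{B}$ then bounds the total storage by $O(\lambda n)$. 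From $\Delta_e$ the full $f^e$ is reconstructed in $O(n)$ time by walking $P_j$ and splicing in the rerouting. To support the $O(1)$ per-edge query ``does $f^e$ carry flow on $x$?'', I would index each $\Delta_e$ by a hash table keyed on edges, and combine this with a constant-time ancestor test on $P_j$ via path-identifier labels stored on each flow edge.

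For dual failures $F=\{e_1,e_2\}$ no analogue of $\mathcal{B}$ is available (indeed, the paper establishes an $\Omega(n)$ lower bound on such families for $k\geq 2$), so I would chain two single-failure updates. First recover $f^{e_1}$ implicitly from $f$ and $\Delta_{e_1}$ in $O(n)$ time, then treat $e_2$ as a single failure relative to $f^{e_1}$. The auxiliary oracle decides whether $\mathrm{mincut}(G-F)=\mathrm{mincut}(G-e_1)$; if it has dropped, cancel the path of $f^{e_1}$ through $e_2$ using the stored rerouting, and otherwise augment by one unit of flow. The main obstacle will be locating this augmenting path in $O(n)$ time, since the residual of $G-F$ may contain $\Theta(n^2)$ edges. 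My plan is to confine a BFS to the $O(\lambda n)$-sized subgraph spanned by $P_1,\ldots,P_\lambda$ together with the stored rerouting for $e_1$ and the reverse edges along the current flow, and to argue via the edge-disjoint path structure of $f$ that every relevant augmenting path in $G-F$ stays inside this subgraph. Once located, splicing the augmenting path, the two cancellations, and $\Delta_{e_1}$ into $f$ produces the $O(n)$-size differential of $G-F$, completing the oracle.
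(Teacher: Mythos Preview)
Your proposal has the right shape (store a baseline flow plus small differentials for the $2\lambda+1$ members of $\mathcal{B}$, then handle the second failure on the fly), but both halves contain a genuine gap.

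\textbf{Single failures: the $O(n)$-differential claim.} Your ``key structural claim'' is that, after cycle cancellations, each $\Delta_e = f\triangle f^e$ is a single rerouting walk of length $O(n)$. This fails for the flows $f_1,\ldots,f_{\lambda+1}\in\mathcal{A}\subset\mathcal{B}$: each $f_i$ is a max-flow of $G$ itself, and the symmetric difference of two max-flows of $G$ is an edge-disjoint union of cycles whose total size can be $\Theta(\lambda n)$, not $O(n)$. You cannot repair this by further cycle-cancelling $f_i$ toward $f$, because doing so destroys exactly the property you need, namely $f_i(e)=0$ on the non-critical edges $e$ that $f_i$ is supposed to cover. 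Your charging argument (``charge each rerouting edge to the sub-segment of $P_j$ it replaces'') does not control multiplicity: nothing prevents the same sub-segment from being charged $\Omega(\lambda)$ times. The paper obtains the $O(n)$ bound by a different mechanism entirely: it first passes to the subgraph $\mathcal{G}\subseteq G$ consisting only of edges that lie in some $(s,t)$-min-cut or $(\min{+}1)$-cut, and then shows that in $\mathcal{G}$ the set of zero-flow edges $\Null_{\mathcal{G}}(f_i)$ has size at most $2n$ for every max-flow $f_i$. Two flows in $\mathcal{G}$ can disagree only on the union of their $\Null$ sets, hence on $O(n)$ edges. This graph reduction is the missing idea; without it, neither your space bound nor your $O(1)$ per-edge query (which would need a hash table of size $\Theta(\lambda n)$ per flow) goes through.

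\textbf{Dual failures: finding the rerouting in $O(n)$ time.} Your plan is to run BFS inside the $O(\lambda n)$-edge subgraph spanned by the $P_j$'s and the stored rerouting. This gives $O(\lambda n)$ time, not $O(n)$, and the assertion that ``every relevant augmenting path stays inside this subgraph'' is generally false: the augmenting path lives in the residual graph of $G-F$ and may use arbitrary non-flow edges of $G$ that your subgraph omits. The paper's device is different and is what you are missing: for each $f\in\mathcal{B}$ it precomputes a $1$-fault-tolerant SCC \emph{preserver} of the residual graph $G_f$ (and of $G_f+(s,t)$), an $O(n)$-edge subgraph due to Georgiadis, Italiano, and Parotsidis that preserves all strongly connected components under any single edge deletion. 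Given failures $\{e,e'\}$, pick $f\in\mathcal{B}$ with $f(e)=0$; then $(G-e)_f = G_f - e$, and the rerouting cycle through $(e')^{\mathrm{rev}}$ (or the cancelling $(s,t)$-walk, if none exists) is found inside the $O(n)$-edge preserver in $O(n)$ time. No on-the-fly chaining or ad hoc subgraph BFS is needed.
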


We note that the bound in \Cref{theorem:FT-flow-4} is essentially tight, as $\Omega(\lambda n)$ is also a lower bound on the number of edges required to represent a maximum flow from source to sink. 

\smallskip

\paragraph{Min-Cut Sensitivity Oracle for Dual failures}
We next discuss our result on min-cut sensitivity oracle for two edge failures. Baswana~et~al.~\cite{BaswanaBP22} presented an $O(n^2)$-sized data structure to answer min-cut queries after two edge failures in $O(1)$ time. They also gave a conditional lower bound of $\widetilde{\Omega}(n^2)$ on the oracle size, based on the {\em Directed Reachability Hypothesis}. Recently, Bhanja~\cite{Bhanja24} improved this lower bound by proving an unconditional bound of $\Omega (n^2)$ on the size of the oracle. 
This lower bound holds for directed as well as undirected graphs.

Given that the result of \cite{BaswanaBP22} has a matching upper and lower bound of $\Theta(n^2)$, any improvement to the dual edge failure oracle seems unlikely. However, in many applications, the focus is towards networks with cuts of small size. Similar scenarios have been previously studied by Abboud~et~al.~\cite{AbboudGIKPTUW19}, Akmal and Jin~\cite{AkmalJ23} for finding all pairs of vertices in graphs with small edge connectivity. 
This naturally raises the following question.%
\footnote{We remark that the data-structure of \cite{BaswanaBP22} takes quadratic size even when $\lambda=1$. For details, see \Cref{section:baswana-min-cut-oracle-space}.}

\begin{question}
Given a directed graph $G$ with an $(s,t)$-min-cut of size $O(n^{1-\epsilon})$ for some $\epsilon>0$,
can we construct a sub-quadratic sized oracle for efficiently reporting the $(s,t)$-min-cut size after dual edge failures?
\end{question}

We answer this question in the affirmative by presenting the following result, where the size of the oracle scales linearly with the min-cut size $\lambda$.

\begin{theorem}
For any directed graph $G$ with $n$ vertices and an $(s,t)$-min-cut of size $\lambda$, there exists a dual fault-tolerant min-cut oracle of size $ O(\lambda n) $ that, given any set $F$ of two edge failures, reports the size of the $(s,t)$-min-cut in the graph $G - F$ in $O(1)$ time.
Furthermore, the oracle reports an $(s,t)$-min-cut partition in $G - F$ in $O(n)$ time.
\label{theorem:dual-min-cut}
\end{theorem}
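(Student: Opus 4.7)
The plan is to combine the dual-failure max-flow oracle of Theorem~\ref{theorem:FT-flow-4} with a Picard--Queyranne-style representation of all $(s,t)$-min-cuts of $G$. Fix a max-flow $f$ of value $\lambda$ in $G$, and let $D_{PQ}$ be the DAG obtained by condensing the residual graph $G_f$ into its strongly connected components. By the classical Picard--Queyranne theorem, every $(s,t)$-min-cut of $G$ corresponds bijectively to a downward-closed subset $Q$ of $D_{PQ}$ separating the source-SCC from the sink-SCC, and an edge $e=(u,v)$ crosses such a cut iff $\mathrm{scc}(u) \in Q$ and $\mathrm{scc}(v) \notin Q$. Call $e$ \emph{vital} if it lies in some min-cut; equivalently, $e$ is saturated in every max-flow and thus lies in the support of $f$, so there are at most $\lambda n$ vital edges. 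We store $D_{PQ}$, the flow data of $f$, and the max-flow oracle of Theorem~\ref{theorem:FT-flow-4}, for a total of $O(\lambda n)$ space.

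For the $O(1)$ size query on $F = \{e_1, e_2\}$, write $\lambda(F)$ for the min-cut of $G - F$; by monotonicity, $\lambda(F) \in \{\lambda,\lambda{-}1,\lambda{-}2\}$. The determination of $\lambda(F)$ reduces to two predicates: (a) some min-cut of $G$ avoids both $e_1, e_2$, and (b) some min-cut of $G$ contains both $e_1, e_2$. A short duality argument shows these are equivalent to $\lambda(F) = \lambda$ and $\lambda(F) = \lambda - 2$ respectively: if $C'$ is a min-cut of $G-F$ of size $\lambda(F)$, then $C' \cup F$ is an $(s,t)$-cut of $G$ of size at most $\lambda(F) + 2$, forcing $|C' \cup F| = \lambda$ and $C' \cap F = \emptyset$, so that $C' \cup F$ is a min-cut of $G$ with the precise relation to $F$ indicated by the predicate. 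Otherwise $\lambda(F) = \lambda - 1$. Each of (a) and (b) is a feasibility question for a closed subset $Q \subseteq D_{PQ}$ with $O(1)$ membership constraints on the SCCs containing the endpoints of $e_1, e_2$; for instance, (b) is feasible iff, for $i,j \in \{1,2\}$, no $\mathrm{scc}(v_j)$ is reachable from $\mathrm{scc}(u_i)$ in $D_{PQ}$, while (a) reduces symmetrically to forbidding certain reachabilities among the complementary SCC labels. To answer such queries in constant time, we precompute reachability between the $O(\lambda n)$ SCCs that host vital-edge endpoints, using the $\lambda$ edge-disjoint flow paths of $f$ to lay out these endpoints and index them with compact topological coordinates; the whole auxiliary table occupies $O(\lambda n)$ space.

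For the $O(n)$-time partition query, we invoke Theorem~\ref{theorem:FT-flow-4} to obtain, in $O(n)$ time, an implicit description of a max-flow $f'$ of $G - F$, i.e., the $O(n)$ edges on which $f'$ differs from $f$. A BFS from $s$ in the residual graph of $f'$ restricted to $G - F$, accessed through the oracle in $O(1)$ per edge, identifies the source side of a min-cut partition of $G - F$ in $O(n)$ additional time.

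The main obstacle is making predicate (b) $O(1)$-time within $O(\lambda n)$ space, since general DAG reachability on the condensation $D_{PQ}$ would require $\Omega(n^2)$ bits to encode in full. We exploit the fact that reachability is only ever queried between SCCs containing vital-edge endpoints---of which there are $O(\lambda n)$---and that these endpoints are organised along the $\lambda$ flow paths of $f$ in a layered fashion. Verifying that the resulting compact encoding correctly captures the distributive-lattice structure of min-cuts, particularly when the four SCC constraints in (b) are neither pairwise independent nor totally ordered, is the delicate combinatorial step of the proof.
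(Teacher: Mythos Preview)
Your predicate (b) --- ``some min-cut of $G$ contains both $e_1,e_2$'' $\Leftrightarrow$ $\lambda(F)=\lambda-2$ --- is correct, and your proposed reachability test on the Picard--Queyranne DAG is essentially what the paper does (see \Cref{theorem:k-failures-k-diff} and \Cref{lemma:DAG-reachability}). The gap is in predicate (a).

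You claim $\lambda(F)=\lambda$ iff some min-cut of $G$ avoids both $e_1,e_2$. Only the implication $\Rightarrow$ holds. The converse fails: take any graph possessing a minimal $(s,t)$-cut $C$ of size $\lambda+1$ that contains no critical edge, and let $e_1,e_2\in C$. Then \emph{every} min-cut of $G$ avoids both $e_1$ and $e_2$ (they are non-critical), yet $C\setminus\{e_1,e_2\}$ is an $(s,t)$-cut in $G-F$ of size $\lambda-1$, so $\lambda(F)=\lambda-1$. Your ``short duality argument'' only establishes the direction you don't need: from $\lambda(F)=\lambda$ you can indeed build a min-cut of $G$ disjoint from $F$, but the existence of such a min-cut in $G$ says nothing about whether a \emph{smaller} cut appears in $G-F$ via a $(\min{+}1)$-cut. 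This is precisely the scenario the paper flags as the hard case (see the overview in \Cref{section:overview} and \Cref{section:min-cut-no-critical-edges}), and it is the reason the paper invokes the fault-tolerant flow family $\A$ together with the $1$-FT strong-connectivity oracle of \cite{GeorgiadisIP20} rather than working only with the min-cut lattice of $G$: one must detect whether, for a flow $f\in\A$ with $f(e_1)=0$, the endpoints of $e_2$ become disconnected in the residual graph $G_f-e_1$ (\Cref{lemma:min-cut-using-SCC-oracle}). No predicate expressible purely in terms of reachability in the static DAG $D_{PQ}$ captures this, because the relevant information lives in the $(\min{+}1)$-cuts, not the min-cuts.

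A secondary issue: your $O(n)$-time partition query runs a BFS in the residual graph of $f'$, but that graph has $\Theta(m)$ edges, and even restricting to the sparsified graph $\G$ of \Cref{section:max-flow-sensitivity-oracle} gives $\Theta(\lambda n)$ edges, not $O(n)$. You would need an additional structural argument to bring this down to $O(n)$.
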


Our min-cut oracle for dual failures employs a different set of tools compared to those in \cite{BaswanaBP22}. In particular, we show the robustness of our flow-covering result of \Cref{thm-intro:FT-flow} by combining it with the seemingly unrelated 1-fault-tolerant strong-connectivity oracle of Georgiadis et al.~\cite{GeorgiadisIP20}, that takes linear space and answers strong-connectivity queries after single edge failure in constant time. This combination not only allows us to design an alternate min-cut sensitivity oracle but also introduces a new set of ideas which are of independent interest.

We further show that the bound of $\lambda n$ on the size of the oracle in \Cref{theorem:dual-min-cut} is tight by presenting the following result.

\begin{theorem}
For any positive integers $n, \lambda$ with $n \geq \lambda$, there exists a directed graph $G$ on $n$ vertices, with source $s$, sink $t$, and an $(s,t)$-min-cut of size $\lambda$, such that any dual fault-tolerant min-cut oracle for $G$ requires $\Omega(\lambda n)$ space. 
\label{theorem:lb-min-cut}
\end{theorem}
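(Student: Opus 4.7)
The plan is to prove Theorem~\ref{theorem:lb-min-cut} by an encoding (counting) argument. I will construct a family $\mathcal{F}$ of directed graphs on $n$ vertices, each with the designated source $s$, sink $t$, and $(s,t)$-min-cut equal to $\lambda$, such that $|\mathcal{F}| = 2^{\Omega(\lambda n)}$ and any two distinct graphs in $\mathcal{F}$ are separated by at least one dual-edge-failure min-cut query. The lower bound then follows by pigeonhole: if two distinct graphs of $\mathcal{F}$ were ever assigned the same oracle state, they would produce identical answers to every dual-failure query, contradicting the separating property, so any correct oracle must use $\log_2|\mathcal{F}| = \Omega(\lambda n)$ bits of storage.

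For the construction, I would use a three-layer skeleton. The source-side is a set $A = \{a_1,\ldots,a_\lambda\}$ with fixed edges $s \to a_i$ for each $i$; the sink-side is a set $C$ of $|C| = \Theta(n)$ vertices, each having a fixed edge $c \to t$; and a backbone matching $a_i \to c_i$ for $i\in[\lambda]$ provides $\lambda$ edge-disjoint $s$-to-$t$ paths and hence fixes the min-cut at exactly $\lambda$. On top of this skeleton, each member of $\mathcal{F}$ is obtained by independently choosing, for every pair $(i,k) \in [\lambda] \times ([|C|]\setminus[\lambda])$, whether to insert a small ``bit gadget'' between $a_i$ and $c_k$. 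Since the number of such pairs is $\Theta(\lambda n)$ in the non-trivial regime $\lambda \leq n/4$, this yields $|\mathcal{F}| = 2^{\Omega(\lambda n)}$.

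The technical heart of the argument is establishing distinguishability: for every bit $(i,k)$ there must be a specific two-edge failure whose resulting min-cut differs depending on the bit. My plan is to associate to each $(i,k)$ a private \emph{activation pair} of edges --- drawn from the backbone and from dedicated internal edges of the gadget --- whose simultaneous removal destroys all competing $s$-to-$t$ augmenting routes that could otherwise mask the bit, so that the new min-cut is $\lambda-1$ when the bit is absent and $\lambda-2$ (or vice-versa) when it is present. This is analogous in spirit to the gadget-based proof of the unconditional $\Omega(n^2)$ bound of Bhanja~\cite{Bhanja24} and the conditional bound of Baswana et al.~\cite{BaswanaBP22}, re-parameterised so that the source side has only $\lambda$ vertices.

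The main obstacle will be controlling interference: because $\Theta(\lambda n)$ bit gadgets must be packed into a graph on only $n$ vertices, every sink-side vertex participates in $\Theta(\lambda)$ gadgets and each edge potentially appears in several activation pairs. I expect the delicate step to be verifying that (a) each activation pair isolates its intended bit without inadvertently exposing another, and (b) the global $(s,t)$-min-cut of every $G \in \mathcal{F}$ remains exactly $\lambda$, no matter which bits are set. Both invariants should be enforceable by an explicit, oblivious assignment of activation edges --- for instance by using edge-disjoint gadgets along carefully chosen residue classes modulo $\lambda$ --- after which the separating property, and hence the $\Omega(\lambda n)$ space lower bound, follow immediately.
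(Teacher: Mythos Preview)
Your high-level encoding strategy---build a family of $2^{\Omega(\lambda n)}$ graphs that are pairwise distinguishable by some dual-failure min-cut query---is exactly the paper's approach. The gap is in the skeleton you chose.

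In your three-layer construction each source-side vertex $a_i$ is incident to $\Theta(n)$ bit gadgets. To read off bit $(i,k)$ with only two failures you must, after the two deletions, leave the presence or absence of the $(a_i,c_k)$ gadget as the \emph{unique} determinant of whether flow from $a_i$ can still reach $t$. But any activation pair drawn from the backbone edge $(a_i,c_i)$ together with one gadget-internal edge still leaves all the other $\Theta(n)$ gadgets out of $a_i$ intact; whichever two edges you delete, $a_i$ retains $\Theta(n)$ alternate routes to $t$ unless those other bits all happen to be $0$. Your ``residue classes modulo $\lambda$'' suggestion does not address this, because the congestion is at a single vertex $a_i$, not spread across the $\lambda$ source vertices. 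And if you try to give every gadget its own private internal vertex so that a dedicated activation edge exists, you need $\Theta(\lambda n)$ internal vertices, overshooting the vertex budget. So the interference obstacle you flag is real and is not resolved by the construction you describe.

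The paper sidesteps this by adding \emph{depth} rather than width. It takes $2r$ internally vertex-disjoint $(s,t)$-paths $X_1,\ldots,X_r,Y_1,\ldots,Y_r$ of length $L+1\approx n/\lambda$ (so $\lambda=2r$), and at each depth level $k\in[L]$ encodes an arbitrary $r\times r$ binary matrix $Z_k$ via cross-edges $(x_{k,i},y_{k,j})$. The activation pair for bit $Z_k[i,j]$ is simply the two path edges $(x_{k,i},x_{k+1,i})$ and $(y_{k-1,j},y_{k,j})$: deleting them forces the flow entering $x_{k,i}$ to either exit through $(x_{k,i},y_{k,j})$ or be lost, so the min-cut drops by one if the cross-edge is present and by two otherwise. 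Because cross-edges only connect vertices at the \emph{same} level, no other bit can interfere with this test. This gives $L\cdot r^2=\Theta(n/\lambda)\cdot\Theta(\lambda^2)=\Theta(\lambda n)$ independently readable bits on $n$ vertices, and the argument goes through cleanly.
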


\paragraph{Fault-Tolerant Min-Cut Oracle for general $\bm{k}$ failures} 
We now discuss the problem of designing fault-tolerant min-cut oracle for graphs under the general setting of up to $k$ failures. Our goal is to construct an oracle of $F(k, \lambda) \cdot o(n^2)$ space and $F'(k, \lambda)$ query time. Before presenting our results for this scenario, it is essential to set the context by discussing a broader landscape. 

The task of building a $k$-fault-tolerant $(s,t)$-min-cut oracle is notably more complex than designing a $k$-fault-tolerant $(s,t)$-reachability oracle%
\footnote{This is because $(s,t)$-reachability can be reduced to $(s,t)$-min-cut by adding a dummy source with an out-edge to the original source, limiting the maximum flow to one.}.
At present, optimal-size $(s,t)$-reachability oracles in the fault-tolerant setting have been achieved only for single and dual failures. See \Cref{table:reachability-results-overview} for a brief summary of existing results.
For $k = 1$, the dominator trees introduced by Lengauer and Tarjan~\cite{LengauerT:79} provides a linear-size data structure with constant query time.
Choudhary~\cite{Choudhary16} extended this by presenting an optimal dual fault-tolerant single source reachability oracle.
For general $k$, the situation becomes significantly more challenging. The only currently known results are an oracle with $O(2^k n)$ query time due to fault-tolerant reachability preservers of \cite{BaswanaCR:16}; and an all-pairs reachability oracle by Brand and Saranurak~\cite{BrandS19}, which achieves a size of $O(n^2 \log n)$ and has $O(k^\omega)$ query time.

\begin{table}[!ht]
\centering
\resizebox{0.96\textwidth}{!}{ 
\setstretch{1.3}
\begin{tabular}{|c|c|c|c|c|}
\hline
\bf{Problem} & \bf{Failures} & \bf{Size of the oracle} & \bf{Query time} & \bf{Reference} \\
\hline
$s$-Reachability & 1 &  $O(n)$ & $O(1)$ & Lengaur \& Tarjan~\cite{LengauerT:79} \\
$s$-Reachability & 2 & $O(n)$ & $O(1)$ & Choudhary~\cite{Choudhary16} \\
All-Pairs Reachability & $k$ &  $O(n^2 \log n)$ & $O(k^\omega)$ & Brand \& Saranurak \cite{BrandS19} \\
$s$-Reachability & $k$ &  $O(2^k n)$ & $O(2^k n)$ & Baswana et al. \cite{BaswanaCR:16} \\
\hline
\end{tabular}
}
\caption{A summary of existing fault-tolerant reachability oracles.} 
\label{table:reachability-results-overview}
\end{table}


This naturally leads us to the following open question.

\begin{question}
Does there exist a linear-size oracle with constant query time for the $(s,t)$-reachability problem under $k \geq 3$ failures?
\end{question}

In this work, we affirmatively address this question by providing near-optimal results for $(s,t)$-reachability under any constant number of failures. In fact, we solve the more general problem of the $k$-fault-tolerant min-cut, as formalized in the following theorem.

\begin{theorem}
For any $n$ vertex directed graph $G$ with an $(s,t)$-min-cut of size $\lambda$, there exists a $k$-fault-tolerant minimum cut oracle requiring  $O(2^{O(L^4 \log L)} n \log n)$  
space, where $L = \lambda + k$. This oracle can compute the size of the $(s,t)$-min-cut in the graph $G - F$, for any set $F$ of~$k$ edges, in  $O(2^{O(L^4 \log L)}\log n)$  
time. Furthermore, the oracle can report an $(s,t)$-min-cut in $G - F$ in an additional computation time of $O(kn)$.
\label{theorem:k-min-cut}
\end{theorem}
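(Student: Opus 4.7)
The plan is to construct a balanced hierarchical decomposition of $G$ of depth $O(\log n)$, with each node storing an auxiliary gadget of size $2^{O(L^4 \log L)}$. First, I would extend the flow-covering result of \Cref{thm-intro:FT-flow} to the $k$-failure setting, producing a family $\mathcal{B}$ of $(s,t)$-flows whose size depends only on $L$, such that for every $F$ with $|F| \leq k$ some $f \in \mathcal{B}$ is a max-flow in $G - F$. This family can be built by iteratively rerouting at most $L$ flow paths through alternative augmenting paths, since the max-flow in $G-F$ lies in the interval $[\lambda - k, \lambda]$; correspondingly one obtains a family of candidate $(s,t)$-min-cuts of bounded size $2^{O(L^2)}$ whose interaction with $F$ yields the actual min-cut of $G-F$.

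Second, I would recursively partition $G$ to build a decomposition tree $\mathcal{T}$. At each internal node, pick either a member of $\mathcal{B}$ that gives a balanced $(s,t)$-min-cut partition (with $\Omega(n)$ vertices on each side), or, when no such balanced cut exists, perform a heavy-path-style split of the max-flow decomposition: choose a vertex along a max-flow path whose removal yields subinstances of geometrically decreasing size. Contract each side onto a new source/sink boundary of size $O(L)$, and recurse. Ensuring one of these two choices always yields a balanced split is the technical heart of the construction, and gives $\mathcal{T}$ depth $O(\log n)$.

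Third, at each node of $\mathcal{T}$ I would store a lookup table indexed by the ``type'' of the query's restriction to that node: a type records which boundary edges fail and how the failures are distributed among the $O(L^2)$ candidate local min-cuts obtained from $\mathcal{B}$. The number of such types is $2^{O(L^4 \log L)}$, each storing the resulting local min-cut size and a pointer to the child where residual queries are resolved. A query $F$ is classified level-by-level down $\mathcal{T}$ in $O(\log n)$ steps, performing $2^{O(L^4 \log L)}$ work per level. To report a cut, we unwind the sequence of local cuts chosen during the descent and expand them back into vertex sets, which takes $O(kn)$ additional time.

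The principal obstacle will be establishing the balanced decomposition in a \emph{directed} graph: balanced $(s,t)$-min-cut partitions are not guaranteed to exist, so a structural argument is needed to show that, whenever no balanced cut is available, a vertex-based heavy-path split along a max-flow path achieves the same effect. A secondary challenge is nailing down the precise $2^{O(L^4 \log L)}$ bound on the type count at each node: the exponent $L^4 \log L$ would arise from enumerating, for an $O(L)$-sized boundary, each of the $\leq L^2$ candidate local cuts together with its classification against the failures and their $O(L)$-bit labels.
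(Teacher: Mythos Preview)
Your first step is provably impossible. You propose to extend \Cref{thm-intro:FT-flow} to $k$ failures, producing a family $\mathcal{B}$ of $(s,t)$-flows whose size depends only on $L=\lambda+k$ such that for every $F$ with $|F|\le k$ some $f\in\mathcal{B}$ is a max-flow in $G-F$. But the paper explicitly rules this out: \Cref{theorem:FT-family-lower-bound} shows that already for $k=2$ (and $\lambda=1$) any such family must contain $\Omega(n)$ flows. The iterative rerouting intuition you give does not bound the number of \emph{globally distinct} flows you may need; the lower-bound construction has $\Theta(n)$ pairwise-incomparable failure patterns each forcing a unique max-flow. Since the rest of your plan (the $2^{O(L^4\log L)}$ type count, the per-node lookup tables) is built on this bounded family, the whole architecture collapses.

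The paper takes a completely different route that never attempts to cover all failure patterns by a small flow family. It starts from \Cref{lemma:k-failures-minimal-cut}, which reduces the query to asking, for each $F_0\subseteq F$, whether some \emph{minimal} $(s,t)$-cut of size at most $L$ contains $F_0$. The difficulty is that minimal (non-minimum) cuts have no nice structure; the paper sidesteps this via the flow-augmentation theorem of Kim et al.\ (\Cref{theorem:flow-augmentation}): a random edge set $\mathcal{E}$ can be added so that a fixed minimal cut of size $\le L$ becomes a genuine \emph{minimum} cut in $G+\mathcal{E}^\infty$ with probability $2^{-O(L^4\log L)}$. Sampling $\rho=2^{O(L^4\log L)}\cdot O(L\log n)$ such augmentations and storing, for each, the simple oracle $\mathcal{O}_{\textsc{MINCUT}}$ of \Cref{theorem:k-failures-k-diff} gives the stated space. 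A query enumerates $F_0\subseteq F$ and the sampled $\mathcal{E}$'s, using $\mathcal{O}_{\textsc{MINCUT}}$ to test whether the min-cut of $G+\mathcal{E}^\infty$ drops by exactly $|F_0|$. The $2^{O(L^4\log L)}$ and the $\log n$ factors thus come from the success probability of flow augmentation and the number of samples, not from any hierarchical decomposition or type enumeration.
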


Although the space complexity and query time of our oracle grow significantly with respect to $ k $, the oracle becomes highly efficient for small values of $ k $. Similar assumptions have been made in prior work. For instance, recent advancements \cite{WeimannY13, BrandS19, DuanR22, KarczmarzS23} in fault-tolerant distance oracles for undirected graphs have resulted in a breakthrough by Dey and Gupta~\cite{DeyG24} where they design an oracle with $ O(k^4 n^2) $ space and query time of $(k \log n)^{O(k^2)}$.\\[-2mm]

A summary of existing results on $(s,t)$-min-cut sensitivity oracles, along with our new contributions, is presented in \Cref{table:min-cut-results-overview}.

\begin{table}[!h]
\centering
\resizebox{0.96\textwidth}{!}{ 
\setstretch{1.3}
\begin{tabular}{|c|c|c|c|c|}
\hline
\textbf{Failures} & \textbf{Size of the oracle} & \textbf{\zell{Query time}} & \textbf{Reference} \\ \hline
$1$ & $O(n)$ & $O(1)$  & Picard et al. \cite{PicardQ82} \\ 
$2$ & $O(n^2)$ & $O(1)$ & Baswana et al. \cite{BaswanaBP22} \\ 
$2$ & $O(\lambda n)$ & $O(1)$ & \Cref{theorem:dual-min-cut} \\ 
general $k$ & $~O(2^{O((\lambda+k)^4 \log {(\lambda+k)})} n\log n)~$ & $ ~O(2^{O((\lambda+k)^4 \log {(\lambda+k)})}\log n)~$ & \Cref{theorem:k-min-cut} \\ \hline
\end{tabular}
}
\caption{A summary of $(s,t)$-min-cut sensitivity oracles for directed graphs.}
\label{table:min-cut-results-overview}
\end{table}

As a direct consequence of \Cref{theorem:k-min-cut}, we obtain the following corollary.

\begin{corollary}  
For any directed graph $G=(V,E)$ with $n$ vertices, a source $s$, a sink $t$, and any constant integer $k \geq 1$, there exists an $(s, t)$-reachability oracle that takes $O(n \log n)$ space and determines in $ O(\log n)$ time whether a path exists from $s$ to $t$ in the graph $G - F$ for any query set $F \subseteq E$ of at most $k$ edges.  
\label{corollary:reachability-k-failures}  
\end{corollary}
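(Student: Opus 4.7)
The plan is to derive the reachability oracle directly from the $k$-fault-tolerant min-cut oracle of \Cref{theorem:k-min-cut} via a simple source-extension reduction. Concretely, I would construct an auxiliary directed graph $G'$ obtained from $G$ by adding one new vertex $s'$ and a single edge $(s',s)$. In $G'$, every $(s',t)$-path must use the edge $(s',s)$, so the $(s',t)$-max-flow (equivalently, the $(s',t)$-min-cut value) in $G'$ is at most $1$, and it equals $1$ precisely when $s$ can reach $t$ in $G$.

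The key observation is that this one-to-one correspondence is preserved under edge failures. For any $F \subseteq E$ (which does not contain the auxiliary edge $(s',s)$), $s$ reaches $t$ in $G-F$ if and only if the $(s',t)$-min-cut in $G'-F$ has size~$1$ rather than~$0$. Thus a reachability query in $G-F$ reduces to a single min-cut size query on $G'-F$.

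Now I would invoke \Cref{theorem:k-min-cut} on $G'$, which has $n+1$ vertices, source $s'$, sink $t$, and $(s',t)$-min-cut value $\lambda \leq 1$. Taking $L = \lambda + k \leq k+1$, and using that $k$ is a constant, the factor $2^{O(L^4 \log L)}$ becomes a constant. Hence the resulting oracle occupies $O(2^{O(L^4 \log L)}(n+1)\log(n+1)) = O(n \log n)$ space and answers each min-cut size query in $O(2^{O(L^4 \log L)}\log(n+1)) = O(\log n)$ time. The reachability oracle for $G$ simply stores this min-cut oracle for $G'$, and on a query $F$ returns ``reachable'' iff the reported min-cut size in $G'-F$ is~$1$.

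There is essentially no hard step; the only minor points to verify are that (i) the reduction preserves correctness under any fault set $F \subseteq E$ (immediate, since $(s',s) \notin E$), and (ii) the min-cut oracle of \Cref{theorem:k-min-cut} handles fault sets of size up to $k$ (not merely exactly $k$), which can be ensured by either the standard convention of the statement or, if needed, by padding $F$ with an arbitrary fixed non-critical edge. No additional analysis or combinatorial machinery is required beyond the black-box application of \Cref{theorem:k-min-cut}.
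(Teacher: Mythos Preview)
Your reduction is exactly the one the paper intends: the footnote accompanying the discussion of reachability states that $(s,t)$-reachability reduces to $(s,t)$-min-cut by adding a dummy source with a single out-edge to the original source, limiting the max-flow to one, and the corollary is then obtained as a direct consequence of \Cref{theorem:k-min-cut}. Your handling of the bounds and the minor ``at most $k$'' issue is fine (in fact the query procedure in \Cref{alg:ReportMinCut-II} already iterates over all subsets $F_0\subseteq F$, so it works for $|F|\le k$ without padding).
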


\subsection{Other Related Works}
This subsection provides an overview of additional related works.
For directed weighted graphs, Baswana and Bhanja~\cite{BaswanaB24} proposed a novel min-cut oracle that efficiently handles single edge failure. Their oracle takes $O(n)$ space and can return the size of the $(s,t)$-min-cut after failure of any edge in constant time. Additionally, they provide an $O(n^2)$ space data structure for finding the resulting cut in $O(n)$ time. This result, as well as the earlier work of Baswana and Bhanja~\cite{BaswanaB24} for dual failures in the unweighted setting, holds for both directed as well as undirected graphs.

We now turn to discuss the existing results specifically for the setting of undirected graphs. Baswana and Pandey~\cite{BaswanaP22} designed the first all-pairs single-edge sensitivity oracle for undirected unweighted graphs that takes $O(n^2)$ space and answers any query in O(1) time. Currently, the problem of extending this result to dual failures or weighted graphs remains open. 

Additionally, single fault-tolerant oracles for Steiner and global min-cuts exist as byproducts of the cactus representation~\cite{DinicKL76}, which stores all global min-cuts, and the connectivity carcass~\cite{DinitzV94, DinitzV95}, which stores Steiner min-cuts, as noted in~\cite{Bhanja24, BaswanaP22}. Recently, Bhanja~\cite{Bhanja24} gave efficient sensitivity oracles for Steiner and global min-cuts under single edge failure in undirected weighted graphs.

A closely related problem to fault-tolerant min-cut is dynamic min-cut, where the goal is to maintain min-cut information as the graph undergoes edge insertions or deletions. For global min-cuts, early work such as Thorup's tree-packing technique~\cite{Thorup07} laid the foundation, while recent advances using expander-decomposition-based methods~\cite{GoranciHNSTW23, JinST24} have significantly improved the query time. For all-pairs min-cuts, Jin and Sun~\cite{JinS22} achieve state-of-the-art results with a subpolynomial update time of $n^{o(1)}$ for graphs with small min-cut size. Further recent developments for flows in incremental and decremental settings can be found in~\cite{BrandCKLMGS24, BrandCKLPPSS24, ChenKLMP24}.

\subsection{Organization of the Paper}
We introduce relevant notations, definitions, and some key properties in \Cref{section:prelim}.
In \Cref{section:overview}, we provide an overview of our techniques.
Our results for max-flow sensitivity oracles are presented in \Cref{section:max-flow-sensitivity-oracle}.
In \Cref{section:strip-graph}, we review important structural properties of min-cuts that play a crucial role in computing min-cut sensitivity oracles.
The results for dual fault-tolerant and $k$-fault-tolerant min-cut sensitivity oracles are presented in \Cref{section:2-FT-oracle-using-FT-flows} and \Cref{section:FT-k-failures}, respectively.
We conclude with a discussion of open problems in \Cref{section:open-problems}.
Proofs omitted from the main text are provided in \Cref{section:deferred-proofs}, while our lower bound results are discussed in \Cref{section:lower_bounds}.
For completeness, we provide a discussion of the Circulation problem used in our max-flow sensitivity oracle in \Cref{section:circulation-with-lower-limits}.
In \Cref{section:baswana-min-cut-oracle-space}, we review the space complexity of \cite{BaswanaBP22} when parameterized by the min-cut size $\lambda$.

\section{Preliminaries}
\label{section:prelim}

Let $G = (V, E)$ be an unweighted directed multi-graph with $n$ vertices and $m$ edges. Hereon, we use the terms {\em graph} and {\em multi-graph} interchangeably. Let $s \in V$ be a source vertex, and $t \in V$ be a sink vertex. 
We use the notation $\maxflow(s,t,G)$, or equivalently $\mincut(s,t,G)$, to denote the value of the maximum-flow or minimum-cut in $G$ with respect to source $s$ and sink~$t$.
For any edge $e \in E$, we use $\overset{\leftarrow}{e}$ to denote the edge obtained by reversing the direction of $e$. 
For any vertex $x \in V$, we denote its set of out-neighbors by $\outv(x)$ and its set of in-neighbors by $\inv(x)$. 
For any subset of vertices $A \subseteq V$, the subgraph of $G$ induced by vertices in $A$ is denoted by $G[A]$. 
For any set $F$ of edges in $G$, the graph obtained by deleting the edges in $F$ from $G$ is denoted by $G - F$.

A set $C$ of edges in $G$ is called an $(s,t)$-cut if there is no $s$ to $t$ path in the graph $G - C$. 
Such a cut $C$ is called a {\em min-cut} if it minimizes the number of edges in the cut, i.e., it is the smallest set of edges whose removal disconnects $t$ from $s$. 
Any $(s,t)$-min-cut partitions the vertex set into two disjoint subsets: one containing the source vertex, denoted as $A_C$, and the other containing the sink vertex, denoted as $B_C$. Thus, an $(s,t)$-min-cut can equivalently be represented by this partition $(A_C, B_C)$.

Next we define the notion of nearest-cut, farthest-cut, $(\min+k)$-cut, and critical edges.
\begin{definition}
An $(s,t)$-min-cut $C^*$ is said to be the {Nearest Min-Cut}
$(${Farthest Min-Cut}$)$ if for each $(s,t)$-min-cut
$C$ we have $A_{C^*}\subseteq A_C$  $(A_{C^*}\supseteq A_C)$. 
We denote such a $C^*$ with $\nmc(s,t)$ $(\fmc(s,t))$.
\label{definition:NMC-FMC}
\end{definition}

\begin{definition}[$(Min+k)$-cut]
A set $\E$ of edges is said to be a $(\min+k)$ $(s,t)$-cut if $\E$ is an $(s,t)$-cut of size $k$ more than the size of minimum $(s,t)$-cut in $G$, and $\E$ is a {\bf minimal} $(s,t)$-cut (i.e., no proper subset of $\E$ is an $(s,t)$-cut).
\label{definition:min+k}
\end{definition}

\begin{definition}[Critical edge]
An edge $e$ in $G$ is said to be {\em critical} if it is contained in some $(s,t)$-min-cut of $G$. 
In other words, $e$ is a critical edge if it carries a unit flow with respect to every $(s,t)$-max-flow of $G$. 
Any edge which is not a critical edge is a non-critical edge. 
\label{definition:critical}
\end{definition}

For any flow $f$ from source $s$ to sink $t$ in a graph $G$, we use notation $\Null_{G}(f)$ to represent the collection of those edges in $G$ that carry zero flow under $f$. 
Further, $\Null_{G}(f,\min+1)$ represents the collection of those edges in $\Null_{G}(f)$ that 
are contained in some $(\min+1)$ $(s,t)$-cut in $G$. 

The following lemma (with proof deferred to the appendix, as it uses definitions from Section~\cref{section:strip-graph}) will be used crucially in our max-flow and min-cut sensitivity oracles.

\begin{lemma}
For any integer $(s,t)$-max-flow $f$ in an unweighted graph $G$, the cardinality of the set $\Null_{G}(f,\min+1)$ is at most $2n$.
\label{lemma:null-set-max-flow}
\end{lemma}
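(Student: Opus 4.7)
The plan is to characterize $\Null_G(f,\min+1)$ through the residual graph $G_f$ and then bound its size using dominator-tree style arguments on top of the strip graph. First I would apply flow conservation across any partition $(A,B)$ with $s \in A$, $t \in B$, $|E(A,B)|=\lambda+1$. From $\mathrm{flow}(A\to B)-\mathrm{flow}(B\to A)=\lambda$ and $\mathrm{flow}(A\to B) \leq |E(A,B)|=\lambda+1$, exactly one of two cases occurs: either all $\lambda+1$ cut edges carry flow and one $B\to A$ edge carries flow (in which case no cut edge lies in $\Null_G(f)$), or exactly $\lambda$ cut edges carry flow while the unique remaining cut edge $e$ has $f(e)=0$. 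Hence $e=(u,v)\in\Null_G(f,\min+1)$ iff $f(e)=0$ and there exists a partition $(A,B)$ with $s,u\in A$, $t,v\in B$ in which $e$ is the sole $A\to B$ edge of $G_f$.

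The next step is to translate this into a reachability condition in $G_f$: such a partition exists iff, in $G_f-e$, neither $s$ nor $u$ can reach either $v$ or $t$. Since $s$ never reaches $t$ in $G_f$ (as $f$ is maximum), the nontrivial conditions say that $e$ acts simultaneously as an $s$-to-$v$ bridge, a $u$-to-$v$ bridge, and a $u$-to-$t$ bridge inside $G_f$. The strip-graph machinery of Section~\ref{section:strip-graph}, built on the Picard--Queyranne SCC-DAG $\mathcal{D}$ of $G_f$ (which has at most $n$ strips), is exactly what encodes when such a witness partition can exist, and it lets us pin down each $e \in \Null_G(f,\min+1)$ as a single bridge-type edge.

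The counting will then split $\Null_G(f,\min+1)$ by the position of $e$ relative to the sets $R$ (vertices reachable from $s$ in $G_f$) and $R^*$ (vertices from which $t$ is reachable in $G_f$). If $v \in R$, the $s$-to-$v$ bridge condition forces $e$ to be the unique last edge of every $s\to v$ path in $G_f$, so each such $v$ contributes at most one edge: at most $|R|-1 \leq n$ edges in total, matching the classical dominator-tree bound for bridges from a single root. Symmetrically, if $u \in R^*$, the $u$-to-$t$ bridge condition reduces to the dominator-tree bound in the reverse graph $G_f^{\mathrm{rev}}$ rooted at $t$, giving at most $|R^*|-1 \leq n$ edges. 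The two buckets together already exhaust the $2n$ budget.

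The main obstacle will be edges $e=(u,v)$ with $u,v$ in the middle region $V\setminus(R\cup R^*)$, where neither direct bridge argument fires. To handle these I would use the strip-graph structure of Section~\ref{section:strip-graph} to show that any such $e$ must either (i) lie between two consecutive middle strips and be the sole strip-graph edge between them, or (ii) lie within a single middle strip, making it a strong bridge of that strongly connected block. The rigidity of the witness partition $A$ (which must extend $R$ by a downward-closed set of middle strips in the corresponding refinement $\mathcal{D}_e$) then lets each such edge be charged back against a dominator-tree or reverse-dominator-tree edge already counted above, so that the middle-region contribution is absorbed without inflating the $|R|-1$ and $|R^*|-1$ budgets. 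Summing the three contributions yields $|\Null_G(f,\min+1)| \leq 2n$, as required.
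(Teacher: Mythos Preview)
Your residual-graph characterization is correct, but the case split on $R$ and $R^*$ is built on a misconception about the structure of $G_f$. Since $f$ is a max-flow, the SCCs of $G_f$ are exactly the min-cut equivalence classes, and in the condensation of $G_f$ the node $\bm{s}$ is a \emph{sink} while $\bm{t}$ is a \emph{source} (Property~\ref{property:DLambda-construction}). Hence $R=\bm{s}$ and $R^*=\bm{t}$: the only vertices reachable from $s$ in $G_f$ are those in the source cluster, and the only vertices that reach $t$ are those in the sink cluster. Your ``easy'' cases therefore cover only edges inside $\bm{s}$ and $\bm{t}$, with budgets $|\bm{s}|-1$ and $|\bm{t}|-1$; when these clusters are singletons the budgets are zero. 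The ``middle region'' $V\setminus(\bm{s}\cup\bm{t})$ is in general almost all of $V$, so the proposal to ``absorb'' its contribution into those budgets via a charging argument cannot work.

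The paper's proof sidesteps the global $s$/$t$ viewpoint entirely. It first observes that any $e=(a,b)\in\Null_G(f,\min+1)$ is \emph{intra-cluster}: if $C$ is a minimal $(\lambda+1)$-cut through $e$, then $C\setminus\{e\}$ is an $(s,t)$-min-cut in $G-e$ separating $a$ from $b$, so $a$ and $b$ land in different SCCs of $(G-e)_f=G_f-e$. Thus every such $e$ is an intra-SCC edge of $G_f$ whose deletion splits its SCC, i.e.\ a strong bridge of its component. Such edges must lie in \emph{every} SCC certificate of $G_f$, and one can build a certificate with at most $2n$ edges (an in-tree plus an out-tree per SCC). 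That immediately gives $|\Null_G(f,\min+1)|\le 2n$. In effect, your dominator-tree bound for case~(a) is exactly this argument applied to the single SCC $\bm{s}$; the fix is to run it per SCC with an arbitrary root in each, not globally from $s$ and $t$.
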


Below we state an assumption used in our max-flow and min-cut sensitivity oracles. 

\begin{assumption}
In our sensitivity oracles we assume that each vertex of the graph $G$ lies on some simple {\em $(s,t)$-path}. This assumption is justified as only edges incident to such vertices can contribute to an $(s,t)$-min-cut, even in the presence of edge failures.
\label{assumption-1}
\end{assumption}

\section{Technical Overview}
\label{section:overview}
In this section, we provide a detailed exposition of the core technical ideas of our paper. 

\subsection{Fault-tolerant Flow Family}
Let us begin by revisiting the $(s,t)$-reachability problem under single failure. It is well-known that for any two vertices $s$ and $t$ in a graph $G$ such that  $t$ is reachable from $s$, there exist two paths $P$~and~$Q$ intersecting {\em only} at $(s,t)$-cuts of size one. See Figure~\ref{fig:disjoint-paths-1}.
Consequently, if an edge~$e$ fails but does not disrupt $(s,t)$-reachability, at least one of these paths, $P$~or~$Q$, serves as a certificate of $(s,t)$-reachability. 

\medskip

\begin{figure}[!ht]
\centering
\includegraphics[width=0.54\linewidth]{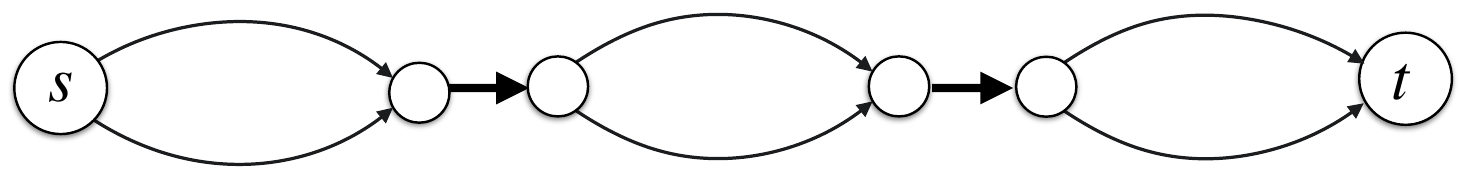}
\caption{A pair of two $(s,t)$ paths intersecting at $(s,t)$-cuts of size one.}
\label{fig:disjoint-paths-1}
\end{figure}

One of the main contributions of our work is extending this result to $(s,t)$-min-cuts of~size~$\lambda$ (greater than $1$). 
Specifically, we construct a family $\A$ of $\lambda + 1$ max-flows such that for any non-critical edge $e$ (i.e. $e$ satisfies the condition that failure of $e$ {\em does not} reduce the min-cut size), there exists a flow $f \in \A$ satisfying $f(e) = 0$. This flow serves as a max-flow certificate for the graph $G-e$. 

The construction for $\A$ proceeds through a sophisticated flow decomposition technique. First, an auxiliary graph $H = (V, E, c)$ is constructed where critical edges are assigned capacity $\lambda + 1$ while non-critical edges are assigned capacity $\lambda$. 
%
We argue that the $(s,t)$-max-flow $f$ in the capacitated graph $H$ has value $\lambda(\lambda + 1)$. 
To compute the family $\A=\{f_1, f_2, \ldots, f_{\lambda+1}\}$, we iteratively extract valid $(s, t)$-flows from $f$ in a total of $\lambda + 1$ iterations. 

These flows satisfy the property that for each edge $e$ in $H$, 
$$f(e) = f_1(e) + f_2(e) + \dots + f_{\lambda+1}(e).$$

To ensure that each iteration produces a valid maximum flow for $G$, we employ a careful application of circulation theory with lower bounds, where we put appropriate upper and lower bounds on the flow that can be removed from an edge during any iteration. 
We prove that these bounds guarantee that the extracted flow in each iteration is indeed a valid $(s, t)$-max-flow in $G$. 
%

Since $f = f_1 + f_2 +\cdots+ f_{\lambda+1}$ and the capacity of non-critical edges in $G$ is set to $\lambda$, it follows that for each non-critical edge $e$ there is at least one flow 
$f_i\in \A$
that satisfies $f_i(e) = 0$.

\subsection{Max-flow Sensitivity Oracle}

Building upon the fault-tolerant flow family, we develop compact sensitivity oracles for max-flow problem under both single and dual edge failures.

\paragraph{Single Edge Failure} 
Observe that a straightforward way to store edges carrying non-zero flow for each $f_i \in \A$ would require $O(\lambda^2 n)$ space, since each flow can saturate up to $\lambda n$ edges. However, we show that it is possible to store all flows implicitly in $O(\lambda n)$ space by carefully constructing $\mathcal{A}$ such that the number of edges on which any two flows differ is linear in $n$. To handle the failure of critical edges, the family $\A$ is further extended to compute a compact family $\B$ within the same size bounds, such that $\B$ contains a max-flow of $G-e$ for each edge $e$, whether critical or non-critical.

Our oracle supports two types of queries:
\begin{itemize}
\item Flow through specific edge: Given a failing edge $e$ and a query edge $x$, determine $f_e(x)$ in $O(1)$ time, where $f_e$ is a canonical max-flow of $G-e$.

\item Flow through all edges: Given a failing edge $e$, the oracle implicitly reports the updated max-flow in $O(n)$ time. 
\end{itemize}

\paragraph{Dual Edge Failure}
A natural extension of \Cref{thm-intro:FT-flow} is to ask whether the flow family can be generalized to handle multiple edge failures, thereby yielding max-flow sensitivity oracles for dual or higher failures. Unfortunately, this is not the case, as we prove an $\Omega(n)$ lower bound on the cardinality of $\B$ for $k \geq 2$ failures (for details see \Cref{theorem:FT-family-lower-bound}). 

However, we overcome this obstacle by exploiting the following key property:
for any $(s,t)$-flow $f$ in $G$ with $f(e) = 0$, the graph $G_f - e$ is the residual network of $(G-e)$ under $f$. Although seemingly straightforward, this observation provides a bridge between fault-tolerant flow families and our max-flow (as well as min-cut) sensitivity oracles under dual edge failures.

For instance, when edges $e, e'$ are deleted, the updated max-flow can be efficiently obtained if there exists a maximum flow $f$ satisfying $f(e) = 0$. In this case, re-routing reduces to adjusting the flow through $e'$ using graph $G_f - e$, rather than recomputing $(G-e)_f$. The existence of such flows are guaranteed by the family $\B$. Moreover, we show that by sparsifying $G_f$ using the 1-fault-tolerant strong connectivity preservers of Georgiadis et al.~\cite{GeorgiadisIP20}, this re-routing can be performed in $O(n)$ time.

\subsection{Reporting Min-Cut after Dual Failures~}

We next describe the construction of an oracle to determine the min-cut size after two edge failures.

{\bf Existing Approach.} ~As observed in~\cite{BaswanaBP22}, a challenging scenario arises when both failing edges are non-critical but their deletion reduces the min-cut size by one (i.e., the failing edges belong to a minimal cut of size $\lambda + 1$). To tackle this, \cite{BaswanaBP22} explicitly stored a family of $(\lambda + 1)$-sized minimal cuts in appropriate auxiliary graphs, in a total of $O(n^2)$ space.

{\bf Our Idea.} ~We take a detour from existing ideas and instead exploit the following insight based on max-flows: If an edge $e$ is non-critical, then there exists a flow $f$ such that $f(e) = 0$, making $f$ an $(s,t)$-max-flow in $G-e$. Then, upon deletion of another edge $e'$, the max-flow value decreases by one if and only if:
\begin{enumerate}
    \item $f(e') = 1$, and
    \item Flow through $e'$ cannot be rerouted via an alternate path,i.e., there is no cycle containing $(e')^{rev}$ in the residual graph $G_f-e$. See Figure~\ref{fig:scc-splitting}.
\end{enumerate}

\begin{figure}[!ht]
    \centering
    \includegraphics[width=0.9\linewidth]{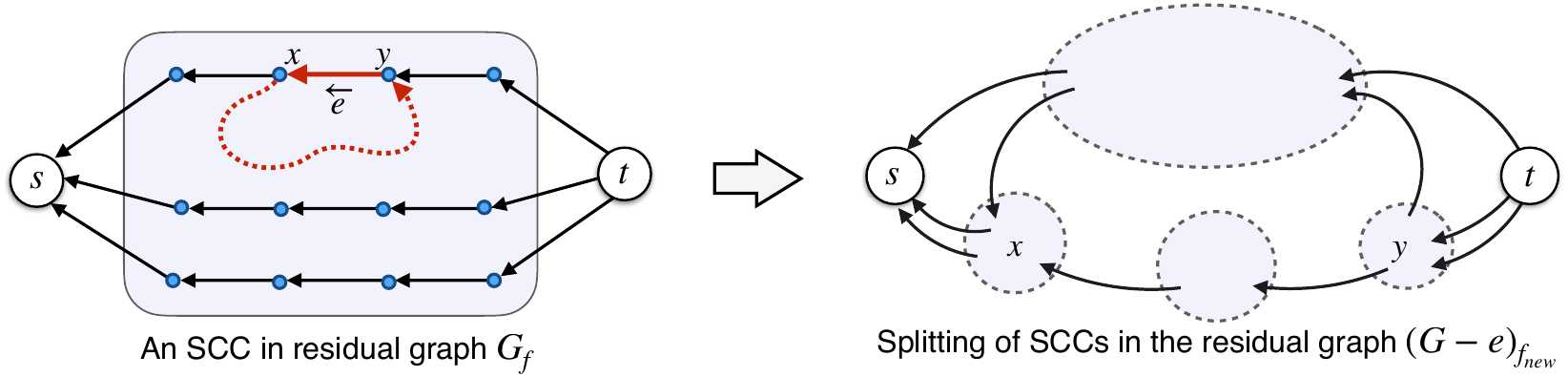}
    \caption{The SCCs in the residual graph $G_f$ may split on deletion of $e$, which makes some intra-SCC edges as inter-SCC in the updated residual graph.}
    \label{fig:scc-splitting}
\end{figure}

To exploit this observation, the first challenge is to bound the number of $(s,t)$-max-flows one will need. For this, the family $\A$ comes to our rescue and bounds the number of such flows by $\lambda+1$.
Another challenge is constructing an efficient and compact oracle to check if the above two conditions hold. We show that the first condition can be verified using our data structure over $\mathcal{A}$, while the second condition can be checked using a 1-fault-tolerant SCC data structure from \cite{GeorgiadisIP20} applied to the residual graph $G_f$. Both structures use linear space per flow in $\mathcal{A}$. This leads to an efficient $O(\lambda n)$-space data structure for min-cut sensitivity oracle for dual failures.
Our oracle to report the cut partitioning is more challenging and is deferred to the full version.

\subsection{Handling More than Two Edge Failures~}
We now discuss our ideas for handling general $k$ failures.
Given any set $F$ of failing edges of size at most $k$, we first consider the following simpler problem: Does the size of the $(s,t)$-min-cut 
decrease by exactly  $k$ when the edges in  $F$ fail?
This problem turns out to be relatively easier  because it requires checking if all $k$ edges simultaneously contribute to some $(s,t)$-min-cut in~$G$. 
We design a data structure ${\cal O}_{\text{MINCUT}}(s,t,G)$ of size $O(\lambda n)$ that verifies this in $O(k^2)$ time. Additionally, it reports a min-cut in the graph $G-F$ in $O(kn)$ time.

~Even though the simpler subproblem is easier to handle, the general problem of finding an $(s,t)$-min-cut in $G - F$ is significantly more challenging. 

For any failing set $F$ containing up to $k$ edges, it can be observed that:
$$|\mincut(s,t,G-F)| = \min_{0 \leq i < |F|}~ \min_{C \in (\lambda+i)\text{-cut}}~ (\lambda+i - |C \cap F|).$$

This suggests that studying minimal cuts of size up to $\lambda+k$ is crucial for answering min-cut queries. But these cuts can interact with each other in a number of ways, making it difficult to structurally analyse them. 

~To address the complexity of minimal cuts of size up to $\lambda+k$, we leverage a recent result by Kim et al.~\cite{KKPW-stoc22}, which provides a powerful graph augmentation technique. Specifically, their result shows that 
one can augment a graph with additional random edges $\E \subseteq V \times V$ so that a given minimal cut $C$ of size at most $\lambda+k$ becomes a valid minimum-cut in the augmented graph with probability $1/g(\lambda+k)$, for some computable function $g$. Intuitively, this augmentation allows us to transform the problem from studying minimal cuts to just minimum cuts. This augmentation, along with the data-structure ${\cal O}_{MINCUT}(s,t,G+\E)$ computed for each sampled set $\E$, helps us handle multiple failures effectively and obtain the desired oracle.

%

\section{A family of Fault-tolerant Flows}
\label{section:FT-flows}

Let $G=(V,E)$ be the input directed graph with unit edge capacities, $s$ be the source vertex, and $t$ be the sink. Let $\lambda$ be the value of $(s,t)$-max-flow in $G$. 
We present in this section construction of a compact family $\B$ of $2\lambda+1$ $(s,t)$-flows in $G$ with the property that, for each edge $e$ in $G$, there exists a maximum flow for the graph $G - e$ lying in family~$\B$. 


\smallskip

We will first establish the following theorem. 

\begin{theorem} \label{theorem:FT-flow-1}
Given any $n$-vertex directed graph $G=(V,E)$ with $(s,t)$-max-flow of value $\lambda$, we can compute in polynomial time a family $\A=\{f_1,\ldots,f_{\lambda+1}\}$ of $\lambda + 1$-max-flows such that for any non-critical edge $e\in E$, there exists a flow $f_i$ satisfying $f_i(e)=0$.
\end{theorem}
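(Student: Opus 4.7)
The plan is to carry out the construction outlined in the technical overview in three stages. First, I would build the capacitated auxiliary graph $H=(V,E,c)$ with $c(e)=\lambda+1$ if $e$ is a critical edge of $G$ and $c(e)=\lambda$ otherwise, and establish that $\maxflow(s,t,H)=\lambda(\lambda+1)$. The upper bound follows by lifting any $(s,t)$-min-cut $C$ of $G$: all of its $\lambda$ edges are critical by definition, so $C$ is an $(s,t)$-cut of $H$ of capacity $\lambda(\lambda+1)$. The matching lower bound follows from a short case analysis: an arbitrary $(s,t)$-cut $C'$ of $H$ satisfies $|C'|\geq\lambda$; if $|C'|=\lambda$ then $C'$ is an $(s,t)$-min-cut of $G$ and all its edges are critical, giving capacity exactly $\lambda(\lambda+1)$, while if $|C'|\geq\lambda+1$ the capacity is already at least $(\lambda+1)\lambda$ using only the minimum per-edge capacity $\lambda$. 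Max-flow--min-cut duality then forces every integer $(s,t)$-max-flow $f_H$ of $H$ to saturate every critical edge to exactly $\lambda+1$, since any such edge lies in an $(s,t)$-min-cut of $G$ that lifts to a min-cut of $H$.

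Next, I would decompose $f_H$ into the family $\A=\{f_1,\ldots,f_{\lambda+1}\}$ by iterative extraction. Setting $g_0=f_H$, at iteration $i\in\{1,\ldots,\lambda+1\}$ I would compute an integer $(s,t)$-flow $f_i$ in $G$ of value $\lambda$ that satisfies $f_i(e)\leq g_{i-1}(e)$ for every edge and $f_i(e)=1$ for every critical edge, and then update $g_i=g_{i-1}-f_i$. Since every max-flow of $G$ must saturate every critical edge, the forced lower bound on critical edges is automatically consistent with $f_i$ being a max-flow of $G$, and the problem is precisely a circulation problem with lower bounds on $G$: lower bound $1$ on critical edges, upper bound $\min\{1,g_{i-1}(e)\}$ on all edges, and target $(s,t)$-value $\lambda$. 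I would solve this using the standard reduction to ordinary max-flow summarised in \Cref{section:circulation-with-lower-limits}. The key invariant preserved throughout is that $g_{i-1}$ remains a feasible $(s,t)$-flow in $H$ of value $\lambda(\lambda+2-i)$ that saturates every critical edge to exactly $\lambda+2-i\geq 1$, which is what makes the lower bound on critical edges compatible with the per-edge upper bound of $1$.

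Given the family $\A$, the covering property in the theorem then follows by a one-line counting argument: for any non-critical edge $e$, $\sum_{i=1}^{\lambda+1}f_i(e)=f_H(e)\leq c(e)=\lambda$, and since each $f_i(e)\in\{0,1\}$, at least one $f_i$ must satisfy $f_i(e)=0$. That $f_i$ is an $(s,t)$-max-flow of $G$ of value $\lambda$ that avoids $e$, hence also an $(s,t)$-max-flow of $G-e$, which is exactly what is required.

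The principal technical obstacle, I anticipate, lies in establishing feasibility of the circulation problem at every iteration. The naive fractional witness $g_{i-1}/(\lambda+2-i)$ equals $1$ on critical edges but may exceed $1$ on non-critical edges, because $g_{i-1}(e)$ need not be bounded by $\lambda+2-i$ there, so direct scaling does not immediately yield a feasible fractional LP point. I would address this by arguing via a path-decomposition of $g_{i-1}$ into $\lambda(\lambda+2-i)$ unit $(s,t)$-paths in which every critical edge appears on exactly $\lambda+2-i$ paths, and then invoking a balancing (transportation-LP) argument to select $\lambda$ of these paths that jointly cover each critical edge exactly once. The selected paths form a $0/1$-valued $(s,t)$-max-flow of $G$ that is dominated by $g_{i-1}$, which is the desired $f_i$.
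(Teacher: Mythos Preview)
Your overall architecture matches the paper's: build the capacitated graph $H$, prove $\maxflow(s,t,H)=\lambda(\lambda+1)$, peel off $\lambda+1$ unit-capacity max-flows of $G$ via circulations with lower bounds, and finish with the pigeonhole $\sum_i f_i(e)\le\lambda$ on non-critical edges. Your treatment of the first and third stages is essentially identical to the paper's.

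The divergence is in the extraction step, and there the paper makes one move you are missing. You place the lower bound $\ell(e)=1$ \emph{only on critical edges} and maintain the invariant only that critical edges carry exactly $\lambda+2-i$ in $g_{i-1}$. As you yourself notice, this leaves non-critical edges uncontrolled: $g_{i-1}(e)$ can remain as large as $\lambda$ throughout, so the obvious fractional witness $g_{i-1}/(\lambda+2-i)$ is infeasible, and the circulation cut condition $d(B)+\ell(B,A)\le\mu(A,B)$ does not follow from your invariant (an $A\to B$ crossing could carry all its flow on a few highly-loaded non-critical edges, so the \emph{count} $\mu(A,B)$ of positively-loaded edges need not dominate $d(B)+\ell(B,A)$). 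Your proposed fix via path decomposition plus a ``balancing/transportation-LP'' selection of $\lambda$ paths that hit each critical edge exactly once is not obviously feasible either: a single path can traverse many critical edges, so this is an exact-cover-type selection, and you have not supplied an argument that it always has a solution.

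The paper sidesteps the whole difficulty by strengthening the invariant. In its Lemma~\ref{lemma:flow-existence} it sets $\ell(e)=1$ on \emph{every} edge with $h(e)=i$ (critical or not) and $\mu(e)=\min\{1,h(e)\}$, which forces $f_i(e)=1$ precisely on the currently ``saturated'' edges. This guarantees that after subtraction the residual satisfies $h_{i-1}(e)\le i-1$ for \emph{all} edges, and with that uniform per-edge bound the cut condition is a two-line counting argument: if $\alpha$ edges from $B$ to $A$ carry $i$, then the $A\to B$ flow is at least $(\alpha+\lambda(\mathbbm 1_{t\in B}-\mathbbm 1_{s\in B}))\cdot i$, and since every edge carries at most $i$, at least $\alpha+\lambda(\mathbbm 1_{t\in B}-\mathbbm 1_{s\in B})$ edges from $A$ to $B$ carry positive flow, i.e.\ $\mu(A,B)\ge d(B)+\ell(B,A)$. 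Adopting this stronger lower-bound rule removes the obstacle you flagged and makes the path-decomposition/balancing detour unnecessary.
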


We compute an auxiliary graph $H=(V,E,c)$, where for each edge $e\in E$, $c(e)$ is defined as follows: 

$$
c(e)=
\begin{cases}
\lambda+1 ~~& \text{if $e$ is critical in $G$,}\\
\lambda & \text{otherwise.}
\end{cases}
$$

\vspace{2mm}

\begin{lemma}
\label{claim:H-max-flow-value}
$\maxflow(s,t,H)=\lambda (\lambda + 1)$.
\end{lemma}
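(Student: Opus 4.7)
The plan is to prove the equality by establishing matching upper and lower bounds via the max-flow min-cut theorem applied to $H$.

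For the \textbf{upper bound} $\maxflow(s,t,H)\leq \lambda(\lambda+1)$, I would exhibit an $(s,t)$-cut in $H$ of capacity $\lambda(\lambda+1)$. Take any $(s,t)$-min-cut $C$ in $G$; then $|C|=\lambda$ and, by \Cref{definition:critical}, every edge in $C$ is critical. Hence the capacity of $C$ viewed as a cut in $H$ equals $\lambda\cdot(\lambda+1)$, giving the desired upper bound.

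For the \textbf{lower bound} $\maxflow(s,t,H)\geq \lambda(\lambda+1)$, I would show that every $(s,t)$-cut in $H$ (equivalently, every edge set disconnecting $s$ from $t$ in the underlying digraph) has capacity at least $\lambda(\lambda+1)$. Fix such a cut $C'$, and let $k$ be the number of non-critical edges and $\ell$ the number of critical edges in $C'$, so that its capacity in $H$ is
\[
c(C') \;=\; k\lambda + \ell(\lambda+1) \;=\; (k+\ell)\lambda + \ell.
\]
Since any $(s,t)$-cut in $G$ has at least $\lambda$ edges, we have $k+\ell\geq \lambda$. I would then split into two cases. If $k+\ell\geq \lambda+1$, the expression above is at least $(\lambda+1)\lambda + \ell \geq \lambda(\lambda+1)$. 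If $k+\ell = \lambda$, then $C'$ is itself an $(s,t)$-min-cut in $G$, so by \Cref{definition:critical} every edge of $C'$ is critical, forcing $k=0$ and $\ell=\lambda$, whence $c(C')=\lambda(\lambda+1)$ exactly. Thus every $(s,t)$-cut in $H$ has capacity at least $\lambda(\lambda+1)$, and the max-flow min-cut theorem yields the lower bound.

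Combining the two bounds gives $\maxflow(s,t,H)=\lambda(\lambda+1)$. The argument is essentially arithmetic once the correct interpretation is set up; the only conceptual point that warrants care is the case $k+\ell=\lambda$, where one must invoke the definition of ``critical edge'' to rule out non-critical edges appearing in a size-$\lambda$ cut. I do not foresee a genuine obstacle here, since both directions reduce cleanly to reasoning about sizes of cuts in $G$ combined with the two-level capacity assignment in $H$.
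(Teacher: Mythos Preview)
Your proof is correct and follows essentially the same approach as the paper: exhibit an $(s,t)$-min-cut of $G$ as a cut of capacity $\lambda(\lambda+1)$ in $H$ for the upper bound, and for the lower bound split according to whether the cut has exactly $\lambda$ edges (hence is a min-cut in $G$, so all edges are critical) or at least $\lambda+1$ edges (so total capacity is at least $(\lambda+1)\lambda$). Your version just makes the arithmetic more explicit via the $k,\ell$ bookkeeping, but the argument is the same.
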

\begin{proof}
Consider an $(s, t)$-min-cut $C$ in $G$. As each edge in $C$ has a capacity of $\lambda + 1$ in $H$, the capacity of this cut in $H$ is $\lambda (\lambda + 1)$. So, the $(s,t)$-min-cut value in $H$ is at most $\lambda (\lambda + 1)$.

Next, observe that any $(s,t)$-cut $C$ that is not a min-cut in $G$  must contain at least $\lambda + 1$ edges. Since each edge in $H$ has a capacity at least $\lambda$, the capacity of any cut $C$ that is not a min-cut in $G$ is at least $\lambda (\lambda + 1)$ in $H$. This proves that 
$\maxflow(s,t,H)=\lambda (\lambda + 1)$.
\end{proof}

\vspace{0.5mm}

Let $f$ be integral $(s,t)$-max-flow of value $\lambda (\lambda+1)$ in $H$. The main idea behind computing the family $\A$ is to iteratively peel off from $f$, $\lambda$ units of flow, in a total of $(\lambda+1)$ rounds. 
In order to proceed, we present the following lemma.

\vspace{0.5mm} 

\begin{lemma}
\label{lemma:flow-existence}
Let $h$ be an integral $(s,t)$-flow in $H$ of value $\lambda i$ (assuming $i \in \mathbb{Z}^+$) such that  $h(e) \leq i$, for each $e \in E$. 
Then we can compute a max-flow, say $f_i$, in $G$ with 0/1 values satisfying

$$
f_i(e)=
\begin{cases}
    0 & \text{~if~~} h(e)=0,\\
    1 & \text{~if~~} h(e)=i.
\end{cases}
\quad\text{for~each~edge~}e\in E.
$$
\end{lemma}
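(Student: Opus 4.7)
The plan is to reformulate the existence of $f_i$ as a feasibility question for an integral $(s,t)$-flow in $G$ with edge-wise lower and upper bounds, and then invoke the standard integrality of the flow polytope.

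First, on the underlying digraph of $G$ I would impose, for every edge $e\in E$, the bounds
\[
\ell(e) = \begin{cases} 1 & \text{if } h(e) = i, \\ 0 & \text{otherwise,} \end{cases}
\qquad
u(e) = \begin{cases} 0 & \text{if } h(e) = 0, \\ 1 & \text{otherwise.} \end{cases}
\]
Any integer $(s,t)$-flow $f_i$ in $G$ of value $\lambda$ satisfying $\ell(e)\le f_i(e)\le u(e)$ on every edge is automatically $\{0,1\}$-valued, is a max-flow (since $\maxflow(s,t,G)=\lambda$), and obeys the two stipulations of the lemma. Hence it suffices to show that this bounded flow system, augmented with the value constraint $\lambda$, is feasible.

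Next, I would certify feasibility through the fractional witness $\tilde f(e):= h(e)/i$. Because $G$ and $H$ share the same underlying digraph (only capacities differ), flow conservation for $h$ at internal vertices transfers verbatim to $\tilde f$, and the total flow out of $s$ is $\lambda i/i = \lambda$. The hypothesis $h(e)\le i$ ensures $\tilde f(e)\in [0,1]$, while the boundary cases $h(e)=0$ and $h(e)=i$ yield $\tilde f(e)=0$ and $\tilde f(e)=1$ respectively, so $\ell(e)\le\tilde f(e)\le u(e)$ on every edge.

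Finally, the node-edge incidence matrix of a flow network is totally unimodular and all of the bounds together with the prescribed value $\lambda$ are integers, so the polytope of $(s,t)$-flows satisfying these bounds and value has integer vertices whenever it is nonempty. The fractional point $\tilde f$ witnesses nonemptiness, so an integer point $f_i$ exists; this is the desired max-flow. An explicit $f_i$ can be computed in polynomial time by the standard reduction of lower-bounded flow feasibility to a single max-flow computation, as reviewed in \Cref{section:circulation-with-lower-limits}. The only real content of the argument is the fractional witness $h/i$, which is precisely where the hypothesis $h(e)\le i$ (rather than the weaker $h(e)\le \lambda+1$ that $H$ alone provides) is used; the rest is routine, so I do not foresee a major obstacle.
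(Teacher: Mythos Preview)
Your proof is correct and in fact cleaner than the paper's. Both arguments set up the same lower/upper bounded flow problem with $\ell(e)=\mathbbm{1}_{h(e)=i}$ and $u(e)=\mathbbm{1}_{h(e)>0}$, but they certify feasibility differently: the paper verifies Hoffman's cut condition $d(B)+\ell(B,A)\le \mu(A,B)$ for every cut $(A,B)$ via a counting argument (bounding the number of edges with nonzero $h$-flow crossing from $A$ to $B$), whereas you simply exhibit the fractional point $h/i$ and appeal to integrality of the bounded-flow polytope. Your route is shorter and makes transparent exactly where the hypothesis $h(e)\le i$ enters; the paper's route is more self-contained in that it only invokes the explicit Hoffman criterion stated in the appendix rather than the general TU machinery. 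Either way the computational claim follows from the same standard reduction to a single max-flow, so there is nothing to add.
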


\begin{proof}
To compute the flow $f_i$,
we construct an instance $\widetilde G=(V,E,d,\ell,\mu)$ of the circulation problem (see Appendix~\ref{section:circulation-with-lower-limits}) with upper and lower limits as described below:
\vspace{1mm}
\begin{enumerate}
\item $d(s) = -\lambda$, $d(t) = \lambda$ and $d(v) = 0 \ \forall v \neq s, t$
\item For each $e\in E$, $\mu(e)=\min\{1,h(e)\}$
\item For each $e\in E$, $\ell(e)=1$ if $h(e)=i$, and $\ell(e)=0$ otherwise.
\end{enumerate}

\noindent
It suffices to show that $\widetilde G$ has a circulation. 
By \Cref{lemma:circulation-with-lower-limits} in appendix, a circulation exists in $\widetilde G$ if and only if, for every cut $(A,B)$ in $\widetilde G$, the following inequality holds.
$$
d(B)+\ell(B,A)~\leq~ \mu(A,B),
$$
where, 
$$
d(B)=\sum_{v\in B} d(v)
\text{,~~~~} 
\ell(B,A)=\displaystyle \sum_{\substack{(x,y)\in E,\\x\in B,~y\in A}}\ell(x,y)
\text{,~~~and~~~} 
\mu(A,B)=\displaystyle \sum_{\substack{(x,y)\in E,\\x\in A,~y\in B}}\mu(x,y).
$$
\vspace{2mm}

Consider a cut $(A,B)$ in $H$. Let $\mathbbm{1}_{t\in B}$ be an indicator variable that takes 
the value $1$ if $t$ lies in $B$, and $0$ otherwise. Similarly, define 
$\mathbbm{1}_{s\in B}$. So, the net flow under $h$ from $A$ to $B$ is 
$$\lambda i(\mathbbm{1}_{t\in B}-\mathbbm{1}_{s\in B}).$$ 

Let $\alpha$ be the number of edges in $H$ lying in set $B\times A$ that carry a flow $i$ under $h$. 
Thus, the edges in $H$ lying in $A\times B$ must be carrying a flow of at least
$$(\alpha i)~+~(\lambda i)(\mathbbm{1}_{t\in B}-\mathbbm{1}_{s\in B}).$$

As each edge in $H$ carries a flow at most $i$ under $h$, the number of edges in the set $A\times B$ that were carrying 
a non-zero flow with respect to $h$ must be at least $\alpha+\lambda (\mathbbm{1}_{t\in B}-\mathbbm{1}_{s\in B})$.
This implies $$\mu(A,B)\geq \alpha+\lambda (\mathbbm{1}_{t\in B}-\mathbbm{1}_{s\in B}).$$

Next recall that there are $\alpha$ edges in $H$ from $B$ to $A$ that carry a flow $i$ under $h$, and for each such edge $e$, we have $\ell(e)=1$. 
Thus, $\ell(B,A)=\alpha$. Further, $d(B)=\lambda (\mathbbm{1}_{t\in B} - \mathbbm{1}_{s\in B})$. 
This establishes that $d(B)+\ell(B,A)~\leq~ \mu(A,B)$, for every cut $(A,B)$ in $\widetilde G$. Therefore, a circulation exists in $\widetilde G$. 
\end{proof}

\begin{lemma}
We can compute in polynomial time a family $\A=\{f_1,\ldots,f_{\lambda+1}\}$ of $(\lambda+1)$ integral max-flows in $G$ satisfying $f=f_1+\cdots+f_{\lambda+1}.$
\label{lemma:f-decomposition}
\end{lemma}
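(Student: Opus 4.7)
The plan is to construct the family $\A$ iteratively, peeling off one integral $(s,t)$-max-flow of $G$ from $f$ per round by repeated invocation of \Cref{lemma:flow-existence}. The key invariant I will maintain is that after $j$ rounds the residual $h^{(j+1)} := f - (f_1 + \cdots + f_j)$ is an integral $(s,t)$-flow in $H$ of value $\lambda(\lambda+1-j)$ satisfying $h^{(j+1)}(e) \leq \lambda+1-j$ on every edge $e \in E$. Setting $h^{(1)} := f$, the base case $j = 0$ is supplied by \Cref{claim:H-max-flow-value} together with the fact that every edge of $H$ has capacity at most $\lambda+1$.

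For the inductive step I would put $i := \lambda+1-j$, so that $h^{(j+1)}$ has value $\lambda i$ with $h^{(j+1)}(e) \leq i$ everywhere, and then apply \Cref{lemma:flow-existence} to extract a $0/1$-valued $(s,t)$-max-flow $f_{j+1}$ of $G$ with $f_{j+1}(e) = 0$ whenever $h^{(j+1)}(e) = 0$ and $f_{j+1}(e) = 1$ whenever $h^{(j+1)}(e) = i$. I then define $h^{(j+2)} := h^{(j+1)} - f_{j+1}$, and three checks revive the invariant: (i) non-negativity of $h^{(j+2)}$, because $f_{j+1}(e) = 0$ on edges where $h^{(j+1)}(e) = 0$ and $f_{j+1}(e) \leq 1 \leq h^{(j+1)}(e)$ on the rest; (ii) flow conservation at every internal vertex is preserved under taking differences of $(s,t)$-flows, and the net outflow at $s$ becomes $\lambda i - \lambda = \lambda(\lambda-j)$; (iii) the per-edge ceiling drops by exactly one, since $h^{(j+1)}(e) = i$ forces $f_{j+1}(e) = 1$ and hence $h^{(j+2)}(e) = i-1$, while edges with $h^{(j+1)}(e) < i$ trivially satisfy $h^{(j+2)}(e) \leq i-1$. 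Running this for $\lambda+1$ rounds leaves a residual of value $0$ with per-edge bound $0$, which must be identically zero, yielding the claimed decomposition $f = f_1 + \cdots + f_{\lambda+1}$.

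Polynomial runtime follows immediately: each round is a single polynomial-time circulation feasibility call inside \Cref{lemma:flow-existence}, followed by one subtraction pass over the edges. The main obstacle I expect lies in verifying point (iii), namely that the per-edge ceiling tightens by exactly one in every round, because this is precisely what licenses the next invocation of \Cref{lemma:flow-existence} with the smaller parameter $i-1$. This tightening hinges on the forcing clause ``$f_{j+1}(e) = 1$ whenever $h^{(j+1)}(e) = i$'' built into \Cref{lemma:flow-existence}; without that saturation guarantee, an edge sitting at the current maximum could evade being decremented, and the induction would collapse. All other properties --- that each $f_{j+1}$ is a valid integral $(s,t)$-max-flow of $G$, and that $h^{(j+2)}$ remains a flow with the stated value --- are handed to us directly by the preceding lemma.
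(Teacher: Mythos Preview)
Your proposal is correct and is essentially the same argument as the paper's: both iteratively apply \Cref{lemma:flow-existence} to the residual flow, using the forcing clause $f_i(e)=1$ when $h(e)=i$ to guarantee the per-edge ceiling drops by one each round, with the only cosmetic difference being that the paper indexes the extracted flows in the reverse order $f_{\lambda+1},f_\lambda,\ldots,f_1$. Your write-up is in fact slightly more explicit than the paper's about verifying non-negativity and flow conservation of the residual.
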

\begin{proof}
We begin with the $(s,t)$-max-flow $f$ in $H$, which satisfies the conditions of \Cref{lemma:flow-existence} for $i = \lambda + 1$. So, $f(e) \leq \lambda + 1$ for all edges $e \in E$, and value of $f$ is $\lambda(\lambda + 1)$. On applying \Cref{lemma:flow-existence} to flow $h_{\lambda+1} := f$ in $H$, we obtain the flow $f_{\lambda+1}$.

This process is then repeated for each $i$ from $\lambda$ down to 1. For each $i$, we compute $f_i$ by considering the flow $h_i := f - (f_{i+1} + \cdots + f_{\lambda+1})$ in the graph $H$. It is important to note that $0 \leq h_i(e) \leq i$ for all edges $e \in E$. This is due to the fact that, in any previous invocation of \Cref{lemma:flow-existence}, say during round $j$ (where $j > i$), if $h_j(e) = j$, then $f_j(e) = 1$; similarly, if $h_j(e) = 0$, then $f_j(e) = 0$. Given that value of $h_i$ is $\lambda i$, we can compute $f_i$ by applying \Cref{lemma:flow-existence} to the flow $h_i$ in $H$.
This ultimately results in a sequence of $\lambda + 1$ integral max-flows, $f_1, f_2, \dots, f_{\lambda + 1}$, in $G$  which satisfy the equation
$f = f_1 + f_2 + \cdots + f_{\lambda + 1}.$
\end{proof}

\vspace{.5mm}

Consider the family $\mathcal{A} = \{ f_1, \dots, f_{\lambda+1} \}$ obtained from \Cref{lemma:f-decomposition}. We now show that for any non-critical edge $e$, there exists a flow $f_i \in \mathcal{A}$ such that $f_i(e) = 0$. 
Consider a non-critical edge $e$ in $G$. Recall that, by the construction of $H$, we have $c(e) = \lambda$. Since $f(e) \leq c(e)$, it follows that
$$
f_1(e) + \cdots + f_{\lambda+1}(e) \leq \lambda.
$$
Thus, there must exist some flow $f_i \in \mathcal{A}$ such that $f_i(e) = 0$. 
This concludes the proof of \Cref{theorem:FT-flow-1}.

\paragraph{Construction of family $\B$}

We next extend our construction to obtain the following result.

\begin{theorem}
For any graph $G$ with $(s,t)$-max-flow~$\lambda$, we can compute in polynomial time a family $\B$ 
of $2\lambda + 1$ $(s,t)$-flows in $G$ satisfying the following:
\begin{center}
For each edge $e$, the family $\B$ contains an $(s,t)$-max-flow for $G - e$.    
\end{center}
\label{theorem:FT-flow-2}
\end{theorem}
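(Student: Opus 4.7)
The plan is to augment the family $\A = \{f_1, \ldots, f_{\lambda+1}\}$ from \Cref{theorem:FT-flow-1} with $\lambda$ additional flows that are tailored to handle the failure of \emph{critical} edges. Recall that $\A$ already handles every non-critical edge $e$: if $e$ is non-critical then $\maxflow(s,t,G-e)=\lambda$, and by \Cref{theorem:FT-flow-1} some $f_i \in \A$ satisfies $f_i(e)=0$, so $f_i$ remains a valid $(s,t)$-flow in $G-e$ of value $\lambda$, i.e.\ a max-flow there. Thus the only remaining task is to cover critical edges, where the target max-flow value drops to $\lambda-1$.

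To build the extra $\lambda$ flows, fix any integral $(s,t)$-max-flow $f^\ast$ in $G$. Since edge capacities are unit, $f^\ast$ is a 0/1 flow and admits a decomposition into $\lambda$ edge-disjoint $(s,t)$-paths $P_1,\ldots,P_\lambda$ (plus possibly some flow-carrying cycles, which can be removed without changing the value). For each $j\in\{1,\ldots,\lambda\}$, define
\[
g_j \;=\; f^\ast \;-\; \mathbbm{1}_{P_j},
\]
which is clearly a valid integral $(s,t)$-flow of value $\lambda-1$ in $G$ (path removal preserves conservation at internal vertices and decreases the net outflow at $s$ by exactly one). I will take
\[
\B \;=\; \A \,\cup\, \{g_1,\ldots,g_\lambda\},
\]
so $|\B|=(\lambda+1)+\lambda=2\lambda+1$ as required.

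The correctness argument for critical edges rests on a one-line observation. If $e$ is critical, then $f^\ast(e)=1$ (every max-flow saturates $e$, per \Cref{definition:critical}), and since $f^\ast$ is a 0/1 flow, the path decomposition assigns $e$ to exactly one path, say $P_{j(e)}$. Then $g_{j(e)}(e)=0$, so $g_{j(e)}$ is a valid flow in $G-e$; its value is $\lambda-1$, which matches $\maxflow(s,t,G-e)=\lambda-1$ (the defining property of a critical edge). Hence $g_{j(e)}\in\B$ is an $(s,t)$-max-flow of $G-e$.

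The proof is essentially routine once \Cref{theorem:FT-flow-1} is in hand; the only mild subtlety is justifying the path decomposition of the integral max-flow $f^\ast$ and noting that each critical edge is assigned to a unique path (a direct consequence of the $\{0,1\}$-valued nature of $f^\ast$). I do not anticipate a serious obstacle here, since standard flow-decomposition arguments suffice, and combining the non-critical coverage from $\A$ with the critical coverage from $\{g_j\}$ yields a family of the claimed size $2\lambda+1$ that meets the required fault-tolerance property for every single edge deletion.
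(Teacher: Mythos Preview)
Your proposal is correct and follows essentially the same approach as the paper: augment $\A$ with $\lambda$ flows obtained by fixing an integral max-flow, path-decomposing it, and cancelling one path at a time to cover critical-edge failures. The only cosmetic difference is that the paper selects the base flow $\tilde f$ from $\A$ itself rather than an arbitrary max-flow~$f^\ast$; this choice is irrelevant for \Cref{theorem:FT-flow-2} but is exploited later to bound how much any two flows in $\B$ differ.
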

\vspace{-4mm}

\begin{proof}
Let $\A = \{ f_1, \ldots, f_{\lambda+1} \}$ be the family of flows obtained from \Cref{theorem:FT-flow-1}. Then, for any non-critical edge $e$, there exists a flow $f_i \in \A$ such that $f_i(e) = 0$.

Next, fix an $(s,t)$-flow $\tilde{f} \in \A$. Let $\P = \{ P_1, \ldots, P_\lambda \}$ be a family of $\lambda$ edge-disjoint $(s,t)$-paths satisfying that for each edge $e \in E$, we have $\tilde{f}(e) = 1$ if and only if $e$ is contained in one of the paths in $\mathcal{P}$. 
Define a collection of $\lambda$ flows, $g_1, \ldots, g_\lambda$, where each flow $g_i$ is obtained by cancelling the flow along path $P_i$ from $\tilde{f}$. For any critical edge $e$, if $P_i \in \mathcal{P}$ is the path containing $e$, then $g_i$ is an $(s,t)$-max-flow for the graph $G - e$. 

Thus, the family 
$\A\cup \{g_1, \ldots, g_\lambda \}$ consists of the required set of $(s,t)$-flows in $G$.
\end{proof}

We remark that the bound of $(2\lambda + 1)$ on the size of the family $\mathcal{B}$ in \Cref{theorem:FT-flow-2} is existentially tight.
Consider any graph $G$ having an $(s,t)$ minimal-cut of size $(\lambda + 1)$, say $C_{\lambda+1}$, that does not contain any critical edges. 
(A~simple construction for such a $G$ is a graph  with three vertices $s$, $x$, $t$, with $\lambda$ parallel edges from $s$ to $x$, and $\lambda + 1$ parallel edges from $x$ to $t$). Additionally, let $C_\lambda$ be an $(s,t)$-minimum cut in $G$.
For each potential failure $e \in C_\lambda$, there exists an $(s,t)$-max-flow in $G - e$ of value $\lambda - 1$, which saturates all edges in $C_\lambda - e$. Similarly, for each failure $e \in C_{\lambda+1}$, there exists an $(s,t)$-max-flow in $G - e$ of value $\lambda$, which saturates all edges in $C_{\lambda+1} - e$. It is easy to observe that these flows must be distinct. 
This provides us a lower bound of $(2\lambda + 1)$ on the size of family $\B$.

\section{Maximum-flow Sensitivity Oracle} 
\label{section:max-flow-sensitivity-oracle}

In this section, we present our sensitivity oracle for reporting the maximum flow in the presence of single and dual edge failures. Let $G = (V, E)$ be an $n$-vertex directed graph with source $s$ and sink~$t$, and let $\lambda$ be the value of the $(s, t)$-max-flow in $G$. 

We begin by transforming $G$ into another graph $\G=(V,\E)$ as follows.
Initialize $\G$ as $G$, and next iteratively remove those edges $e$ from $\G$ that do not lie in any minimal-cut of size
$\lambda$~or~$\lambda+1$ in~$\G$. 

\begin{observation}
$\G$ is a minimal subgraph of $G$ satisfying the following:
\begin{enumerate}
\item $\maxflow(s,t,G)=\maxflow(s,t,\G)$, and
\item $\maxflow(s,t,G-e)=\maxflow(s,t,\G-e)$, for each $e\in E(G)$.
\end{enumerate}
\end{observation}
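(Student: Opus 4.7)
I plan to establish properties (1) and (2) by induction along the iterative construction. Let $G_0 = G, G_1, \ldots, G_T = \G$ denote the sequence of graphs, where $G_{i+1} = G_i - e_i$ and $e_i$ is an edge of $G_i$ not contained in any minimal $(s,t)$-cut of size $\lambda$ or $\lambda + 1$ in $G_i$. The invariant I maintain is that $\maxflow(s,t,G_i) = \lambda$ and $\maxflow(s,t,G_i - e') = \maxflow(s,t,G - e')$ for every $e' \in E(G)$. The base case is immediate. For the inductive step, any cut of size less than $\lambda$ in $G_{i+1}$ would, together with $\{e_i\}$, produce an $(s,t)$-cut of $G_i$ of size at most $\lambda$, hence a min-cut containing $e_i$, contradicting the removal rule; combined with monotonicity this yields $\maxflow(s,t,G_{i+1}) = \lambda$. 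This also settles (2) in the sub-cases $e' = e_i$ and $e' \notin E(G_i)$, where the statement reduces to the preserved flow value via the inductive hypothesis.

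The substantive case is $e' \in E(G_i) \setminus \{e_i\}$, where I must show $\maxflow(s,t, G_i - \{e_i, e'\}) = \maxflow(s,t, G_i - e')$. I split by the criticality of $e'$ in $G_i$. If $e'$ is critical, then $\maxflow(s,t,G_i - e') = \lambda - 1$; the upper bound is monotonicity, and for the lower bound, any cut of size at most $\lambda - 2$ in $G_i - \{e_i, e'\}$ combined with $\{e_i, e'\}$ would form a min-cut of $G_i$ containing $e_i$, contradiction. If $e'$ is non-critical, then $\maxflow(s,t,G_i - e') = \lambda$; suppose for contradiction that $\mincut(s,t,G_i - \{e_i, e'\}) \leq \lambda - 1$ and let $C'$ be a minimal such cut. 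Then $|C'| = \lambda - 1$ (else $C' \cup \{e_i, e'\}$ would be a min-cut of $G_i$ containing $e_i$), and I would verify that $C' \cup \{e_i, e'\}$ is itself a minimal $(s,t)$-cut of size $\lambda + 1$ in $G_i$: removing $e_i$ from it yields a set of size $\lambda$ that, if a cut of $G_i$, would be a min-cut containing $e'$ (forcing $e'$ critical, contradiction); removing $e'$ likewise puts $e_i$ in a min-cut; and removing any $e'' \in C'$ fails because $C' \setminus \{e''\}$ is not a cut in $G_i - \{e_i, e'\}$ by the minimality of $C'$. This contradicts the assumption that $e_i$ lies outside all minimal $(\lambda + 1)$-cuts of $G_i$.

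The main obstacle is the non-critical sub-case, where one must carefully lift a putative counterexample cut $C'$ in $G_i - \{e_i, e'\}$ back to a minimal $(\lambda+1)$-cut of $G_i$, and the verification of minimality hinges crucially on the non-criticality of $e'$ together with the exclusion of $e_i$ from min-cuts. This is precisely why the removal rule must protect minimal cuts of size $\lambda + 1$ in addition to min-cuts of size $\lambda$. The minimality of $\G$ as a subgraph satisfying (1) and (2) follows from the termination condition: every edge of $\G$ lies in some minimal cut of size $\lambda$ or $\lambda + 1$ in $\G$. A parallel argument, treating $\G$ in the role of $G_i$, shows that any single additional edge deletion from $\G$ would violate either (1) (if the edge lies in a min-cut of $\G$) or (2) (if the edge lies only in a minimal $(\lambda+1)$-cut, via the same lifting reasoning used above).
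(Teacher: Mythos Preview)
Your inductive proof of properties (1) and (2) is correct and supplies the details the paper omits (the statement is left as an unproved Observation there). The case analysis is clean, and the key step---lifting a putative small cut $C'$ in $G_i - \{e_i, e'\}$ to a minimal $(\lambda+1)$-cut $C' \cup \{e_i, e'\}$ of $G_i$ when $e'$ is non-critical---is exactly why the pruning rule must protect minimal $(\lambda+1)$-cuts and not only min-cuts.

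Your argument for edge-minimality, however, does not work. The induction establishes only the implication ``$e_i$ lies in no minimal cut of size $\lambda$ or $\lambda+1$ $\Rightarrow$ removing $e_i$ preserves (1) and (2).'' The converse---that removing an edge which \emph{does} lie in such a minimal cut must violate (1) or (2)---is a different claim, not a ``parallel argument,'' and it is in general false. Take $G$ on vertices $s,a,b,t$ with edges $(s,a),(s,b),(a,t),(b,t),(a,b)$. Here $\lambda=2$, and $(a,b)$ is non-critical but lies in the minimal $(\lambda+1)$-cut $\{(s,b),(a,t),(a,b)\}$, so the pruning process terminates with $\G=G$. Yet $G-(a,b)$ still satisfies (1) and (2): a direct check shows $\maxflow(s,t,G-e')=\maxflow(s,t,G-\{(a,b),e'\})$ for every edge $e'$. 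Hence $\G$ is not an edge-minimal subgraph with properties (1) and (2). What the paper actually uses downstream (e.g., in Lemma~\ref{lemma:null-set-f-edge-set}) is not minimality but the termination condition itself: every edge of $\G$ lies in some minimal cut of size $\lambda$ or $\lambda+1$. Keep your proof of (1) and (2), and replace the minimality paragraph with that weaker, immediate observation.
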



Throughout this section, we use $\A$ and $\B$ to denote the families of flows obtained by applying \Cref{theorem:FT-flow-1} and \Cref{theorem:FT-flow-2} to the graph $\G$. 

\smallskip

So, for any flow $f\in \B$, $\Null_{\G}(f)$ represents the collection of those edges in $\G$ that carry zero flow under $f$. The lemma below bounds the size of set $\Null_{\G}(f)$, for any flow $f\in \B$.

\begin{lemma}
For any flow $f\in \B$, the cardinality of the set $\Null_{\G}(f)$ is at most $3n$.
Furthermore, the number of edges in $\G$ is at most $O(\lambda n)$.
\label{lemma:null-set-f-edge-set}
\end{lemma}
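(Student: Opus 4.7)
My strategy is to bound $|\Null_{\G}(f)|$ separately for the two kinds of flows comprising $\B = \A \cup \{g_1,\ldots,g_\lambda\}$, and then derive the edge-count bound by splitting $E(\G)$ into the support and the null set of a single max-flow from $\A$. The workhorse is Lemma~\ref{lemma:null-set-max-flow}, which I will invoke on the graph $\G$ itself; the only real content is to reduce the zero-flow edges of a max-flow to its $(\min+1)$ null set, which crucially uses the minimality property built into $\G$.

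First I would handle $f \in \A$, i.e.\ a max-flow of $\G$ of value $\lambda$. The key observation is that any edge $e$ with $f(e)=0$ must be non-critical in $\G$: if $e$ lay in some $(s,t)$-min-cut $C$, then since $|C|=\lambda$ matches the value of $f$, every edge of $C$ would have to carry a unit of flow, contradicting $f(e)=0$. Now the construction of $\G$ ensures that every non-critical edge of $\G$ belongs to some minimal $(\min+1)$-cut of $\G$, which gives the containment $\Null_{\G}(f)\subseteq \Null_{\G}(f,\min+1)$. Applying Lemma~\ref{lemma:null-set-max-flow} to $\G$ then yields $|\Null_{\G}(f)|\leq 2n$.

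Next I would handle each $g_i \in \B \setminus \A$. Recall that $g_i$ is obtained from the fixed max-flow $\tilde f \in \A$ by cancelling the unit of flow along the path $P_i$, where $P_1,\ldots,P_\lambda$ is the edge-disjoint $(s,t)$-path decomposition of $\tilde f$. Hence the edges with $g_i(e)=0$ are precisely those with $\tilde f(e)=0$ together with the edges of $P_i$, i.e.\ $\Null_{\G}(g_i) = \Null_{\G}(\tilde f) \cup P_i$. The previous paragraph gives $|\Null_{\G}(\tilde f)|\leq 2n$, and since $P_i$ is a simple $(s,t)$-path it uses at most $n-1$ edges, so $|\Null_{\G}(g_i)|\leq 3n$.

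Finally, for the edge-count bound I would fix any $f \in \A$ and decompose its (unit-capacity, value-$\lambda$) support into $\lambda$ edge-disjoint $(s,t)$-paths; since each such path has at most $n-1$ edges, $f$ saturates at most $\lambda(n-1)$ edges of $\G$. Combining with the bound $|\Null_{\G}(f)|\leq 2n$ from the second paragraph gives $|E(\G)| \leq \lambda(n-1) + 2n = O(\lambda n)$. The main conceptual step is the containment $\Null_{\G}(f) \subseteq \Null_{\G}(f,\min+1)$; once that is in place, Lemma~\ref{lemma:null-set-max-flow} and a routine path decomposition do the rest.
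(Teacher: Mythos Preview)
Your argument is essentially the paper's, and the first two parts are clean: for $f\in\A$ you correctly reduce $\Null_{\G}(f)$ to $\Null_{\G}(f,\min+1)$ via the minimality built into $\G$, and for $g_i$ you add the edges of a single path to $\Null_{\G}(\tilde f)$.

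There is one small technical slip in your edge-count step. You write ``decompose the support of $f\in\A$ into $\lambda$ edge-disjoint $(s,t)$-paths, each of length at most $n-1$,'' but the flows in $\A$ are produced by a circulation argument and are not guaranteed to be acyclic; a $0/1$ flow with cycles cannot be decomposed into just $\lambda$ simple paths, so the bound $|\mathrm{support}(f)|\le \lambda(n-1)$ is not justified as stated. The fix is immediate: either replace $f$ by any \emph{acyclic} max-flow of $\G$ (your containment $\Null_{\G}(\cdot)\subseteq\Null_{\G}(\cdot,\min+1)$ holds for every max-flow, so the $2n$ bound still applies), or argue as the paper does---bound the critical edges by $\lambda n$ (they are saturated by every max-flow, in particular an acyclic one) and the non-critical edges by $\bigl|\bigcup_{f\in\A}\Null_{\G}(f,\min+1)\bigr|\le 2(\lambda+1)n$. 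Your route, once patched, actually gives the sharper explicit constant $\lambda(n-1)+2n$.
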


\begin{proof}
Consider any max-flow $f \in \A$. By \Cref{lemma:null-set-max-flow}, the size of the set $\Null_\G(f, \min+1)$ is at most $2n$. Observe that each edge in $\G$ is contained in either an $(s,t)$-min-cut or an $(s,t)$-$(\min+1)$-cut in $\G$. Thus, $\Null_\G(f) =\Null_\G(f, \min+1)$, as the critical edges in $\G$ cannot carry zero flow. This proves a bound of $2n$ on the cardinality of the set $\Null_\G(f)$.

Now, let $\tilde{f}$ be the representative flow in $\A$ used to compute the flows $g_1, \ldots, g_\lambda \in \B \setminus \A$. Then each flow in $\B \setminus \A$ is obtained by cancelling one-unit flow along an $(s,t)$-path in $\G$ from~$\tilde{f}$. 
%
%
For each $g_i$, it follows that $\Null_\G(g_i)$ differs from $\Null_\G(\tilde{f})$ in at most $n$ edges, since cancellation of flow along a single path changes the number of edges in the $\Null$ set by at most $n$. Given that $|\Null_\G(\tilde{f})|$ is bounded by $2n$, we conclude that $|\Null_\G(g_i)|$ is bounded by $3n$, for each $i \in [1, \lambda]$.

We now prove the second part. The number of critical edges in $\G$ is at most $\lambda n$, as the value of the $(s,t)$-max-flow in $\G$ is $\lambda$. Since each non-critical edge in $\G$ is contained in an $(s,t)$-$(\min+1)$-cut, the non-critical edges are contained in the union $\bigcup_{f \in \A} \Null_\G(f,\min+1)$. Hence, the total number of edges in $\G$ is bounded by $O(\lambda n)$.
\end{proof}

As a corollary of \Cref{lemma:null-set-f-edge-set}, note that the following extension of \Cref{theorem:FT-flow-2} is immediate.

\begin{theorem}
For any graph $G$ with $(s,t)$-max-flow~$\lambda$, we can compute in polynomial time a family $\B$ 
of $2\lambda + 1$ $(s,t)$-flows in $G$ satisfying the following:
\begin{enumerate}
\item For each edge $e$, the family $\B$ contains an $(s,t)$-max-flow for $G - e$.    
\item For any two flows $f,f'\in \B$, the flows $f$ and $f'$ disagree on at most $O(n)$ edges.
\end{enumerate}
\label{theorem:FT-flow-3}
\end{theorem}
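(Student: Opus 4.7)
The plan is to take $\B$ to be the family produced by Theorem~\ref{theorem:FT-flow-2} applied to the minimal subgraph $\G$, and then to view each $f\in\B$ as an $(s,t)$-flow in $G$ by extending with $f(e)=0$ on every edge of $E(G)\setminus E(\G)$. Property~1 would then follow by combining Theorem~\ref{theorem:FT-flow-2} with the defining property $\maxflow(s,t,G-e)=\maxflow(s,t,\G-e)$ of $\G$: an $(s,t)$-max-flow of $\G-e$ is simultaneously an $(s,t)$-max-flow of $G-e$ when $e\in E(\G)$, and when $e\notin E(\G)$ any flow of $\B$ that is a max-flow of $\G$ (equivalently of $G$) already certifies the max-flow value of $G-e$.

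For Property~2 I would exploit that every $f\in\B$ is $\{0,1\}$-valued and is supported inside $\G$. So any two $f,f'\in\B$ automatically agree (both equal to $0$) on every edge outside $\G$, and for $e\in E(\G)$ we have $f(e)\neq f'(e)$ precisely when one of the two values is $0$ and the other is $1$. This gives the clean identity
$$\{e\in E(G)\;:\;f(e)\neq f'(e)\}~=~\Null_{\G}(f)\ \triangle\ \Null_{\G}(f'),$$
whose size is at most $|\Null_{\G}(f)|+|\Null_{\G}(f')|\leq 6n=O(n)$ by Lemma~\ref{lemma:null-set-f-edge-set}.

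The point worth emphasising, rather than a genuine obstacle, is that the argument only goes through because we moved to the sparsified graph $\G$: in $G$ itself there can be arbitrarily many edges lying in neither a min-cut nor a $(\min+1)$-cut, and $\Null_G(f)$ gives no useful bound on them, but such edges never participate in a max-flow of $G-e$ anyway and so are discarded when passing to $\G$. Lemma~\ref{lemma:null-set-f-edge-set} already does the heavy lifting here — bounding $|\Null_{\G}(f)|\leq 2n$ for $f\in\A$ via the $(\min+1)$-cut structure, and then $|\Null_{\G}(g_i)|\leq 3n$ for the $\lambda$ path-cancellation flows because a single augmenting-path cancellation can change the null set by at most $n$ edges — so the corollary really is immediate once those bounds are in hand.
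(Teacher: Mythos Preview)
Your proposal is correct and matches the paper's approach exactly: the paper states the theorem as an immediate corollary of Lemma~\ref{lemma:null-set-f-edge-set} (applied to the family $\B$ constructed in $\G$), and your argument spells out precisely the two steps that make this immediate, namely the defining property of $\G$ for Property~1 and the symmetric-difference identity $\{e:f(e)\neq f'(e)\}=\Null_{\G}(f)\triangle\Null_{\G}(f')$ together with the $3n$ bound for Property~2.
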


\subsection{Reporting Maximum Flow under Single Edge Failure}
\label{subsection:max-flow-edge-failure-ds}

We now present a sensitivity oracle for answering $(s,t)$-max-flow queries in the presence of a single edge failure. Our oracle stores the following:

\begin{itemize}
\item A dictionary of the edges in $E(\G)$. Additionally, for each flow $f \in \B$, the oracle stores a dictionary of the set $\Null_{\G}(f)$.
\item Label of a canonical $(s,t)$-maximum flow of the graph $\G$, denoted $\tilde{f}$, where $\tilde{f} \in \A$.
\item For each edge $e \in E(\G)$, the label of a canonical flow $f_e \in \B$ such that $f_e$ is an $(s,t)$-max-flow in $\G - e$.
\end{itemize}

Due to \Cref{lemma:null-set-f-edge-set}, the cardinality of the set $\Null_{\G}(f)$ is at most $3n$, for any $f\in \B$; and the number of edges in $\G$ is at most $O(\lambda n)$. Therefore, the size of the data structure is $O(\lambda n)$.

Before presenting the query algorithm, we state the following observation.

\begin{observation}
For any edge $e\in E(G)$ and any flow $f\in \B$, the flow through $e$ under $f$ can be determined in $O(1)$ time.
\label{observation:flow-query}
\end{observation}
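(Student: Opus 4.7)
The plan is to reduce the query to two constant-time dictionary look-ups, exploiting the fact that every flow in $\B$ is an integral flow in the unit-capacity graph $\G$ and is therefore $\{0,1\}$-valued on $E(\G)$, together with the fact that every flow in $\B$ is implicitly stored by its $\Null$-set.

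First I would dispose of edges of $G$ that do not survive the pruning to $\G$. By construction, $\G$ is obtained from $G$ by iteratively removing edges that do not lie in any minimal $(s,t)$-cut of size $\lambda$ or $\lambda+1$. Every flow in $\A$ is an $(s,t)$-max-flow of $\G$, and every flow in $\B\setminus \A$ is obtained from a flow in $\A$ by cancelling one unit of flow along an $(s,t)$-path in $\G$. In particular, each $f\in \B$ is supported entirely inside $E(\G)$. Thus, given the query edge $e\in E(G)$, the first step is to test $e\in E(\G)$ using the edge dictionary stored by the oracle; if the test fails, we immediately return $f(e)=0$.

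Assume now that $e\in E(\G)$. Since $\G$ has unit capacities and every $f\in \B$ is an integral $(s,t)$-flow in $\G$, we have $f(e)\in\{0,1\}$. By the very definition of $\Null_{\G}(f)$, $f(e)=0$ iff $e\in \Null_{\G}(f)$, and $f(e)=1$ otherwise. The oracle stores a dictionary of $\Null_{\G}(f)$ for each $f\in \B$, so a single membership query answers the problem. Both the dictionary of $E(\G)$ and the dictionaries $\Null_{\G}(f)$ can be realized using standard hashing (or perfect hashing for the worst-case bound), so each look-up takes $O(1)$ time, yielding the claimed overall $O(1)$ bound.

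There is essentially no technical obstacle here; the observation is really a bookkeeping statement about the data structure of \Cref{subsection:max-flow-edge-failure-ds}. The only point requiring mild care is to argue that on $E(\G)$ the flow is $\{0,1\}$-valued for every $f\in \B$, not just for $f\in \A$; this is handled by the second part of \Cref{lemma:null-set-f-edge-set} and by the fact that path-cancellation preserves integrality, so the flows $g_1,\dots,g_\lambda\in \B\setminus\A$ are also $\{0,1\}$-valued.
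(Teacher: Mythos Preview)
Your proposal is correct and follows essentially the same approach as the paper: reduce the query to two $O(1)$ dictionary look-ups, one in $E(\G)$ and one in $\Null_{\G}(f)$, using that $f(e)=1$ iff $e\in E(\G)\setminus\Null_{\G}(f)$. You add a bit more justification (that flows in $\B$ are $\{0,1\}$-valued on $E(\G)$ and supported there) than the paper spells out, but the argument is the same.
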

\begin{proof}
Consider a flow $f\in \B$.
Observe that $f(e)=1$ if and only if $e$ lies in $\G$ but is not contained in $\Null_{\G}(f)$. Since we store dictionaries for both $E(\G)$ and $\Null_{\G}(f)$, querying whether $e$ lies in $E(\G)\setminus \Null_{\G}(f)$ requires just $O(1)$ time.
\end{proof}

\begin{figure}[!htt]
\centering
\begin{minipage}{\textwidth}
\centering
\begin{minipage}{0.49\textwidth}
\begin{algorithm}[H]
\caption{\textsc{Query-Edge-Flow}$(e,x)\hspace{-6mm}$}
\setstretch{1.1}
\tcp{Query flow through edge $x$ after failure of $e$}
\vspace{2mm}
\uIf{$\tilde{f}(e)=0$}
    {Return $\tilde{f}(x)$}
\Else{
    Let $f_e$ be the canonical flow in $\B$ satisfying $f_e$ is max-flow in $\G-e$\; 
    Return $f_e(x)$\;
}
\label{alg:ReportFlow}
\end{algorithm}
\end{minipage}
\hfill
\begin{minipage}{0.49\textwidth}
\begin{algorithm}[H]
\caption{\textsc{Report-Flow-Diff}$(e)\hspace{-6mm}$}
\setstretch{1.1}
\tcp{Report edges whose flow is altered after failure of $e$}
\vspace{2mm}
\uIf{$\tilde{f}(e)=0$}
    {Return $\emptyset$}
\Else{
    Let $f_e$ be the canonical flow in $\B$ satisfying $f_e$ is max-flow in $\G-e$\; 
    Return $\Null_{\G}(\tilde{f}) \oplus \Null_{\G}(f_e)$\;
}
\label{algo:ReportDiff}
\end{algorithm}
\end{minipage}
\end{minipage}
\end{figure}

Our max-flow sensitivity oracle supports the following two natural operations upon failure of an edge in the graph.

\paragraph{Query flow through a given edge $x$}~\\
For a failing edge $e$, if $\tilde{f}(e) = 0$ then there is no change in the flow, so we simply return $\tilde f(x)$. Otherwise, we use the precomputed flow $f_e$ (a max-flow in $\G-e$) to return $f_e(x)$. Both these steps require only $O(1)$ time due to \Cref{observation:flow-query}. See \Cref{alg:ReportFlow}.

\paragraph{Report the set of edges whose flow value changes}~\\
Consider a failing edge~$e$ in $G$. If $\tilde f(e)=0$, then there is no change in the flow, so we simply return the empty-set.
If not, we retrieve the label of flow $f_e \in \B$ that is an $(s, t)$-max-flow in $\G - e$. Next, to identify the edges over which $\tilde f$ and $f_e$ differ, we simply traverse the sets $\Null_{\G}(\tilde f)$ and $\Null_{\G}(f_e)$, and output their symmetric difference. This operation takes $O(n)$ time, as by \Cref{lemma:null-set-f-edge-set}, the size of both the sets involved is at most $3n$. 
See \Cref{algo:ReportDiff}.

\bigskip

\noindent
This completes our discussion on the max-flow sensitivity oracle for single failure.

\subsection{Reporting Maximum Flow under Dual Edge Failures}

We now address the problem of designing a sensitivity oracle for reporting the $(s,t)$-max-flow after the failure of any two edges. The foundation of our approach is the fault-tolerant flow family~$\B$ and 1-FT-SCC certificate of \cite{GeorgiadisIP20}.  

Before presenting the query procedure, we state some structural observations, which are essential for efficiently reporting flows.

\begin{observation}
\label{obs:1-FT-flow}
Suppose $H$ is a directed graph, $f$ is an $(s,t)$-max-flow in $H$, and $e$ is an edge with $f(e)=1$. Then the flow through $e$ can be rerouted along an alternate path if and only if there exists a cycle in the residual graph $H_f$ that contains $e^{\rev}$.
\end{observation}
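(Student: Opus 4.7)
The plan is to prove both directions using the standard correspondence between flow changes and paths/cycles in the residual graph.

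For the $(\Leftarrow)$ direction, assume $\mathcal{C}$ is a cycle in $H_f$ containing $e^{\rev}$. I would push one unit of flow around $\mathcal{C}$: for each forward edge $e'$ of $\mathcal{C}$ that corresponds to an edge of $H$, increase $f(e')$ by $1$; for each residual edge $(e')^{\rev}$ on $\mathcal{C}$, decrease $f(e')$ by $1$. Since $\mathcal{C}$ is a cycle, flow conservation is preserved at every vertex, and since $\mathcal{C}$ lies in $H_f$, capacity constraints are preserved. The resulting function $f'$ is an $(s,t)$-flow of the same value as $f$, and because $e^{\rev} \in \mathcal{C}$, we obtain $f'(e)=0$. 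Thus the unit of flow originally on $e$ has been rerouted along $\mathcal{C} \setminus \{e^{\rev}\}$, which forms an alternate path between the endpoints of $e$ in the original graph.

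For the $(\Rightarrow)$ direction, assume the flow through $e$ can be rerouted; concretely, there is an $(s,t)$-flow $f'$ in $H$ with the same value as $f$ and $f'(e)=0$. Define the difference $g := f' - f$. Since $f'$ and $f$ have equal values, $g$ is a circulation (net flow zero at every vertex, including $s$ and $t$). I would then verify that the support of $g$, with signs interpreted appropriately, lies in $H_f$: if $g(e') > 0$ then $f(e') < 1$ so the forward edge $e'$ belongs to $H_f$, and if $g(e') < 0$ then $f(e') > 0$ so the reverse edge $(e')^{\rev}$ belongs to $H_f$. Applying the standard cycle decomposition of an integral circulation, $g$ can be written as a sum of directed cycles in $H_f$. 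Because $g(e) = -1$, at least one cycle in this decomposition must traverse $e$ in the reverse direction, i.e.\ contain $e^{\rev}$, which is the required cycle.

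I expect the only subtle point is the cycle-decomposition step: one must ensure that the circulation $g$ is integral (immediate since both $f$ and $f'$ are integral $\{0,1\}$-flows) and that the support-to-residual-graph mapping described above is consistent. Once these are checked, the rest is essentially a direct application of flow augmentation/cancellation and the classical circulation decomposition theorem, so no further obstacles arise.
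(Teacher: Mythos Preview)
The paper states this observation without proof; it is treated as a standard residual-graph fact. Your argument is a correct and complete proof of both directions via the usual augmenting-cycle/circulation-decomposition reasoning, so there is nothing to compare against and nothing missing.
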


\begin{observation}
For any graph $H$ with $(s,t)$-max-flow $f$ and any edge $e$ satisfying $f(e) = 0$, the residual graph $(H-e)_f$ is same as $H_f - e$.
\label{obs:H-e-residual}
\end{observation}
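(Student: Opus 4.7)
\medskip

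\noindent\textbf{Proof Proposal for Observation 5.5.}
The plan is to unfold the definition of the residual graph and check that the two constructions produce the same edge set, relying crucially on the hypothesis $f(e)=0$.

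First, I would recall that since $G$ (and hence $H$) is unweighted with unit edge capacities, for every edge $a\in E(H)$ the value $f(a)\in\{0,1\}$, and the residual graph $H_f$ is obtained by the following rule applied to each $a=(u,v)\in E(H)$: if $f(a)=0$, then $H_f$ contains the forward edge $(u,v)$ only; if $f(a)=1$, then $H_f$ contains the reverse edge $(v,u)$ only (the saturated forward edge contributes no residual capacity). Thus each edge of $H$ contributes exactly one edge to $H_f$, and this correspondence is determined purely by the value of $f$ on that edge.

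Next, I would observe that because $f(e)=0$, the flow $f$ remains a feasible $(s,t)$-flow in the subgraph $H-e$ without any modification of values on the remaining edges: conservation at every vertex is preserved since no flow was being carried by $e$. Consequently, $(H-e)_f$ is well-defined, and by the rule above, its edge set is obtained from $E(H)\setminus\{e\}$ by applying the same per-edge rule, using the same values $f(a)$ on each surviving edge $a\neq e$.

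Finally, I would compare $H_f-e$ and $(H-e)_f$ edge by edge. Every edge $a\in E(H)$ with $a\neq e$ contributes the same residual edge to both graphs, since $f(a)$ is unchanged. The only discrepancy can arise from the contribution of $e$ itself to $H_f$: since $f(e)=0$, the rule shows that $e$ contributes exactly the forward edge $e=(u,v)$ (with residual capacity $1$) to $H_f$, and no reverse edge. Removing $e$ from $H_f$ therefore deletes precisely this contribution, yielding the same edge set as $(H-e)_f$. Hence $(H-e)_f = H_f - e$, as claimed.

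I expect no substantive obstacle here; the statement is essentially a definitional check, and the only point worth stating carefully is that the hypothesis $f(e)=0$ is what ensures $e$ contributes no \emph{reverse} residual edge to $H_f$, so that deleting $e$ (rather than also a phantom reverse arc) is the correct operation.
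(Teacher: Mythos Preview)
Your proposal is correct; the paper states this as an observation without proof, treating it as immediate from the definition of the residual graph, and your argument is exactly the definitional unfolding one would expect. The only nuance you spell out---that $f(e)=0$ ensures $e$ contributes no reverse arc to $H_f$, so deleting just the forward copy suffices---is precisely the point that makes the observation true.
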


Given a cycle $C$ in $H_f$, we define $K(C)$ to be the collection of edges in $G$ corresponding to $C$. More specifically, $K(C) = (E(C)\cap E(G)) \cup \{e^{\rev}\mid e\in E(C)\setminus E(G)\}$. Switching flow through a cycle $C$ in the residual graph corresponds to toggling the set of saturated edges along $K(C)$. 

\paragraph{Query Algorithm}
Given a pair of failing edges $e$ and $e'$, the query algorithm proceeds as follows. First, we retrieve a canonical flow $f$ in $\B$ that is a maximum $(s,t)$-flow in $G-e$. If this flow already avoids both failures, i.e., $f(e)=0$ and $f(e')=0$, then it is also a maximum flow in $G-\{e,e'\}$, and so the change in the edges carrying flow is simply $X = \Null_{\G}(\tilde f) \oplus \Null_{\G}(f)$.

Now, suppose $f(e') = 1$. By \Cref{obs:H-e-residual}, we have $G_f - e$ is the residual graph of $G-e$ with respect to $f$. We next determine if there exists a directed cycle in $G_f-e$ containing $(e')^{\rev}$, if so, we can reroute the flow along this cycle. 

If, however, no such cycle exists, then flow through $e'$ cannot be rerouted and its endpoints must belong to different strongly connected components in $G_f-e$.
In this situation, we must reduce the total flow by one unit, which is achieved by canceling a unit of flow along an $(s,t)$ path that traverses $e'$. To accomplish this, we take the help of graph $G_f + (s,t) -e$  (where a direct edge $(s, t)$ is artificially introduced). 
In this setting, the desired $(s,t)$ path containing $e'$ corresponds to a cycle in $G_f + (s, t) - e$ containing edge $(e')^{\rev}$. 
See~\Cref{alg:ReportMaxFlow-Dual}.

\begin{algorithm}[!ht]
\setstretch{1.35}
\caption{\textsc{Report-Flow-Diff}($e, e'$)}
Let $f$ be a flow in $\B$ maximizing $val(f)$ such that $f(e)=0$\;
Compute $X=\Null_{\G}(\tilde f)\oplus \Null_{\G}(f) $\;
\lIf{$f(e')=0$}{return $X$}
\lElseIf{$\exists$ a cycle $C_1$ in $G_f-e$ containing  $(e')^{rev}$}{
return $X\oplus K(C_1)$}
\lElse{
find a cycle $C_2$ in $G_f+(s,t)-e$ containing  $(e')^{rev}$, and return $X \oplus K(C_2)$
}
\label{alg:ReportMaxFlow-Dual}
\end{algorithm}


\paragraph{Use of SCC Preservers}
To realize the query algorithm efficiently, the only additional structure we need is, for each $f \in \B$, a way to detect cycles in the residual graphs $G_f - e$ and $G_f + (s,t) - e$. This is achieved by storing, for each $f$, the 1-fault-tolerant SCC preserver of Georgiadis et al.~\cite{GeorgiadisIP20}.
which is a subgraph with $O(n)$ edges maintaining all strongly connected components under single-edge failures. With this, checking for the existence of cycles passing through an edges takes $O(n)$ time; and this does not inflate the overall space usage.

\begin{theorem}[Georgiadis et al.~\cite{GeorgiadisIP20}]
For any directed graph $H$, there exists a subgraph $H_0$ on $O(n)$ edges such that for any edge $e \in E(H)$, the strongly connected components of $H-e$ and $H_0-e$ are identical.
\label{theorem:1-ft-scc-preserver}
\end{theorem}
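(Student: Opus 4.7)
The plan is to reduce the problem to a single strongly connected component and then exploit dominator-tree machinery. First, I would observe that if $e$ goes between two distinct SCCs of $H$, then the SCC partition of $H-e$ equals that of $H$, so it suffices for $H_0$ (which need not contain such $e$) to have the same SCC partition as $H$. If $e$ lies inside an SCC $S$, only $S$ can be refined in $H-e$, while the remaining SCCs are unaffected. Consequently, it is enough to construct, for each SCC $S$ of $H$, a subgraph $H_0^S \subseteq S$ on $O(|V(S)|)$ edges that preserves the SCC partition of $S-e$ for every $e\in E(S)$, and set $H_0 = \bigcup_S H_0^S$ plus, if needed, one edge per inter-SCC pair to preserve the global DAG structure.

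For a single strongly connected graph $S$ with a chosen root $r$, I would compute the two dominator trees $D(S,r)$ and $D(S^{\rev},r)$. A well-known characterization (Italiano--Laura--Santaroni) states that the strong bridges of $S$ (edges whose removal destroys strong connectivity) are exactly those appearing as tree edges in at least one of these two dominator trees. Including all such $O(n)$ tree edges in $H_0^S$ captures every edge capable of splitting $S$, and, for a strong bridge $e$, the bipartition of $V(S)$ obtained by looking at the two sides of $e$ in the corresponding dominator tree coincides with the partition produced in $S-e$.

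For non-bridge edges, removing any single such edge leaves $S$ strongly connected as a whole but may refine the partition inside a $2$-edge-connected block $B$. For each such block I would add auxiliary edges so that $B$ remains strongly connected after the failure of any single edge of $B$; two edge-disjoint spanning arborescences from a block root (guaranteed in a $2$-edge-connected digraph by Edmonds' branching theorem) suffice, contributing only $O(|V(B)|)$ edges per block. Summing bridge edges and block certificates yields $|E(H_0^S)| = O(|V(S)|)$, and summing over SCCs gives $O(n)$ total.

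The main obstacle I expect is ensuring that the certificate preserves the \emph{full} SCC partition of $S-e$, not merely reachability from the chosen root $r$. In particular, when a strong bridge $e$ is removed, both the ``tail'' side and the ``head'' side of the induced split must be internally strongly connected in $H_0^S - e$; this requires that the block-level certificates on each side interface cleanly with the dominator-tree edges at the bridge and that no auxiliary edge accidentally short-circuits across a bridge. I would handle this by induction on the strong-bridge decomposition tree of $S$: each recursive piece is itself a smaller strongly connected graph, and the construction is applied inside it, so the sparse certificates glue together along the unique bridge interfaces without interfering with one another.
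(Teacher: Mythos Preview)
The paper does not prove this theorem; it is quoted as a black-box result of Georgiadis, Italiano, and Parotsidis and used only as an ingredient. So there is no ``paper's own proof'' to compare your proposal against.

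On the merits of your sketch: the reduction to individual SCCs is correct, and invoking dominator trees to locate strong bridges is the right instinct and indeed what Georgiadis et al.\ do. However, two steps are off. First, your treatment of non-bridge edges is confused: if $e$ is not a strong bridge of $S$, then by definition $S-e$ is still strongly connected, so its SCC partition is the trivial one $\{V(S)\}$; there is nothing to ``refine inside a block''. What you actually need here is only that $H_0^S - e$ remain strongly connected, and your Edmonds-based certificate (two out- and two in-arborescences from a root) does give that. Second, and more seriously, your handling of strong bridges oversimplifies the picture. Deleting a strong bridge $e$ does not in general yield a \emph{bipartition} of $V(S)$ matching the two sides of $e$ in a dominator tree; $S-e$ can split into many SCCs, and the correct partition is governed by the interaction of the dominator tree with a loop-nesting forest (this is the technical heart of \cite{GeorgiadisIP20}). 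Your proposed ``induction on the strong-bridge decomposition tree'' is too vague to recover this: the pieces on either side of a bridge need not themselves be strongly connected, so you cannot simply recurse on them as smaller instances of the same problem. Getting the exact SCC partition of $S-e$ to match in a sparse subgraph requires the loop-nesting machinery or an equivalent structural argument, which your outline does not supply.
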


The data structure thus consists of (a) dictionaries for $E(\mathcal{G})$ and sets $\Null_\G(f)$ for $f \in \mathcal{B}$, and (b) the 1-FT-SCC preservers for $G_f$ and $G_f+(s,t)$ for all $f$. Since $|\mathcal{B}| = O(\lambda)$, the total space remains $O(\lambda n)$. 
We now discuss the time complexity of  \Cref{alg:ReportMaxFlow-Dual}.
Identifying the flow $f$ just takes $O(1)$ time. Then, finding the cycle in the appropriate SCC preserver further takes $O(n)$ time, as number of edges in any FT SCC-preserver is $O(n)$. Hence, the total time to report all edges where the flows differ is $O(n)$.

\begin{theorem}
There exists an $O(\lambda n)$-space oracle that, for any pair of failing edges $e, e'$, efficiently reports all edges whose flow value is altered in a maximum $(s,t)$-flow after failure of $e$ and $e'$.
\end{theorem}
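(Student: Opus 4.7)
My plan is to verify correctness of Algorithm~\ref{alg:ReportMaxFlow-Dual} by a three-case analysis on the canonical flow $f\in\B$ that is an $(s,t)$-max-flow of $\G-e$ (such a flow is guaranteed by Theorem~\ref{theorem:FT-flow-3}), and then to bound the space and query time using Lemma~\ref{lemma:null-set-f-edge-set} and Theorem~\ref{theorem:1-ft-scc-preserver}. Throughout I will use the identity $\Null_\G(\tilde f)\oplus \Null_\G(f')=X\oplus D$, where $f'$ denotes the $(s,t)$-max-flow the algorithm implicitly produces in $\G-\{e,e'\}$ and $D$ is the set of edges on which $f$ and $f'$ disagree. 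Since identifying the altered edges is equivalent to reporting the symmetric difference with $\tilde f$, correctness reduces to showing $D=\emptyset$, $D=K(C_1)$, and $D=K(C_2)$ in the three branches respectively.

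\textbf{Correctness.} When $f(e')=0$, $f$ is already a valid $(s,t)$-flow in $\G-\{e,e'\}$; since its value equals $\maxflow(s,t,\G-e)\geq \maxflow(s,t,\G-\{e,e'\})$, it is max there, so we take $f'=f$ and $D=\emptyset$. When $f(e')=1$, Observation~\ref{obs:H-e-residual} gives that the residual of $\G-e$ with respect to $f$ is precisely $G_f-e$, and Observation~\ref{obs:1-FT-flow} then implies the unit of flow on $e'$ can be rerouted in $\G-e$ if and only if $G_f-e$ contains a cycle through $(e')^{\rev}$. If such a cycle $C_1$ exists, toggling flow along $K(C_1)$ yields an $(s,t)$-flow in $\G-\{e,e'\}$ of the same value, so $D=K(C_1)$. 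Otherwise the max-flow must drop by exactly one, and canceling a unit of flow along an $(s,t)$-path through $e'$ in $\G-e$ corresponds, after adjoining the artificial edge $(s,t)$ to the residual, to a cycle through $(e')^{\rev}$ in $G_f+(s,t)-e$; the guaranteed $C_2$ therefore gives $D=K(C_2)$.

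\textbf{Space, time, and main obstacle.} Space decomposes into (i)~dictionaries for $E(\G)$ and for each $\Null_\G(f)$, which contribute $O(\lambda n)+|\B|\cdot O(n)=O(\lambda n)$ by Lemma~\ref{lemma:null-set-f-edge-set}, and (ii)~the two $1$-FT-SCC preservers per $f\in\B$, each of size $O(n)$ by Theorem~\ref{theorem:1-ft-scc-preserver}, contributing another $O(\lambda n)$. For the query, retrieving $f$ costs $O(1)$, forming the set $X$ via the $\Null$ dictionaries costs $O(n)$, and detecting a cycle through $(e')^{\rev}$ in the appropriate preserver of $O(n)$ edges costs $O(n)$. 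The main obstacle I anticipate is the justification of the third case: I must argue that the static $1$-FT-SCC preserver of $G_f+(s,t)$ faithfully witnesses a cycle through $(e')^{\rev}$ in the actual deleted graph $(G_f+(s,t))-e$. By Theorem~\ref{theorem:1-ft-scc-preserver} the SCCs of the preserver after removing $e$ coincide with those of $(G_f+(s,t))-e$, so such a cycle exists iff the endpoints of $(e')^{\rev}$ lie in a common SCC of the preserver after deleting $e$, which is verifiable in $O(n)$ time. A secondary subtlety is confirming that the cycle $C_2$ found in $G_f+(s,t)-e$ necessarily traverses $e'$ in the ``forward'' direction in $\G$, so that toggling along $K(C_2)$ is precisely the cancellation of one unit of $(s,t)$-flow, lowering the value by one and yielding a valid max-flow in $\G-\{e,e'\}$.
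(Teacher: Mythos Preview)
Your proposal is correct and follows essentially the same approach as the paper: the same three-branch case analysis of Algorithm~\ref{alg:ReportMaxFlow-Dual} via Observations~\ref{obs:1-FT-flow} and~\ref{obs:H-e-residual}, the same space accounting through Lemma~\ref{lemma:null-set-f-edge-set} and Theorem~\ref{theorem:1-ft-scc-preserver}, and the same $O(n)$ query bound by searching in the $O(n)$-edge preserver. You are in fact slightly more explicit than the paper in justifying why the preserver suffices to \emph{find} (not just detect) the cycles $C_1,C_2$, and in flagging the orientation subtlety for $C_2$; both points are handled correctly.
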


\section{An Overview of Structural Properties of (s,t)-Min-Cuts}
\label{section:strip-graph}
In this section, we discuss some structural properties of $(s,t)$-min-cuts that will be essential for the computation of min-cut sensitivity oracles.

Consider a relation on vertex-set of $G$ under which any two vertices $x$ and $y$ are said to be related if and only if they are not separated by any $(s,t)$-min-cut in $G$. Let $\W$ be the collection of equivalence classes of $V$ induced by this relation, and for any $v\in V$, let $\bm v$ denote the equivalence class of $v$ in~$G$. We say an edge $(x,y)$ in $G$ is {\em inter-cluster} if $\bm x\neq \bm y$, and {\em intra-cluster} otherwise. 

Observe that a critical edge is always inter-cluster; however, in directed graphs the converse is not necessarily true (see~\Cref{fig:G}).

\begin{figure}[htp]
    \centering
    \fbox{
        \begin{minipage}{0.98\textwidth} 
            \centering
            \subfigure[Directed graph $G$ with an $(s,t)$-max-flow $f$. 
            Dashed edges in red are inter-cluster non-critical.]{
                \includegraphics[width=0.43\textwidth, trim=1mm 2mm 3mm 2mm, clip]{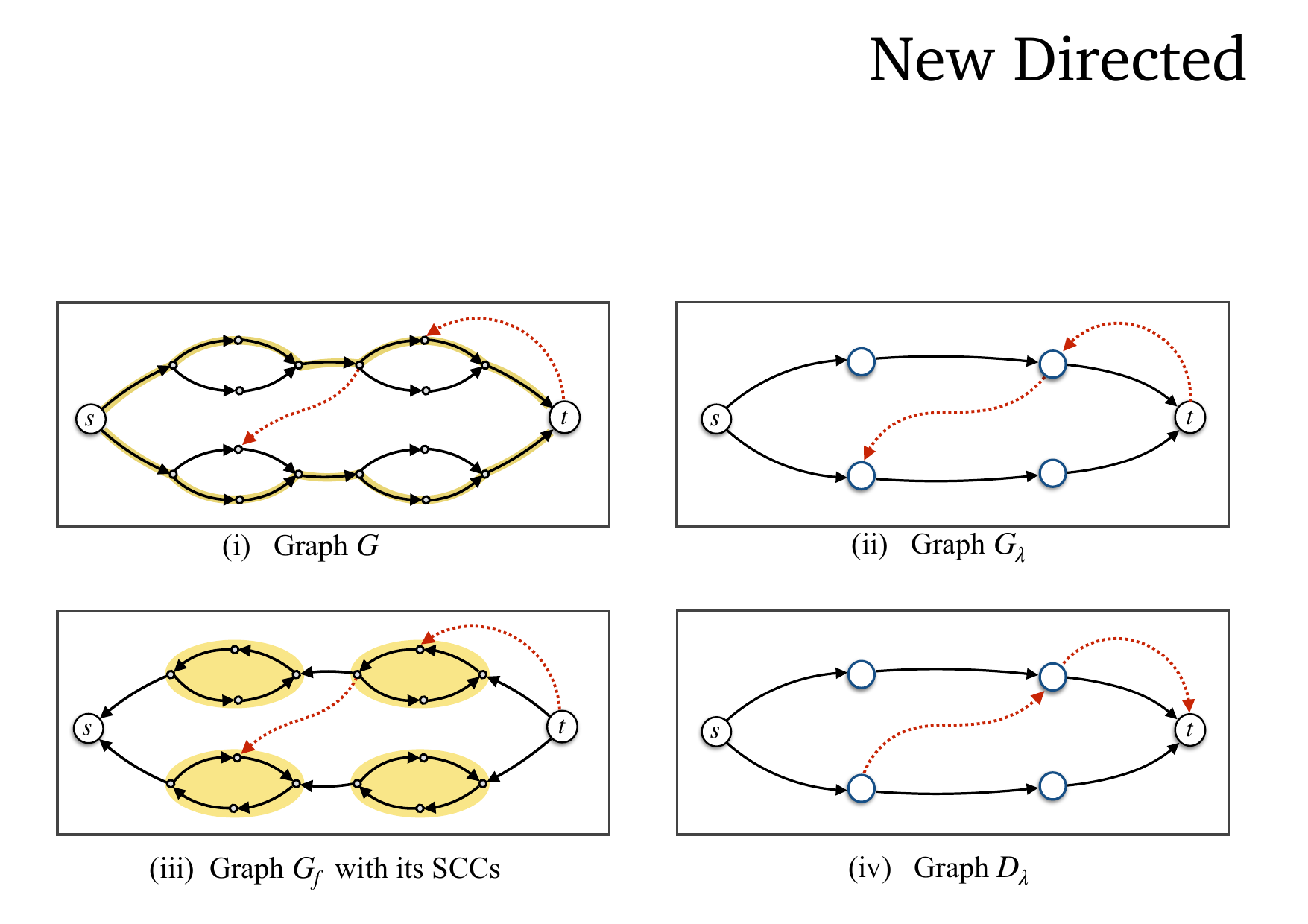}
                \label{fig:G}
            }
            \quad~
            \subfigure[Quotient graph $G_\lambda$ obtained by condensing all the equivalence classes in $\W$ into supernodes.]{
                \includegraphics[width=0.43\textwidth, trim=2mm 2mm 3mm 2mm, clip]{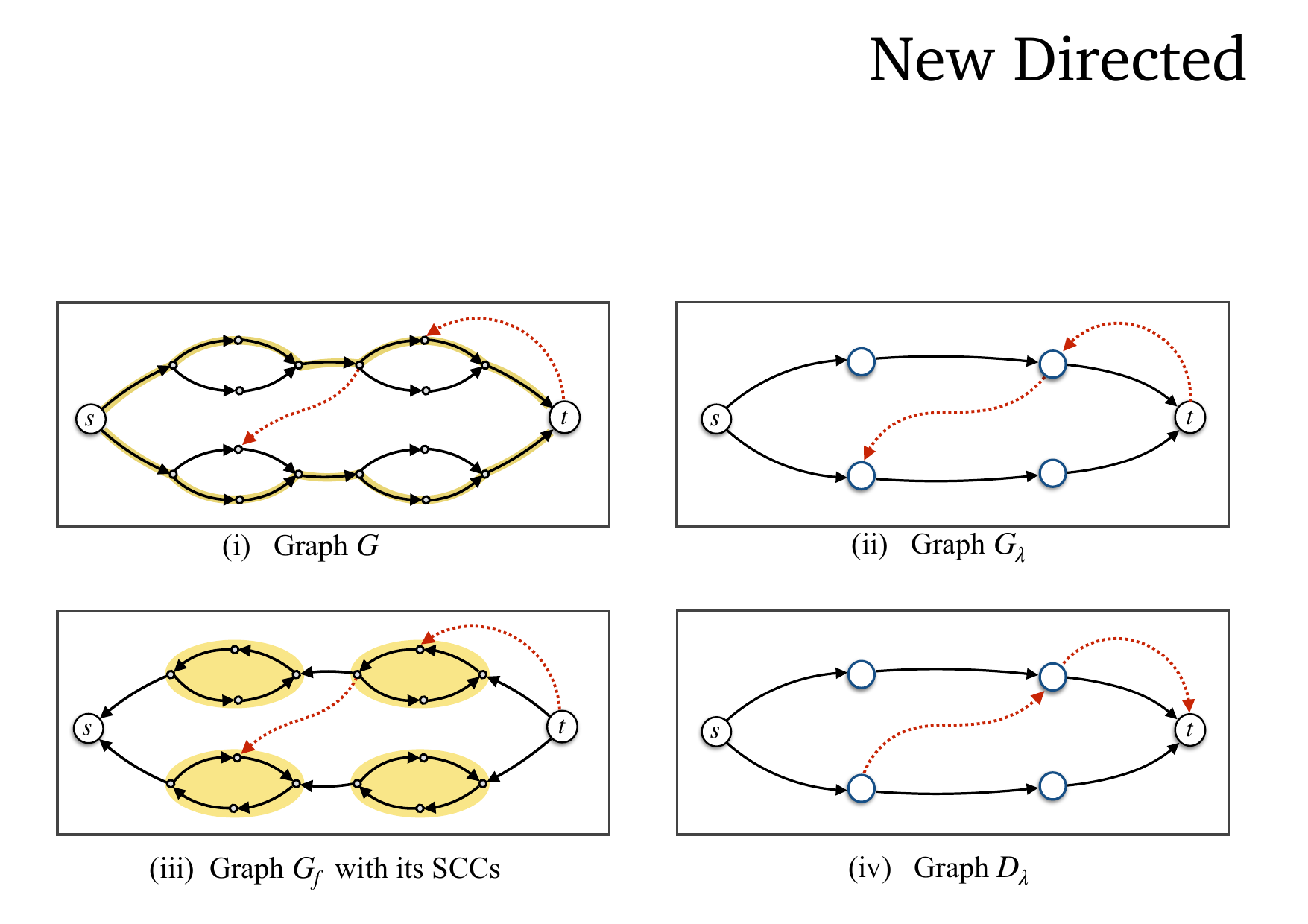}
                \label{fig:G_lambda}
            }
            \\[2mm]
            \subfigure[Residual graph $G_f$. The SCCs of $G_f$ coincide 
            with equivalence classes in $\W$.]{
                \includegraphics[width=0.43\textwidth, trim=1mm 2mm 3mm 2mm, clip]{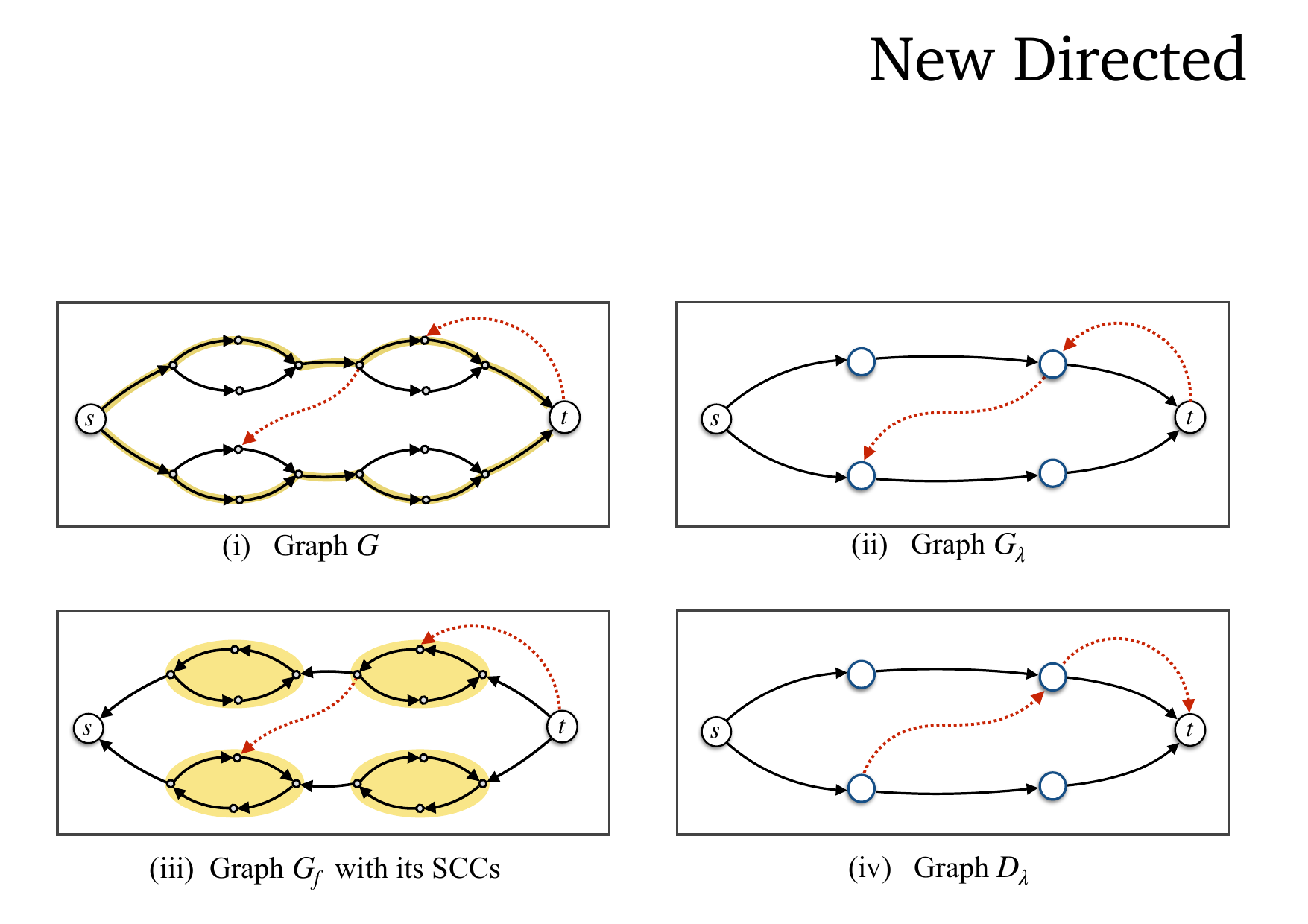}
                \label{fig:G_f}
            }
            \quad~
            \subfigure[Strip graph~$D_\lambda$ obtained by reversing direction 
            of non-critical edges in $G_\lambda$. ]{
                \includegraphics[width=0.43\textwidth, trim=2mm 2mm 3mm 2mm, clip]{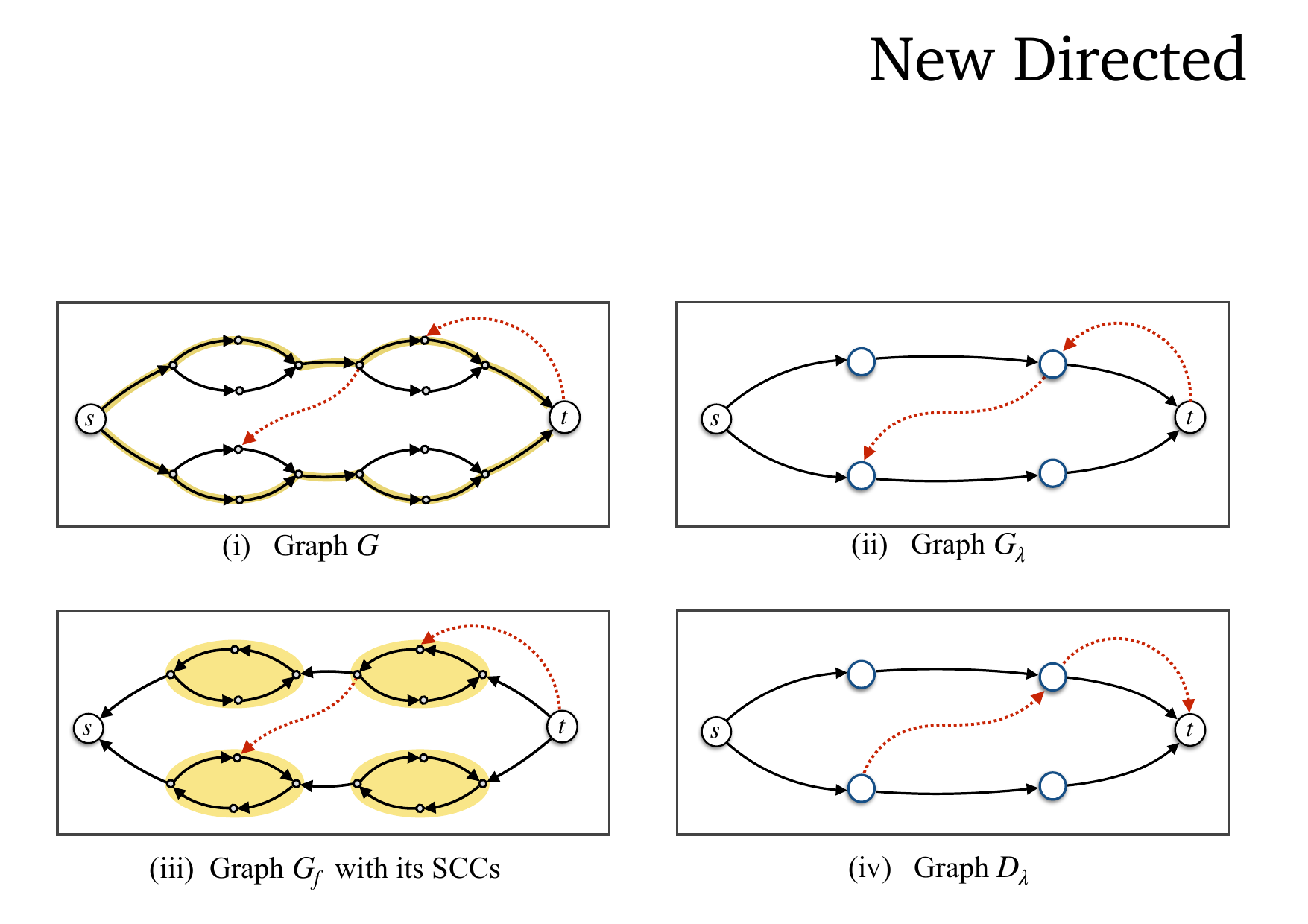}
                \label{fig:D_lambda}
            }
        \end{minipage}
    }
    \caption{Depiction of graphs $G_\lambda$ and $D_\lambda$. 
    Note that the strip graph $D_\lambda$ is precisely the DAG of the SCCs of $G_f$, but with all edge directions reversed.}
    \label{fig:GD_lambda}
\end{figure}

Throughout this paper, we use $G_\lambda$ to denote the quotient graph of $G$ induced by the relation defined above, and $D_\lambda$ to refer to the graph obtained by reversing the direction of all non-critical inter-cluster edges in $G_\lambda$,
where $\lambda$ denotes the value of the $(s,t)$-max-flow in the graph $G$.

Picard and Queyranne~\cite{PicardQ82}, Dinitz and Vainshtein~\cite{DinitzV00} referred to $D_\lambda$ as {\em strip graph} and presented the following properties of~$D_\lambda$ (under \Cref{assumption-1}). 

\begin{property}[\hspace{-0.1mm}\cite{PicardQ82}]
For any $(s,t)$-max-flow $f$ in~$G$, the SCCs of residual graph $G_f$ correspond to the equivalence classes $\W$ of $G$. Moreover, $D_\lambda$ is essentially the DAG obtained by reversing the edges of graph obtained by contracting the SCCs of $G_f$ into supernodes.
\label{property:DLambda-construction}
\end{property}

\Cref{fig:GD_lambda} presents a depiction of the graphs $G_\lambda$ and $D_\lambda$, along with the alternate characterization of  $D_\lambda$	 as described in the above property.

\begin{property}[\hspace{-0.1mm}\cite{PicardQ82,DinitzV00}]
An $(s,t)$-cut $C$ in $G$ is a minimum cut if and only if  $C$~comprises of critical inter-cluster edges, and  the edges of the cut intersect any path in $D_\lambda$ at most once. 
\label{property:transversal}
\end{property}

\section{Dual Fault-Tolerant Min-cut oracle via Fault-Tolerant Flows}
\label{section:2-FT-oracle-using-FT-flows}

We present here a construction of a dual fault-tolerant min-cut oracle that, for any graph $G$ with an $(s,t)$-min-cut of size $\lambda$, uses $O(n\lambda)$ space and can report the size of the min-cut upon the occurrence of failures in constant time.

In the first subsection, we show how to handle failing sets that do not contain any critical edges by employing the fault-tolerant flow family $\A$ computed in \Cref{theorem:FT-flow-1}. In the subsequent subsection, we discuss how to handle failing sets that contain one or more critical edges.

\subsection{Handling failure of non-critical edges}
\label{section:min-cut-no-critical-edges}

Let $\A=\{f_1, \dots f_{\lambda+1}\}$ be collection of $(\lambda+1)$ $(s,t)$-max-flows obtained by applying \Cref{theorem:FT-flow-1} on graph $G$. 

\begin{lemma}
Let $e_1,e_2\in E$ and $f\in \A$ be a max-flow such that $f(e_1)=0$ and $f(e_2)=1$. Then,
$\maxflow(s,t,G-\{e_1,e_2\})=\lambda-1$ if and only if the endpoints of $e_2$ are not strongly connected in $G_{f}-e_1$, where $G_f$ is the residual graph corresponding to flow $f$.
\label{lemma:min-cut-using-SCC-oracle}
\end{lemma}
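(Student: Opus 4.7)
My plan is to reduce the statement to the standard residual-graph characterization of when removing a saturated edge decreases the max-flow value, and then translate that characterization into the language of strong connectivity in $G_f - e_1$.

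First, I would observe that because $f(e_1) = 0$ and $f$ has value $\lambda$, the flow $f$ restricted to $G - e_1$ remains a valid $(s,t)$-flow of value $\lambda$, and since deleting an edge cannot increase the max-flow value, $f$ is in fact an $(s,t)$-max-flow of $G - e_1$. Applying \Cref{obs:H-e-residual} with $H = G$ and $e = e_1$, the residual network of $G - e_1$ with respect to $f$ is exactly $G_f - e_1$. Next, since $f(e_2) = 1$, we can always cancel one unit of flow along an $(s,t)$-path of $f$ passing through $e_2$; the resulting flow is a valid $(s,t)$-flow of value $\lambda - 1$ in $G - \{e_1, e_2\}$, so $\maxflow(s,t,G-\{e_1,e_2\}) \geq \lambda - 1$. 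It therefore suffices to show that $\maxflow(s,t,G-\{e_1,e_2\}) = \lambda$ if and only if the endpoints of $e_2$ are strongly connected in $G_f - e_1$.

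For the equivalence, I would apply \Cref{obs:1-FT-flow} to $H = G - e_1$ with its max-flow $f$ and the edge $e_2$: the flow $f$ admits a rerouting avoiding $e_2$ (equivalently $\maxflow(s,t,G-\{e_1,e_2\}) = \lambda$) precisely when the residual graph $(G-e_1)_f = G_f - e_1$ contains a cycle through $e_2^{\rev}$. Writing $e_2 = (u,v)$, the edge $e_2^{\rev} = (v,u)$ lies in $G_f$ as a residual edge (since $f(e_2)=1$), and it is not removed when deleting $e_1$. Hence a cycle through $e_2^{\rev}$ in $G_f - e_1$ exists if and only if there is a directed path from $u$ to $v$ in $G_f - e_1$; combined with the residual edge $(v,u)$, which provides a trivial $v \to u$ path, this is equivalent to $u$ and $v$ being strongly connected in $G_f - e_1$.

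For the converse direction of \Cref{obs:1-FT-flow} (the part I expect to be the key technical point, although it is a known flow-theoretic fact), the argument proceeds by contradiction: if $\maxflow(s,t,G-\{e_1,e_2\}) = \lambda$ with witness $f'$, then viewing $f$ and $f'$ as flows in $G - e_1$, their difference $f' - f$ (with $f'(e_2)=0$, $f(e_2)=1$) is a circulation carrying one net unit against $e_2$. Interpreting this circulation inside the residual graph $G_f - e_1$ — sending positive coordinates forward and negative coordinates along the reverse arc — yields a nonnegative circulation that uses $e_2^{\rev}$, and any such circulation decomposes into simple cycles one of which must contain $e_2^{\rev}$, giving the required cycle.

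Putting the two directions together with the value bound from the first paragraph, $\maxflow(s,t,G-\{e_1,e_2\}) = \lambda - 1$ precisely when no such cycle exists, which by the second paragraph is precisely when the endpoints of $e_2$ are not strongly connected in $G_f - e_1$. The main delicacy, as noted, is the residual-graph manipulation in the converse direction; everything else is bookkeeping using the already-established observations.
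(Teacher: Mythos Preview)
Your argument is correct, but it follows a somewhat different route from the paper's own proof. The paper argues via the structural theory of min-cuts: since $f$ is a max-flow in $G-e_1$, the max-flow drops upon deleting $e_2$ exactly when $e_2$ is a \emph{critical} edge of $G-e_1$; it then observes that an edge carrying flow $1$ is critical if and only if it is inter-cluster (because non-critical inter-cluster edges always carry zero flow), and finally invokes the Picard--Queyranne correspondence (\Cref{property:DLambda-construction}) that the equivalence classes of $G-e_1$ are precisely the SCCs of $(G-e_1)_f = G_f - e_1$. Your proof instead bypasses the critical-edge and equivalence-class language entirely and goes straight through the residual-graph rerouting criterion of \Cref{obs:1-FT-flow}, translating ``cycle through $e_2^{\rev}$'' into ``endpoints strongly connected'' using the always-present residual arc $(v,u)$. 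Your approach is more self-contained (you even sketch the circulation-decomposition argument underlying \Cref{obs:1-FT-flow}), while the paper's approach has the advantage of tying the lemma directly into the min-cut structural machinery used elsewhere in \Cref{section:strip-graph}.
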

\begin{proof}
Since $\maxflow(s,t,G-e_1)=\lambda$, the size of $(s,t)$-min-cut in $G-\{e_1,e_2\}$ is $\lambda-1$ if and only if $e_2$ is a critical edge in $G-e_1$. 

Observe that critical edges in a graph are those inter-cluster edges that are saturated with respect to every $(s,t)$-max-flow. Likewise, non-critical inter-cluster edges are those inter-cluster edges that carry a zero flow with respect to every $(s,t)$-max-flow. Indeed, if $(A,B)$ is an $(s,t)$-min-cut separating endpoints of a non-critical edge $e$, then $e$ would be directed from set $B$ to set $A$. Since edges directed from sink-side to source-side of an $(s,t)$-min-cut carry a zero flow with respect to every max-flow, we must have $f(e)=0$, for every $(s,t)$-max-flow $f$.
 
Since $f(e_2)=1$, edge $e_2$ is a critical edge in $G-e_1$ if and only if $e_2$ is inter-cluster edge in $G-e_1$, that is, the endpoints of $e_2$ lie in different SCCs in the residual graph $(G-e_1)_{f}=G_f-e_1$. This proves the claim. 
\end{proof}

In order to use the above lemma to answer min-cut queries we need an efficient data structure for strong-connectivity upon edge failures in residual graphs. For this, we use the following result by Georgiadis, Italiano, and Parotsidis~\cite{GeorgiadisIP20}.

\begin{lemma}[Georgiadis et al.~\cite{GeorgiadisIP20}]
\label{lemma:fault-tolerant-scc}
For any $n$ vertex directed graph $G = (V, E)$, there exists an $O(n)$ sized data structure ${\cal O}_{SCC}(G)$ that, given any pair of vertices $x,y \in V$ and an edge $e \in E$, answers in $O(1)$ time whether or not $x$ and $y$ are strongly connected in $G - e$. 
\end{lemma}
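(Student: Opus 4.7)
The plan is to reduce the problem to handling single-edge failures within individual strongly connected components and then build a compact auxiliary structure for each such component. First, I would compute the SCC decomposition of $G$ in linear time. Observe that removing an edge $e=(u,v)$ whose endpoints lie in different SCCs cannot alter the SCC partition at all, so such queries are answered immediately by comparing SCC identifiers of $x$ and $y$. Thus the only nontrivial case is when $u,v,x,y$ all lie in a single SCC $S$; otherwise the answer can be read off from the static SCC labels.

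For each SCC $S$ of size $n_S$, the core structure I would store consists of two dominator trees computed with respect to a fixed root $r \in S$: the dominator tree $T$ of $G[S]$ rooted at $r$, and the dominator tree $T'$ of the reverse graph $G[S]^{\rev}$ rooted at $r$. Both trees can be constructed in $O(n_S)$ time using the linear-time dominator algorithm. The first structural claim is that an intra-SCC edge $e$ is a \emph{strong bridge} of $S$ (its removal strictly refines $S$) if and only if $e$ corresponds, in an appropriate sense, to a tree edge of $T$ or of $T'$ that lies on every path from $r$ to some descendant; if $e$ is not a strong bridge, the query returns YES trivially. The second, more delicate claim is that when $e$ \emph{is} a strong bridge, the SCCs of $S-e$ are determined by a constant number of ancestor/LCA comparisons in $T$ and $T'$: two vertices $x,y$ remain strongly connected in $S-e$ iff certain ``critical ancestors'' of $x$ and $y$ (with respect to $e$) coincide in both trees.

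Given this characterization, the oracle is assembled by augmenting $T$ and $T'$ with linear-space, constant-time LCA and level-ancestor structures, and by precomputing, for each intra-SCC edge $e$, $O(1)$ pointers that identify the ``critical ancestor'' used in the comparison above. Summed over all SCCs this yields total space $O(n)$ and total preprocessing $O(n+m)$, with constant query time as promised. The main obstacle will be the second structural claim: a single strong bridge can split $S$ into several new SCCs, and some pieces arise from the interplay between forward and reverse dominators rather than from either tree in isolation. Proving that this potentially intricate partition is captured precisely by a constant-size ancestor test in each of $T$ and $T'$ is the technical heart of the construction, and is essentially the contribution of Georgiadis, Italiano and Parotsidis~\cite{GeorgiadisIP20}, building on the loop-nesting-tree and dominator-tree machinery developed in their earlier work.
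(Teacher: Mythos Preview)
The paper does not prove this lemma; it is quoted from \cite{GeorgiadisIP20} and used entirely as a black box, so there is no proof in the paper to compare your proposal against. Your sketch is broadly in the spirit of the cited construction (reduce to intra-SCC failures, use dominator trees in both orientations, answer via constant-time ancestor tests), and you correctly flag that the hard part is the structural characterization of the new SCCs after deleting a strong bridge. Two small corrections if you intend this as a standalone summary: the actual oracle of Georgiadis, Italiano and Parotsidis also relies on \emph{loop nesting trees} in addition to the two dominator trees, and the preprocessing is linear in $m+n$ (and $m_S+n_S$ per component), not in $n$ alone.
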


We now describe the construction of oracle that answers min-cut queries on failure of two non-critical edges. Our oracle stores the following information.

\begin{itemize}
\item The fault-tolerant strong-connectivity oracle ${\cal O}_{SCC}(G_f)$ of graph $G_f$, for each $f\in \A$.
\item For each non-critical edge $e$ contained in a $(\min+1)$-cut, 
label of a flow in $\A$, denoted $f_e$, under which edge $e$ carries zero flow.
\item A dictionary of the set $\bigcup_{f\in \A} \Null_G(f,\min+1)$. Additionally, for each flow $f \in \A$, the oracle stores a dictionary of the set $\Null_{\G}(f,\min+1)$.
\end{itemize}

It is easy to verify that the oracle takes $O(\lambda n)$ space.

\paragraph{Query Oracle}
Consider a pair of non-critical edges, $e$ and $e'$. We first verify whether $e$ as well as $e'$ are contained in some $(\min+1)$-cut, i.e., lie in the union $\bigcup_{f \in \A} {\Null}_G(f, \min+1)$. Checking their membership in this union suffices as for each non-critical edge, there exists a flow $f\in \A$ under which the edge carries zero flow.

If either $e$ or $e'$ is not contained in $\bigcup_{f \in \A} {\Null}_G(f, \min+1)$, the min-cut size remains unchanged, as there will be no $(\lambda+1)$-minimal cut containing these edges.

Next, assume both $e$ and $e'$ are contained in $\bigcup_{f \in \A} {\Null}_G(f, \min+1)$. Recall $f_e \in \A$ denotes the max-flow under which $e$ carries zero flow. By \Cref{lemma:min-cut-using-SCC-oracle}, the $(s,t)$-min-cut size decreases by one if and only if the endpoints of $e'$ are in different strongly connected components (SCCs) in $G_{f_e} - e$, and $f_e(e') = 1$. 

The first condition, which checks whether the endpoints of $e'$ belong to different SCCs, can be verified in $O(1)$ time using the $\mathcal{O}_{SCC}(G_{f_e})$ oracle. The second condition, $f(e') = 1$, is verifiable by checking that $e'$ is not present in the dictionary of ${\Null}_G(f_e, \min+1)$, a check that can also be performed in constant time. 
Thus, the query time of the oracle is $O(1)$. 


\begin{algorithm}[!ht]
\caption{ReportMinCut($e,e'$)}
\tcp{Reports size of $(s,t)$-min-cut after failure of non-critical edges $e,e'$}
\vspace{2mm}
\If{\textup{Either $e$ or $e'$ is a critical edge}}
  {Return ``\em Invalid input''\;}
\If{\textup{Both $e$ and $e'$ are contained in $\bigcup_{f\in \A} \Null_G(f,\min+1)$}}
{
\If{\textup{$e'\notin \Null_G(f_e,\min+1)$ and endpoints of $e'$ are in different SCC in $G_{f_e}-e$}}
{Return $(\lambda-1)$\;}
}
{Return $(\lambda)$\;}
\label{alg:ReportMinCut-I}
\end{algorithm}

\subsection{Handling failing sets containing at least one critical edge}
\label{section:min-cut-1-or-more-critical-edges}

To handle the failure of one or more critical edges, we introduce a general data structure that, given any set of $k$ edge failures, determines whether the $(s, t)$-min-cut size in $G$ decreases by exactly $k$ in $O(k^2)$ time. This is formalized as follows (for proof see appendix).

\begin{theorem} 
For any $n$-vertex directed graph $G$ with source $s$, sink $t$, and $(s,t)$-min-cut of size $\lambda$, there exists an $O(\lambda n)$-sized oracle, ${\cal O}_{MINCUT}(s,t,G)$, that, for any set $F$ of $k$ edges, determines in $O(k^2)$ time whether the $(s,t)$-min-cut size decreases by $k$ upon the failure of $F$.

Furthermore, if the min-cut size decreases by exactly $k$, the oracle can compute and report an $(s,t)$-min-cut in $G - F$ in $O(kn)$ time.
\label{theorem:k-failures-k-diff} 
\end{theorem}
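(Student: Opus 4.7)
The plan is to reduce the question to a structural check on the strip graph $D_\lambda$. I first observe the key characterization: $|\mincut(s,t,G-F)| = \lambda - k$ if and only if $F$ is contained in some $(s,t)$-min-cut $C$ of $G$. The ``if'' direction is immediate, as $C \setminus F$ is then an $(s,t)$-cut of $G-F$ of size $\lambda - k$. For the converse, any $(s,t)$-cut $C'$ in $G - F$ together with $F$ yields an $(s,t)$-cut of $G$ of size $|C'|+k$, so $|C'| \geq \lambda - k$, and equality forces $C' \cup F$ to be a min-cut containing $F$. Applying Property~\ref{property:transversal}, ``$F$ is contained in some $(s,t)$-min-cut of $G$'' is equivalent to (i) every edge of $F$ is a critical inter-cluster edge, and (ii) for each pair of distinct edges $e_i=(u_i,v_i), e_j=(u_j,v_j) \in F$, neither $\bm v_i$ reaches $\bm u_j$ nor $\bm v_j$ reaches $\bm u_i$ in $D_\lambda$ (where $\bm x$ is the cluster of $x$).

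The oracle will then store: (a) a dictionary of the critical edges of $\G$, each annotated with the cluster identifiers of its endpoints; and (b) a reachability oracle on the supernode DAG of $D_\lambda$ using $O(\lambda n)$ space and $O(1)$ query time. The reachability oracle is built from a chain decomposition of the supernodes of $D_\lambda$ into at most $\lambda$ chains. For each supernode $X$ and each chain index $c \in [\lambda]$, I store $m_X[c]$, the minimum position on chain $c$ that is reachable from $X$ in $D_\lambda$; a reachability query from $X$ to $Y$ then reduces to testing $m_X[c(Y)] \leq p(Y)$ in $O(1)$ time, where $c(Y)$ and $p(Y)$ are the chain and position of $Y$. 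Since there are at most $n$ supernodes, the total space of (b) is $O(\lambda n)$, and the dictionary in (a) has $O(\lambda n)$ entries by \Cref{lemma:null-set-f-edge-set}.

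Given a query set $F$ of size $k$, I verify each $e_i$ is critical by dictionary lookup in $O(k)$ time, then perform the two $O(1)$ reachability checks for each of the $\binom{k}{2}$ pairs in $F$, giving $O(k^2)$ total time. If all checks succeed, the min-cut size decreases by exactly $k$. To report the min-cut in $G-F$, I compute the source side $A$ of the $(s,t)$-nearest-min-cut of $G$ containing $F$ via a BFS/DFS on $\G$: starting from $s$, the search explores forward along edges of $\G$, treats each sink-endpoint $v_i$ of an edge in $F$ as a barrier, and restricts exploration to supernodes whose chain-position is consistent with those forced by $F$. The output $(A, V\setminus A)$ is a min-cut of $G$ containing $F$, and the symmetric difference with $F$ yields a min-cut of $G-F$. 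The traversal visits each vertex of $\G$ at most once while performing an $O(n)$ adjustment per element of $F$, for $O(kn)$ total time.

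I expect the main technical obstacle to be the construction and correctness of the $\lambda$-chain decomposition of the supernodes of $D_\lambda$. Property~\ref{property:transversal} directly constrains antichains of \emph{edges} in $D_\lambda$ (corresponding to min-cuts), whereas the chain decomposition operates on antichains of \emph{supernodes}. Transferring the bound requires a Menger-type argument leveraging the $\lambda$ edge-disjoint $\bm s$-$\bm t$ paths in $D_\lambda$ witnessing its min-cut value, combined with the Picard--Queyranne lattice structure on source-sides of min-cuts. A secondary subtlety is showing that the blocked BFS in the reporting phase produces a genuine $(s,t)$-min-cut of $G$ rather than a super-optimal cut; this follows from Property~\ref{property:transversal} in combination with the pairwise antichain verification, but has to be argued carefully to justify the $O(kn)$ bound.
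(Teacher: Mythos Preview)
Your proposal is essentially the paper's approach: characterize ``min-cut drops by $k$'' as ``$F$ is an antichain of critical edges in $D_\lambda$,'' build an $O(\lambda n)$ reachability structure on $D_\lambda$, and test all $\binom{k}{2}$ pairs. Two points where the paper's execution is cleaner than yours. First, instead of a chain decomposition of the \emph{supernodes} of $D_\lambda$ (which would require bounding the node-width by $\lambda$, the obstacle you flag), the paper fixes $\lambda$ edge-disjoint $\bm s$--$\bm t$ paths $P_1,\ldots,P_\lambda$ in $D_\lambda$ and stores, for each supernode $X$ and each $P_i$, the first node of $P_i$ reachable from $X$; since every \emph{critical edge} lies on exactly one $P_i$, the query ``does the head of $e_a$ reach the tail of $e_b$?'' is answered by looking only at the path carrying $e_b$. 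This neatly sidesteps the supernode-width issue. Second, your BFS-with-barriers description for reporting is vague and not obviously correct or $O(kn)$; the paper instead simply iterates over all vertices $v$ and, using the same reachability structure, tests in $O(k)$ time whether $\bm v$ reaches some tail $\bm x_i$ of an edge in $F$, outputting $A=\bigcup_i \R_{in}(x_i)$ as the source side of a min-cut of $G$ containing $F$ (hence $A$ gives a min-cut of $G-F$).
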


Now consider a failing set $F = \{e, e'\}$ comprising two edges in $G$, where at least one edge in~$F$, say $e$, is a critical edge. Upon the failure of $F$, the size of the $(s,t)$-min-cut decreases by at most two and at least one (since the min-cut size in $G - e$ is exactly $\lambda - 1$). Furthermore, using \Cref{theorem:k-failures-k-diff}, we can construct a data structure of size $O(\lambda n)$ that verifies, in constant time, whether the min-cut size decreases by exactly two when $F$ fails.

Combined with the discussion in \Cref{section:min-cut-no-critical-edges}, this gives an $O(n\lambda)$-sized data structure that, for any set $F$ of two edge failures, reports the size of the $(s,t)$-min-cut in $G - F$ in constant time.
The details of our data structure for reporting a min-cut after dual failures is deferred to the full version of the paper. This concludes the proof of \Cref{theorem:dual-min-cut}.

\section{Fault-tolerant Min-cut oracle for $k$ edge failures}
\label{section:FT-k-failures}

In this section, we present our $(s,t)$-min-cut sensitivity oracle resilient to $k$ failures. 
We begin with the following lemma (for proof see appendix).

\begin{lemma}
\label{lemma:k-failures-minimal-cut}
For any failing set $F$ of size $k$, $\mincut(s,t,G-F)$ is given by
$$
\min\Big\{ |C|-|F_0|~ \Big | ~C \text{ is minimal cut in $G$ of size at most }\lambda+k, F_0=F\cap C \Big\}.
$$
\end{lemma}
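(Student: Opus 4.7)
The plan is to establish the equality by proving the two inequalities separately. Write $\mu := \mincut(s,t,G-F)$ and let $\mathcal{M}$ denote the set of minimal $(s,t)$-cuts in $G$ of size at most $\lambda+k$; the target is $\mu = \min_{C \in \mathcal{M}} (|C| - |F \cap C|)$.

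For the upper bound, I would fix any $C \in \mathcal{M}$, set $F_0 = F \cap C$, and argue that $C \setminus F_0 = C \setminus F$ is an $(s,t)$-cut in $G - F$. This holds because $(G-F)-(C\setminus F_0)$ equals $G - (F \cup C)$, which has no $s$-$t$ path since $G - C$ already has none. Since $|C \setminus F_0| = |C| - |F_0|$, taking the minimum over $C \in \mathcal{M}$ yields $\mu \leq \min_{C \in \mathcal{M}}(|C| - |F\cap C|)$.

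For the lower bound, I would let $C^*$ be an $(s,t)$-min-cut in $G - F$, so $|C^*| = \mu$. The key observation is that deleting edges cannot increase the min-cut value, hence $\mu \leq \lambda$. Since $C^* \subseteq E(G-F)$, we have $C^* \cap F = \emptyset$, so the set $\widetilde C := C^* \cup F$ has size $\mu + k \leq \lambda + k$ and is an $(s,t)$-cut in $G$ (because $G - \widetilde C = (G-F) - C^*$ disconnects $s$ from $t$). I would then extract a minimal $(s,t)$-cut $C' \subseteq \widetilde C$ by iteratively discarding any edge whose removal still leaves an $(s,t)$-cut; this is a standard greedy pruning. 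By construction $C' \in \mathcal{M}$ (since $|C'| \leq |\widetilde C| \leq \lambda+k$), and $C' \setminus F \subseteq \widetilde C \setminus F = C^*$, so
$$|C'| - |F \cap C'| = |C' \setminus F| \leq |C^*| = \mu,$$
which shows that the right-hand minimum is at most $\mu$.

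Combining the two inequalities gives the claimed identity. No step presents a real obstacle; the only delicate ingredient is the extraction of a minimal cut inside $\widetilde C$, which is a routine greedy argument, and the monotonicity $\mu \leq \lambda$ under edge deletions, needed to ensure the extracted cut belongs to $\mathcal{M}$.
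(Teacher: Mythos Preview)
Your proof is correct. Both directions are sound: the upper bound is immediate, and for the lower bound your pruning of $\widetilde C = C^* \cup F$ to a minimal cut $C' \subseteq \widetilde C$ works exactly as stated, with the monotonicity $\mu \le \lambda$ ensuring $|C'| \le \lambda+k$.

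The paper's argument follows the same two-inequality skeleton but handles the lower bound via a different construction. Rather than forming $C^* \cup F$ and pruning, the paper first selects a \emph{minimal} subset $F_0 \subseteq F$ for which $\mincut(s,t,G-F_0) = \mincut(s,t,G-F)$, takes a min-cut $Z$ in $G-F_0$, and then argues that $Z \cup F_0$ is \emph{already} a minimal $(s,t)$-cut in $G$: removing any edge of $Z$ would contradict minimality of $Z$ as a min-cut, and removing any edge of $F_0$ would contradict minimality of $F_0$. This sidesteps the greedy extraction step entirely. Your route is slightly more elementary (no auxiliary choice of $F_0$, just a standard prune), while the paper's route yields an explicit witness cut without any iteration. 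Either way the content is the same; neither approach offers a real advantage over the other here.
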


The above lemma highlights the significance of $\lambda+i$ minimal-cuts for answering min-cut queries under failures. 
In order to use the this lemma, one would require a data structure that can check, for every subset $F_0 \subseteq F$, whether there exists an $(s,t)$-minimal cut $C$ of size at most $\lambda + k$ containing $F_0$. While we provide a data structure in \Cref{theorem:k-failures-k-diff} to efficiently compute an $(s,t)$-min-cut containing a given subset $F_0$, extending this to handle all $\lambda + k$ minimal cuts is non-trivial.

To address this, we leverage the result by Kim et al.~\cite{KKPW-stoc22}, which shows how to augment graphs with additional edges to transform a large collection of minimal cuts into min-cuts. This allows us to efficiently utilize data-structure developed for min-cuts in \Cref{theorem:k-failures-k-diff}.

For any set $\E \subseteq V \times V$, let $G + \E^\infty$ denote the graph obtained by adding edges in $\E$ to $G$, 
with infinite capacity on all edges in $\E$.
We use the following result by Kim et al.~\cite{KKPW-stoc22}.

\begin{theorem}[\hspace{-0.1mm}\cite{KKPW-stoc22}]
There exists a randomized polynomial-time algorithm that, given a directed graph $G=(V,E)$, two vertices $s, t \in V$, and an integer $L$, outputs a set $\E \subseteq V \times V$ such that for every $(s,t)$-minimal-cut $Z \subseteq E$ of size at most $L$, with probability $2^{-O(L^4 \log L)}$, $Z$ remains an $(s,t)$-cut in $G + \E^\infty$, and furthermore, $Z$ is an $(s,t)$-min-cut in $G + \E^\infty$.
\label{theorem:flow-augmentation}
\end{theorem}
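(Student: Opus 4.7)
The plan is to construct $\E$ through a randomized sampling procedure that, with the claimed probability, simultaneously (a) keeps the target cut $Z$ as a cut of $G + \E^\infty$ and (b) destroys every $(s,t)$-cut of size strictly smaller than $|Z|$ in the augmented graph. Letting $(A_Z, B_Z)$ denote the vertex partition induced by $Z$, condition (a) translates to requiring that every added edge $(u,v) \in \E$ satisfies either $\{u,v\} \subseteq A_Z$ or $\{u,v\} \subseteq B_Z$; no sampled edge may cross from $A_Z$ to $B_Z$. Condition (b) requires that, after the infinite-capacity edges are in place, neither $G[A_Z]$ nor $G[B_Z]$ retains an alternative small boundary that could reroute flow past $Z$.

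I would implement this via a random colouring scheme: assign each vertex of $V$ an independent colour from a palette of size $r = \mathrm{poly}(L)$, sample a random set $S$ of ordered colour pairs, and, for every pair $(u,v) \in V \times V$ whose colours fall in $S$, place an infinite-capacity edge into $\E$. The ``good event'' is that the colouring labels the endpoints of edges of $Z$ in a way consistent with the partition $(A_Z, B_Z)$, and that $S$ is chosen so the edges materialized by the above rule lie wholly inside $A_Z$ or wholly inside $B_Z$. The number of ``critical'' vertices whose colours must be controlled is only $O(L)$ (the endpoints of edges in $Z$, together with a bounded layered structure around them), so the probability of the desired classification is at worst $r^{-O(L)} = 2^{-O(L \log L)}$ from this step alone.

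The main obstacle is ensuring condition (b). Keeping $Z$ as a cut after augmentation follows from the one-shot sampling above, but one must additionally guarantee that $Z$ is a \emph{minimum} cut in $G + \E^\infty$, i.e., that no cut of size less than $|Z|$ survives inside $G[A_Z]$ or $G[B_Z]$. I would handle this by iterating the sampling over $O(L)$ layers, each of which further merges potential small-boundary structures within $A_Z$ and within $B_Z$, and then union-bound over the candidate alternative cuts that arise from a structural decomposition of directed minimal cuts of size at most $L$. Composing the per-layer success probabilities with the union bound yields the claimed $2^{-O(L^4 \log L)}$ lower bound, where the polynomial exponent tracks the interplay between palette size, number of layers, and the combinatorial complexity of layered minimal cuts.

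Making this final counting and structural argument precise --- in particular, establishing the $L^{O(1)}$ bound on the number of interacting ``pieces'' of minimal cuts in directed graphs, and certifying that the iterative augmentation never accidentally crosses the partition $(A_Z, B_Z)$ --- is the hardest part of the proof. This is exactly the technical heart of the Kim et al. analysis, which I would invoke essentially verbatim: the reduction to a layered colouring problem is conceptually clean, but the probability bookkeeping and the structural theorem on directed minimal cuts that underpins the $L^4 \log L$ exponent are intricate and would be taken as a black box rather than re-derived here.
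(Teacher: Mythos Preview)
This theorem is not proved in the paper at all: it is quoted verbatim from Kim et al.~\cite{KKPW-stoc22} and used as a black box in the oracle construction of \Cref{section:FT-k-failures}. There is therefore no ``paper's own proof'' to compare your proposal against.

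Your sketch is a reasonable high-level caricature of the flow-augmentation technique (random colouring, layered sampling, union bound over structural pieces of small minimal cuts), and you yourself acknowledge that the substantive part --- the structural theorem on directed minimal cuts and the probability bookkeeping yielding the $L^4 \log L$ exponent --- would be taken from \cite{KKPW-stoc22} rather than re-derived. That is exactly what the present paper does too: it invokes the result, it does not reprove it. So there is no discrepancy to flag and no alternative approach to contrast; the appropriate treatment here is simply to cite the theorem, as the paper does.
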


\paragraph{Oracle construction}
Let $L = \lambda + k$ and $\rho = 2^{O(L^4 \log L)} \cdot (4L \log_e n)$. We perform $\rho$ independent rounds, and in each round, we compute a sample $\E$ by invoking \Cref{theorem:flow-augmentation}. Let $\C$ be a collection of those samples $\E \in V \times V$ for which $G + \E^\infty$ has an $(s,t)$-min-cut of size at most $L$. For each $\E \in \C$, compute the oracle ${\cal O}_{MINCUT}(s,t,G + \E^\infty)$ using the data structure from \Cref{theorem:k-failures-k-diff}. Finally, verify in $m^{O(L)}$ time that for every $(s,t)$-minimal-cut $Z$ of size at most $L$, there exists some $\E \in \C$ satisfying $Z$ is an $(s,t)$-min-cut in $G + \E^\infty$. If not, re-compute~$\C$. The total space required by our data structure is $O(\rho\cdot nL)=\big(2^{O(L^4 \log L)}n \log n\big)$, since the graphs in family $\C$ have a min-cut of size at most $L$. \Cref{lemma:high-prob-bound} below shows that the number of repetitions needed to compute $\C$ is at most one with high probability.

\begin{lemma}
With probability at least $1 - \frac{1}{n^2}$, for each $(s,t)$-minimal cut $Z$ of size at most $L$, there exists $\E \in \C$ such that $Z$ is an $(s,t)$-min-cut in $G + \E^\infty$.
\label{lemma:high-prob-bound}
\end{lemma}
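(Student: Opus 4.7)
}
The plan is a routine union-bound argument, once we pair \Cref{theorem:flow-augmentation} with an appropriate counting bound on the number of candidate minimal cuts. Fix an $(s,t)$-minimal cut $Z \subseteq E$ of size at most $L = \lambda + k$. By \Cref{theorem:flow-augmentation}, there is a constant $c$ (absorbed into the hidden $O(\cdot)$) such that, in any single round, the sample $\E$ returned by the algorithm of Kim et al. satisfies ``$Z$ is an $(s,t)$-min-cut in $G + \E^{\infty}$'' with probability at least $p := 2^{-c L^4 \log L}$. When this event occurs, $G + \E^{\infty}$ has an $(s,t)$-min-cut of size at most $L$ (namely $Z$ itself), so this particular sample is placed in $\C$. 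Hence the event ``\emph{$Z$ is represented in $\C$}'' occurs in a single round with probability at least $p$.

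Next I would leverage the independence of the $\rho$ rounds. Since $\rho = 2^{O(L^4 \log L)} \cdot 4L \log_e n$, one can absorb the constant $c$ into the $O(\cdot)$ and arrange $p \cdot \rho \geq 4L \log_e n$. A standard estimate then gives
$$
\Pr\bigl[Z \text{ not represented in } \C\bigr] \;\leq\; (1-p)^{\rho} \;\leq\; e^{-p\rho} \;\leq\; e^{-4L \log_e n} \;=\; n^{-4L}.
$$

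To finish with a union bound, I would bound the number of $(s,t)$-minimal cuts of size at most $L$. Any such cut is a set of at most $L$ edges of $G$, so crudely the number of candidates is at most $\sum_{j=1}^{L}\binom{m}{j} \leq m^{L} \leq n^{2L}$ (using $m \leq n^2$). Applying the union bound over this family yields
$$
\Pr\bigl[\text{some minimal cut of size } \leq L \text{ is not represented}\bigr] \;\leq\; n^{2L} \cdot n^{-4L} \;=\; n^{-2L} \;\leq\; n^{-2},
$$
which is the claimed bound.

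The only place that needs care is keeping the constants in $p$ and in $\rho$ compatible: the hidden constant $c$ in the exponent of $p$ and the hidden constant in $\rho = 2^{O(L^4 \log L)}\cdot 4L \log n$ must be chosen consistently so that $p \rho \geq 4L \log n$. This is purely a bookkeeping matter, since both constants are of the same form $L^4 \log L$ coming directly from \Cref{theorem:flow-augmentation}, and it poses no real obstacle. All other steps are routine.
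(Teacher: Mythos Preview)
Your proposal is correct and follows essentially the same route as the paper: fix a minimal cut $Z$, use \Cref{theorem:flow-augmentation} to bound the per-round success probability, amplify over the $\rho$ independent rounds to get failure probability $n^{-4L}$, and then union-bound over the at most $n^{2L}$ minimal cuts of size $\leq L$. Your observation that the successful sample $\E$ is automatically included in $\C$ (since $Z$ witnesses a min-cut of size $\leq L$ in $G+\E^\infty$) is a detail the paper leaves implicit, and your remark about matching the constants in $p$ and $\rho$ is exactly the right caveat.
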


\begin{proof}
Consider an $(s,t)$-minimal cut $Z$. 
The probability that there does not exist an $\E \in \C$ such that $Z$ is an $(s,t)$-min-cut in $G + \E^\infty$ is

$$
\Big(1-\frac{1}{2^{O(L^4 \log L)}}\Big)^\rho 
= \Big(1-\frac{1}{2^{O(L^4 \log L)}}\Big)^{2^{O(L^4 \log L)} \cdot (4L \log_e n)}
\leq e^{-4L \log_e n}
= \frac{1}{n^{4L}}.
$$

The total number of minimal cuts of size at most $L$ is bounded by $n^{2L}$. Therefore, the probability that there exists a minimal cut $Z$ for which no $\E \in \C$ satisfies that $Z$ is an $(s,t)$-min-cut in $G + \E^\infty$ is at most $n^{-2L} \leq n^{-2}$.
This completes the proof.
\end{proof}

\paragraph{Query procedure}
Our algorithm to determine the size of the minimum cut after $k$ failures is presented in \Cref{alg:ReportMinCut-II}. The procedure iterates over every subset $F_0 \subseteq F$ and every set $\E \in \C$, and checks using \Cref{theorem:k-failures-k-diff} whether the size of the min-cut in $G + \E^\infty$ decreases by exactly $|F_0|$ upon the failure of $F_0$. 

It then reports the minimum value of  $\mincut(s,t,G+\E^\infty) - |F_0|$,  taken over all subsets $F_0 \subseteq F$ and all $\E \in \C$ such that the $(s,t)$-min-cut size in $G + \E^\infty$ decreases by exactly $|F_0|$ upon the failure of $F_0$.
The time complexity to compute the size of the min-cut in $G - F$ is therefore $O(k^2 |\C|) = O\big( 2^{O(L^4 \log L)} \log n \big)$, where $L = \lambda + k$.

Furthermore, we can compute an explicit $(s,t)$-min-cut in $G - F$. To achieve this, we return a min-cut in the graph $G + \E^\infty - F_0$ for the appropriate choice of $\E$ and $F_0$, as determined in the previous step. Computing and returning this cut requires an additional $O(kn)$ time using the oracle described in Theorem~\ref{theorem:k-failures-k-diff}.

\begin{algorithm}[!ht]
\setstretch{1.25}
\caption{ReportMinCut($F$)}
Initialize $q=\lambda$\;
\ForEach{$F_0\subseteq F$ {\bf and} $\E\in \C$}{
\If{$(s,t)$-min-cut size in $G+(\E)^\infty$ on failure of $F_0$ decreases by exactly $|F_0|$}
{$q=\min\Big(q,~\mincut(s,t,G+\E^\infty)-|F_0|\Big)$}
}
{Return $q$\;}
\label{alg:ReportMinCut-II}
\end{algorithm}

We thus conclude with the following theorem.

\begin{theorem}
There exists a Las Vegas algorithm that, for any $n$-vertex directed graph $G$ with an $(s,t)$-min-cut of size $\lambda$, computes a $k$-fault-tolerant minimum cut oracle of $O\big( 2^{O(L^4 \log L)} n \log n \big)$ space. This oracle can determine the size of the $(s,t)$-min-cut in $G - F$ for any set $F$ of $k$ edges in $O\big( 2^{O(L^4 \log L)} \log n \big)$ time, where $L = \lambda + k$.

Furthermore, the oracle can report an $(s,t)$-min-cut in $G - F$ in an additional computation time of $O(kn)$.
\label{theorem:min-cut-k-failures}
\end{theorem}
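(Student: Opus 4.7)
The plan is to assemble the three key ingredients already developed in this section—the structural reduction of Lemma~\ref{lemma:k-failures-minimal-cut}, the flow-augmentation technique of Theorem~\ref{theorem:flow-augmentation}, and the exact-drop oracle of Theorem~\ref{theorem:k-failures-k-diff}—into a single fault-tolerant data structure. The reduction tells us that answering a $k$-failure query on $G$ amounts to identifying, for each subset $F_0 \subseteq F$, a minimal $(s,t)$-cut $C$ of size at most $L=\lambda+k$ with $F \cap C = F_0$, and minimizing $|C|-|F_0|$. However, Theorem~\ref{theorem:k-failures-k-diff} decides this question only for \emph{minimum} cuts, so the conceptual bridge is to promote every relevant minimal cut of $G$ into a minimum cut of some randomly augmented graph $G+\E^\infty$.

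For the construction, I would draw $\rho = 2^{O(L^4\log L)} \cdot 4L\log n$ independent samples $\E$ via Theorem~\ref{theorem:flow-augmentation}, retain only those whose augmented min-cut is at most $L$, and let $\C$ denote the retained family. For each $\E\in\C$, I would precompute ${\cal O}_{MINCUT}(s,t,G+\E^\infty)$ at $O(Ln)$ cost using Theorem~\ref{theorem:k-failures-k-diff}. Since $|\C|\leq \rho$, the aggregate space is $O(\rho\cdot Ln)=O\bigl(2^{O(L^4\log L)}\, n\log n\bigr)$ as required. To obtain a Las Vegas guarantee, I would explicitly enumerate the $\leq n^{2L}$ minimal cuts of size at most $L$ in $m^{O(L)}$ time and re-sample if any remains uncovered; Lemma~\ref{lemma:high-prob-bound} ensures a single round suffices with probability at least $1-n^{-2}$, giving $1+o(1)$ expected rounds.

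For a query on $F$, I would run Algorithm~\ref{alg:ReportMinCut-II}: iterate over all $F_0\subseteq F$ and all $\E\in\C$, invoke ${\cal O}_{MINCUT}(s,t,G+\E^\infty)$ in $O(k^2)$ time to test whether its min-cut drops by exactly $|F_0|$ upon failure of $F_0$, and return the smallest $\mincut(s,t,G+\E^\infty)-|F_0|$ over all successful pairs. Correctness is a direct consequence of the covering property of $\C$: the optimal pair $(C^\star,F_0^\star)$ provided by Lemma~\ref{lemma:k-failures-minimal-cut} is witnessed by some $\E\in\C$ in which $C^\star$ is a min-cut, and the exact-drop test succeeds precisely because $F_0^\star\subseteq C^\star$. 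Aggregating, the query time is $O(2^k\cdot|\C|\cdot k^2) = O\bigl(2^{O(L^4\log L)}\log n\bigr)$. To report an explicit cut, I would invoke the reporting half of Theorem~\ref{theorem:k-failures-k-diff} on the witnessing pair in an additional $O(kn)$ time; since edges of $\E$ carry infinite capacity they cannot lie in any finite min-cut, so the returned cut sits inside $E(G)\setminus F_0^\star$, is disjoint from $F$ (because $F\cap C^\star=F_0^\star$), and has the correct size $|C^\star|-|F_0^\star|$ in $G-F$.

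The main obstacle is not algorithmic but probabilistic: showing that $\rho$ samples suffice to cover \emph{all} $n^{2L}$-many minimal cuts of size at most $L$ simultaneously. This is settled by a union bound balancing the $2^{-O(L^4\log L)}$ per-cut success probability from Theorem~\ref{theorem:flow-augmentation} against the $n^{2L}$ cut count, which is exactly the content of Lemma~\ref{lemma:high-prob-bound}. Once the covering property is in place, the remaining space, query-time, and cut-reporting bounds follow by routine accounting, and the theorem follows.
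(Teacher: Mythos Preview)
Your proposal is correct and follows essentially the same approach as the paper: the same sampling budget $\rho$, the same filtering to retain only augmentations with min-cut at most $L$, storage of ${\cal O}_{MINCUT}$ for each retained sample, the same Las Vegas verification by enumerating all minimal cuts of size at most $L$, and the same query procedure of Algorithm~\ref{alg:ReportMinCut-II}. The only minor remark is that your justification for the reported cut being disjoint from $F$ appeals to $F\cap C^\star=F_0^\star$, which concerns the \emph{witnessing} minimal cut $C^\star$ rather than the cut actually returned by ${\cal O}_{MINCUT}$; the correct argument is simply that any returned cut of size $q=\mincut(s,t,G-F)$ in $G-F_0^\star$ is automatically a cut in $G-F$ and cannot intersect $F$ without contradicting minimality of $q$---but this is a cosmetic point and the paper is equally terse here.
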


\section{Open Problems}
\label{section:open-problems}
Several interesting questions remain unresolved in this field. One immediate open question after our work is whether our max-flow sensitivity oracle can be extended to handle more than two failures. Specifically, can we obtain a compact oracle capable of updating the max-flow in a directed graph after $k > 2$ failures in $O_k(n)$ time?

Another natural open question is computing a compact min-cut sensitivity oracle for graphs with large cuts for $k>2$ failures. While our min-cut sensitivity oracle for $k = 2$ has a linear dependence on $\lambda$, our oracle for $k>2$ exhibits an exponential dependence on~$\lambda$. We leave it open to close this gap and achieve a polynomial dependence on $\lambda$.

A further open question is whether our $(s,t)$-reachability oracle (as well as the max-flow and min-cut sensitivity oracles) can be extended to the single-source setting. Currently, for $k = 1, 2$, there exist single-source reachability oracles~\cite{LengauerT:79, Choudhary16} with linear space that for any query vertex $x$ and any set $F$ of up to two failures answer whether $x$ is reachable from source on failure of $F$ in constant time. However, to date, no oracle exists for answering reachability queries from a fixed source in linear space and truly sub-linear query time, even for a constant $k > 2$.

\bibliographystyle{plain}
\bibliography{ref}

\newpage
\appendix

\section{Omitted Proofs}
\label{section:deferred-proofs}

\subsection{Proof of Lemma \ref{lemma:null-set-max-flow}}

Consider an $(s,t)$-max-flow $f$ and an edge $e = (a,b) \in{\Null}_G(f, \min+1)$. Clearly, $e$ must be an intra-cluster edge. Let $C$ be an $(s,t)$-minimal cut of size $(\lambda + 1)$ containing $e$. Since $e$ is a non-critical edge in a $\lambda+1$-cut, removing $e$ from $C$ results in an $(s,t)$-min-cut that separates the endpoints of $e$. Therefore, the endpoints of $e$ lie in different SCCs in $(G - e)_f = G_f - e$. This shows that the deletion of any $e \in {\Null}_G(f, \min+1)$ from $G$ splits an SCC of $G_f$.

Next let $H_f$ be an SCC-certificate of $G_f$ with at most $2n$ edges satisfying that any two vertices that are strongly connected in $G_f$ are also strongly connected in $H_f$. Such a certificate $H_f$ with at most $2n$ edges can be constructed by simply taking each SCC $S$ in $G_f$, and including in $H_f$, the edges of an in-reachability and out-reachability tree of the induced graph $G_f[S]$ rooted at an arbitrary vertex in $S$.

As deletion of any edge in $\Null_{G}(f,\min+1)$ splits an SCC of $G_f$, the edges in $\Null_{G}(f,\min+1)$ must be contained in every SCC certificate of $G_f$. This proves that $|\Null_{G}(f,\min+1)|\leq |E(H_f)|\leq 2n$. 
\vspace{1mm}

\begin{reminder}{Lemma \ref{lemma:null-set-max-flow}}
For any integer $(s,t)$-max-flow $f$ in an unweighted graph $G$, the cardinality of the set $\Null_{G}(f,\min+1)$ is at most $2n$.
\end{reminder}

\subsection{Proof of Theorem \ref{theorem:k-failures-k-diff}}

Consider a poset relation on the edges of $D_\lambda$ such that, for any two edges $e_x$ and $e_y$ in $D_\lambda$, $e_x \leq e_y$ if and only if there exists a path from $\bm{s}$ to $\bm{t}$ in $D_\lambda$ in which $e_x$ precedes $e_y$.
This relation induces a partial order because $D_\lambda$ is acyclic. Recall that a {\em chain} in a poset is a subset in which every two elements are comparable, while an {\em anti-chain} is a subset in which no two elements are comparable. Due to \Cref{property:transversal}, the following is immediate.

\begin{lemma}
A set $C$ of edges in $G$ is an $(s,t)$ minimum-cut if and only if $C$ corresponds to a maximal anti-chain of critical edges in~$D_\lambda$.
\label{lemma:min-cut-characterization}
\end{lemma}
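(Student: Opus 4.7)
\textbf{Plan for the proof of Lemma \ref{lemma:k-failures-minimal-cut}.}
I will prove the equality by showing both inequalities between
$\lambda_F := \mincut(s,t,G-F)$ and
$\mu := \min\{|C|-|F_0| \mid C \text{ is a minimal } (s,t)\text{-cut in } G \text{ with } |C| \le \lambda+k,\ F_0 = F\cap C\}$.

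\textbf{Step 1 ($\lambda_F \le \mu$).}
Take any minimal $(s,t)$-cut $C$ in $G$ with $|C|\le\lambda+k$, and set $F_0 = F\cap C$.
The key observation is that $C\setminus F$ is an $(s,t)$-cut in $G-F$: indeed,
$(G-F)\setminus(C\setminus F) = G\setminus(F\cup C)$, which is a subgraph of $G-C$ and therefore has no $(s,t)$-path.
Hence $\lambda_F \le |C\setminus F| = |C| - |C\cap F| = |C|-|F_0|$, and minimizing over $C$ yields $\lambda_F\le\mu$.

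\textbf{Step 2 ($\lambda_F \ge \mu$).}
Let $C^*$ be an $(s,t)$-min-cut in $G-F$, so $|C^*|=\lambda_F$, and note $C^*\subseteq E\setminus F$.
Then $C^*\cup F$ is an $(s,t)$-cut in $G$ (since $G\setminus(C^*\cup F) = (G-F)\setminus C^*$ disconnects $s$ from $t$).
Let $C\subseteq C^*\cup F$ be any minimal $(s,t)$-cut in $G$ contained in $C^*\cup F$; such a $C$ exists by simply removing redundant edges.
Because the $(s,t)$-min-cut in $G$ has value $\lambda$ and $G-F$ is a subgraph of $G$, we have $\lambda_F=|C^*|\le\lambda$, so
\[
|C| \;\le\; |C^*\cup F| \;\le\; |C^*|+|F| \;\le\; \lambda+k,
\]
which places $C$ into the family appearing in the definition of $\mu$.
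Setting $F_0 := F\cap C$, since $C^*\cap F=\emptyset$ we have $C\setminus F\subseteq C^*$, and therefore
\[
|C|-|F_0| \;=\; |C\setminus F| \;\le\; |C^*| \;=\; \lambda_F.
\]
Thus $\mu \le |C|-|F_0| \le \lambda_F$, completing the reverse inequality.

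\textbf{Anticipated difficulty.} The proof is essentially a bookkeeping argument, so there is no serious technical obstacle. The only subtlety is ensuring that in Step 2 the minimal cut $C$ we extract from $C^*\cup F$ indeed satisfies the size bound $|C|\le\lambda+k$ and that the containment $C\setminus F\subseteq C^*$ is used tightly; both are handled by the observation that $|C^*|\le\lambda$ (edge deletions cannot increase the max-flow) and by the fact that $C^*$ lives entirely in $E\setminus F$. Once these are in place, the two inequalities combine to give the claimed formula for $\mincut(s,t,G-F)$.
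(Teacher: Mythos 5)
Your proposal does not prove the statement in question. The lemma to be proved is the characterization of $(s,t)$-minimum cuts as maximal anti-chains of critical edges in the strip graph $D_\lambda$ (\Cref{lemma:min-cut-characterization}); what you have written is instead a proof of \Cref{lemma:k-failures-minimal-cut}, the formula expressing $\mincut(s,t,G-F)$ as a minimum over minimal cuts of size at most $\lambda+k$. These are entirely different claims: the first is a structural statement about the poset on the edges of $D_\lambda$ (it requires Property~\ref{property:transversal} of Picard--Queyranne and Dinitz--Vainshtein, namely that an $(s,t)$-cut is a min-cut iff it consists of critical inter-cluster edges meeting every $\bm{s}$-to-$\bm{t}$ path of $D_\lambda$ exactly once, together with the observation that ``anti-chain'' encodes ``at most once'' and ``maximal'' encodes ``every path is hit''), whereas the second is a counting identity about cuts under edge deletions. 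Nothing in your argument mentions $D_\lambda$, the poset relation, criticality, or anti-chains, so it cannot be repaired into a proof of the stated lemma; it addresses a different result.

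For what it is worth, as a proof of \Cref{lemma:k-failures-minimal-cut} your two-inequality bookkeeping argument is sound and is essentially equivalent to the paper's (the paper instead fixes a minimum-size $F_0\subseteq F$ with $\mincut(s,t,G-F_0)=\mincut(s,t,G-F)$ and argues that $Z\cup F_0$ is a minimal cut of size at most $\lambda+k$, which is your Step~2 in slightly different clothing). But you should redo the exercise for the correct statement: derive both directions of the equivalence from Property~\ref{property:transversal}, checking in particular that a maximal anti-chain of critical edges in the DAG $D_\lambda$ necessarily meets every $\bm{s}$-to-$\bm{t}$ path, so that it is indeed an $(s,t)$-cut of size $\lambda$.
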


\begin{lemma}
Let $\E$ be a set of $k$ edges in $G$. Then on deletion of $\E$, the size of min-cut reduces~by~$k$ if and only if all the edges in $\E$ are critical edges forming an anti-chain in $D_\lambda$.
\label{lemma:k-failures-property}
\end{lemma}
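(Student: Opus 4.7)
The plan is to prove both directions by leveraging \Cref{lemma:min-cut-characterization}, which identifies $(s,t)$-min-cuts of $G$ with maximal anti-chains of critical edges in $D_\lambda$. The guiding intuition is that deleting $\E$ drops the $(s,t)$-min-cut by exactly $k$ precisely when $\E$ sits inside some $(s,t)$-min-cut of $G$, and \Cref{lemma:min-cut-characterization} converts such containment into the anti-chain condition.

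For the ``if'' direction, I would start with a set $\E$ of $k$ critical edges that form an anti-chain in $D_\lambda$. Because the critical edge set is finite and $\E$ is already an anti-chain, I can greedily extend $\E$ to a maximal anti-chain of critical edges. Applying \Cref{lemma:min-cut-characterization}, this maximal anti-chain is an $(s,t)$-min-cut $C$ of $G$ with $\E\subseteq C$ and $|C|=\lambda$. Then $C\setminus\E$ is an $(s,t)$-cut in $G-\E$ of size $\lambda-k$, giving $\mincut(s,t,G-\E)\leq \lambda-k$. The matching lower bound follows by noting that for any $(s,t)$-cut $C'$ of $G-\E$, the union $C'\cup \E$ cuts $s$ from $t$ in $G$, so $|C'|+k\geq |C'\cup \E|\geq \lambda$.

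For the ``only if'' direction, I would pick a min-cut $C'$ of $G-\E$ with $|C'|=\lambda-k$ and consider $C'\cup\E$. This set is an $(s,t)$-cut in $G$ of size at most $\lambda-k+k=\lambda$. Since $\lambda$ is the min-cut size of $G$, equality must hold everywhere: $C'$ and $\E$ are disjoint, and $C'\cup\E$ is an $(s,t)$-min-cut of $G$ (hence minimal, since no strictly smaller set can be an $(s,t)$-cut in $G$). Invoking \Cref{lemma:min-cut-characterization} on $C'\cup \E$ shows that every edge in $C'\cup\E$ is critical and the whole set forms a (maximal) anti-chain in $D_\lambda$. In particular, its subset $\E$ consists of critical edges forming an anti-chain.

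I do not foresee any serious obstacle; the proof is essentially bookkeeping on top of \Cref{lemma:min-cut-characterization}. The only subtlety worth flagging is verifying that $C'\cup\E$ actually attains cardinality exactly $\lambda$ (so that \Cref{lemma:min-cut-characterization} applies and forces disjointness of $C'$ and $\E$); this is immediate from the definition of $\lambda$ as the minimum $(s,t)$-cut size in $G$.
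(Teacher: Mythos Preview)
Your proposal is correct and follows essentially the same route as the paper: both directions hinge on \Cref{lemma:min-cut-characterization}, with the ``if'' direction extending $\E$ to a maximal anti-chain of critical edges and the ``only if'' direction unioning a min-cut of $G-\E$ with $\E$ to recover a min-cut of $G$. Your write-up is in fact slightly more explicit than the paper's (you spell out the lower bound $\mincut(s,t,G-\E)\geq\lambda-k$ and the disjointness of $C'$ and $\E$), but the underlying argument is identical.
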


\begin{proof}
Consider a set $\E$ of $k$ edges. Let us first consider the case that all the edges in $\E$ are critical and form an anti-chain in $D_\lambda$. As any anti-chain can be greedily extended to a maximal anti-chain, let $\E_0$ be a maximal anti-chain comprising of critical edges in $D_\lambda$ satisfying $\E_0\supseteq \E$. Due to \Cref{lemma:min-cut-characterization}, $\E_0$ is an $(s,t)$ min-cut in $G$. As all the edges in $\E$ are contained in an $(s,t)$-min-cut, the size of $(s,t)$-min-cut reduces by exactly $k$ on failure~of~$\E$.

Next, suppose the size of min-cut reduces by exactly $k$ on failure of $\E$. Let $\E_1$ be an $(s,t)$-cut in $G- \E$ of size $\lambda-k$. Then, $\E\cup \E_1$ will be an $(s,t)$-cut of size $\lambda$. Therefore, by \Cref{lemma:min-cut-characterization}, the edges of $\E\cup \E_1$ (and hence also $\E$) are critical and form an anti-chain in $D_\lambda$.
\end{proof}

In order to exploit \Cref{lemma:k-failures-property}, we need an efficient data structure to efficiently verify chain/anti-chain relation among critical edges in $D_\lambda$. We present in the following lemma an $O(\lambda n)$ sized data-structure that achieves this task in constant time.

\begin{lemma}
The graph $D_\lambda$ can be processed in polynomial time to compute a data-structure of size $O(\lambda n)$ that given any two critical edges $e_a,e_b$ determines if there is a path in $D_\lambda$ in which $e_a$ precedes $e_b$ in $O(1)$ time.
\label{lemma:DAG-reachability}
\end{lemma}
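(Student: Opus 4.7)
The plan is to partition the critical edges into $\lambda$ chains of the poset on $D_\lambda$ and then tabulate, per cluster, how far each chain is reachable. Fix any integral $(s,t)$-max-flow $f$ in $G$ and decompose it into $\lambda$ edge-disjoint $(s,t)$-paths $P_1,\ldots,P_\lambda$. Every critical edge carries unit flow under $f$, so each such edge lies on exactly one $P_i$; let $e^i_1,\ldots,e^i_{\lambda_i}$ be the critical edges on $P_i$ in the order in which they appear. The crucial structural observation is that between two consecutive critical edges $e^i_j$ and $e^i_{j+1}$ on $P_i$, every edge of the sub-path is non-critical and flow-carrying, hence \emph{intra-cluster}, since non-critical inter-cluster edges carry zero flow under every max-flow (as argued in the proof of \Cref{lemma:min-cut-using-SCC-oracle}). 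Consequently, the head of $e^i_j$ and the tail of $e^i_{j+1}$ lie in a common equivalence class $\bm v^i_j\in\W$, so in $D_\lambda$ (where critical edges retain their original orientation) the sequence of steps $e^i_1,e^i_2,\ldots,e^i_{\lambda_i}$ traces a valid $\bm s$-$\bm t$ path. This simultaneously certifies that each $P_i$ contributes a chain in the poset and that the $\lambda$ chains partition all critical edges.

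The data structure stores, for every critical edge $e$, the pair $(i_e,j_e)$ indexing its chain and its position within it; and for every cluster $\bm v\in\W$ and every chain index $i'\in\{1,\ldots,\lambda\}$, the integer
\[
k^\ast(\bm v,i')\;=\;\min\{\,k : \bm v \text{ reaches the tail of } e^{i'}_k \text{ in } D_\lambda\,\},
\]
taken to be $+\infty$ if no such $k$ exists. Since $|\W|\le n$, this uses $O(\lambda n)$ space. Preprocessing amounts to one forward BFS from each cluster of $D_\lambda$, updating the per-chain minima over the visited clusters, which runs in polynomial time.

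The query on critical edges $e_a,e_b$ returns yes if $e_a=e_b$, and otherwise tests whether $j_{e_b}\ge k^\ast(\bm v_a,i_{e_b})$, where $\bm v_a$ is the head cluster of $e_a$ in $D_\lambda$; this is clearly $O(1)$ time. Correctness relies on a simple monotonicity: if $\bm v_a$ reaches the tail of $e^{i'}_{k_0}$, then extending the walk by the chain edges $e^{i'}_{k_0},e^{i'}_{k_0+1},\ldots$ (each consecutive pair meets at a common cluster, by the structural observation above) lets $\bm v_a$ reach the tail of $e^{i'}_k$ for every $k\ge k_0$. Conversely, any $\bm s$-$\bm t$ walk in $D_\lambda$ witnessing that $e_a$ precedes $e_b$ must travel from $\bm v_a$ to the tail of $e_b$ along $D_\lambda$, giving $k^\ast(\bm v_a,i_{e_b})\le j_{e_b}$.

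The main obstacle I anticipate is the structural claim that the non-critical edges appearing between consecutive critical edges on a flow path are intra-cluster, since it is precisely this fact that converts the $\lambda$ flow paths of $G$ into $\lambda$ clean chains inside $D_\lambda$ and, in turn, enables the $O(\lambda n)$-space layered reachability table. Everything after that amounts to a routine chain decomposition combined with a per-cluster, per-chain reachability index.
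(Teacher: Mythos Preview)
Your proposal is correct and follows essentially the same approach as the paper: decompose the critical edges into $\lambda$ chains via edge-disjoint $(\bm s,\bm t)$-paths, store for every cluster and every chain the earliest reachable position on that chain, and answer a query by comparing $e_b$'s position in its chain with the stored threshold for the head cluster of $e_a$. The only cosmetic difference is that the paper takes the $\lambda$ paths directly in $D_\lambda$ (so the covering of critical edges follows immediately from each min-cut being a size-$\lambda$ cut in $D_\lambda$), whereas you take a max-flow decomposition in $G$ and project it to $D_\lambda$ via the structural claim that non-critical flow-carrying edges are intra-cluster---a valid fact already used elsewhere in the paper, but an unnecessary detour here.
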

\begin{proof}
Consider a family $\P=(P_1,\ldots,P_{\lambda})$ of $\lambda$ edge-disjoint paths from source to destination in $D_\lambda$. For~each equivalence class $\bm x\in \W$ and each path $P\in \P$, let $F(\bm x, P)$ denote the first equivalence class in path $P$ that is reachable from $\bm x$ in graph $D_\lambda$. Consider two critical edges $e_a=(\bm a,\bm a')$ and $e_b=(\bm b,\bm b')$ in $D_\lambda$, and let $P$ be the path in $\P$ containing the edge $e_b$. To determine whether $e_a\leq e_b$ it suffices to verify if $F(\bm a', P)$ is either identical to or a predecessor of $\bm b$ in $P$. To~verify this condition in constant time, we associate with each equivalence class $\bm w$ and each path $P\in \P$, the rank of $\bm w$ in $P$. Thus, we can easily compare the ranks of $F(\bm a', P)$ and $\bm b$ in $P$ to determine the relationship between $e_a$ and $e_b$ in constant time. The~size of the data structure is $O(\sum_{i\leq \lambda}|V(P_i)|)$, plus the space needed to store for each $\bm x\in \W$ and each $P\in \P$, the node $F(\bm x,P)$ and rank of $\bm x$ in $P$. This takes a total of $O(\lambda n)$ space. This completes the proof of the claim.
\end{proof}

On combining \Cref{lemma:k-failures-property} and \Cref{lemma:DAG-reachability}, we obtain an $O(\lambda n)$-sized oracle that, for any set $F$ of $k$ edges, determines in $O(k^2)$ time whether the $(s,t)$-min-cut decreases by $k$ upon the failure of $F$.

It remains to show how to compute an $(s,t)$-min-cut in $G - F$ when the min-cut size decreases by exactly $k$ due to the failure of a set $F$ of $k$ edges. Observe that any min-cut in $G$ that includes all the edges of $F$ must also be an $(s,t)$-min-cut in $G - F$. Thus, it suffices to output a min-cut in $G$ that contains the edges of $F$.

For any vertex $x \in V$, let $\R_{in}(x)$ denote the set of vertices $u \in V$ such that there exists a path from $\bm u$ to $\bm x$ in the strip graph $D_\lambda$. The following lemma provides a characterization of the nearest min-cut when a set of vertices is merged with the source vertex $s$.

\begin{lemma}[Dinitz and Vainshtein~\cite{DinitzV00}]
\label{lemma:DV-NMC-FMC}
For any distinct vertices $x_1,\ldots,x_k$ in $G$ not lying in the equivalence class $\bm t$, the union $$\bigcup_{i=1}^k \R_{in}(x_i)$$
corresponds to the source side of $\nmc\Big(\{s,x_1,\ldots,x_k\},t,G\Big)$. 
\end{lemma}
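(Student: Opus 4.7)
The plan is to exploit a structural bijection, implicit in Properties~\ref{property:DLambda-construction} and~\ref{property:transversal}, between source-sides of $(s,t)$-min-cuts in $G$ and down-sets of the strip graph $D_\lambda$---that is, subsets $A^\star \subseteq \W$ containing $\bm s$, excluding $\bm t$, and closed under predecessors in $D_\lambda$. Once this bijection is in hand, the nearest min-cut from the super-source $\{s,x_1,\ldots,x_k\}$ corresponds to the smallest such down-set containing $\{\bm s,\bm{x_1},\ldots,\bm{x_k}\}$, which is exactly the union of $D_\lambda$-ancestor sets, namely $\bigcup_{i=1}^{k}\R_{in}(x_i)$.

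I would first reduce the super-source problem to the ordinary $(s,t)$-min-cut problem. Any super-source cut is also an $(s,t)$-cut, so its value is at least $\lambda$; once we exhibit an $(s,t)$-min-cut whose source-side contains $\{s,x_1,\ldots,x_k\}$ (which the down-set construction produces at the end), the value is exactly $\lambda$. Hence the super-source min-cuts are precisely the $(s,t)$-min-cuts whose source-sides contain $\{s,x_1,\ldots,x_k\}$, and the task reduces to identifying the smallest such source-side.

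Next I would establish the down-set bijection. For the forward direction, in any $(s,t)$-min-cut $(A,B)$ the critical cut edges point $A \to B$ by Property~\ref{property:transversal}, and any non-critical inter-cluster edge of $G$ between $A$ and $B$ must point $B \to A$ (otherwise it would inflate $|E(A,B)|$ beyond $\lambda$). Reversing non-critical inter-cluster edges to obtain $D_\lambda$ therefore makes every $D_\lambda$-edge between $A$ and $B$ point from $A$ to $B$, showing that $A$ is a down-set. For the converse, given a down-set $A^\star$ with $\bm s \in A^\star$ and $\bm t \notin A^\star$, I would show that the $G$-edges crossing from $A^\star$ to $\W \setminus A^\star$ are exactly the critical edges leaving $A^\star$ in $D_\lambda$ (non-critical inter-cluster edges cannot leave $A^\star$ in $D_\lambda$, by closure), which form a maximal anti-chain of critical edges and hence, again by Property~\ref{property:transversal}, an $(s,t)$-min-cut with source-side~$A^\star$.

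Finally I would identify the smallest down-set containing $\{\bm s, \bm{x_1}, \ldots, \bm{x_k}\}$. Because every inter-cluster edge incident to $\bm s$ must be critical---any such edge lies in a min-cut separating its endpoints---no edge at $\bm s$ gets reversed in $D_\lambda$; symmetrically, no edge at $\bm t$ gets reversed. Hence $\bm s$ is the unique source and $\bm t$ the unique sink of $D_\lambda$. Combined with Assumption~\ref{assumption-1} (which guarantees $\bm s$ reaches every equivalence class in $D_\lambda$), this yields $\bm s \in \R_{in}(x_i)$ for every $i$, so $\R_{in}(s)=\{s\}$ is subsumed in the union; and $\bm t \notin \R_{in}(x_i)$ since $\bm t$ has no outgoing $D_\lambda$-edges and $x_i \notin \bm t$. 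The down-closure of $\{\bm s, \bm{x_1}, \ldots, \bm{x_k}\}$ in $D_\lambda$ is therefore exactly $\bigcup_i \R_{in}(x_i)$, which by the bijection is the source-side of $\nmc(\{s,x_1,\ldots,x_k\},t,G)$. The main technical obstacle is the converse direction of the bijection: verifying that the boundary of an arbitrary down-set of $D_\lambda$ carves out a bona fide $(s,t)$-min-cut without accidentally including any reversed non-critical edges. This amounts to translating closure in $D_\lambda$ back to the original orientation in $G$ and invoking the transversal characterization of Property~\ref{property:transversal}, where the key check is that every $\bm s$-to-$\bm t$ path in $D_\lambda$ exits $A^\star$ exactly once and that first exit edge is forced to be critical.
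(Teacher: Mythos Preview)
The paper does not supply its own proof of this lemma: it is quoted as a result of Dinitz and Vainshtein~\cite{DinitzV00} and used as a black box, so there is no in-paper argument to compare against.

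Your proposal is correct and is precisely the standard route: the Picard--Queyranne correspondence between $(s,t)$-min-cuts and lower sets (down-sets) of $D_\lambda$, followed by the observation that the super-source min-cuts are exactly the $(s,t)$-min-cuts whose source side contains all $x_i$, so the nearest one is the inclusion-minimal down-set containing $\{\bm s,\bm{x_1},\ldots,\bm{x_k}\}$, namely $\bigcup_i \R_{in}(x_i)$. Two small points worth tightening. First, your sentence ``any inter-cluster edge incident to $\bm s$ must be critical'' is true only for edges \emph{leaving} $\bm s$ in $G$; edges entering $\bm s$ are non-critical, but after reversal in $D_\lambda$ they also leave $\bm s$, which is what actually makes $\bm s$ a source of $D_\lambda$. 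Second, in the converse direction of the bijection you do not need to argue that the first exit edge on every $\bm s$--$\bm t$ path is critical; it suffices that the set of $G$-edges from $A^\star$ to its complement consists only of critical edges (as you already show from closure) and that any $D_\lambda$-path, once it leaves the down-set $A^\star$, never re-enters, so it meets the cut at most once---Property~\ref{property:transversal} then applies directly.
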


To compute the source side of $\nmc(s,t,G-F)$, we iterate over all vertices $v \in V$ and check, using \Cref{lemma:DAG-reachability}, whether $\bm v$ is identical to or precedes $\bm{x_i}$, for some $i \in [1,k]$. This check takes $O(k)$ time per vertex. If the condition is satisfied, we add $v$ to the source side $A$. Finally, we return the cut $(A,A^c)$. The total time complexity for this computation is $O(kn)$.

We conclude with the following result.

\begin{reminder}{Theorem~\ref{theorem:k-failures-k-diff}} 
For any $n$-vertex directed graph $G$ with source $s$, sink $t$, and $(s,t)$-min-cut of size $\lambda$, there exists an $O(\lambda n)$-sized oracle, ${\cal O}_{MINCUT}(s,t,G)$, that, for any set $F$ of $k$ edges, determines in $O(k^2)$ time whether the $(s,t)$-min-cut size decreases by $k$ upon the failure of $F$.

Furthermore, if the min-cut size decreases by exactly $k$, the oracle can compute and report an $(s,t)$-min-cut in $G - F$ in $O(kn)$ time.
\end{reminder}

\subsection{Proof of \Cref{lemma:k-failures-minimal-cut}}

Let $F_0$ be a subset of $F$ of minimum size such that the $(s,t)$-min-cut size in $G - F_0$ is identical to that in $G - F$.
Let $Z$ denote an $(s,t)$-min-cut in $G - F_0$ (and hence also in $G - F$). Then, $Z \cup F_0$ forms an $(s,t)$-cut in $G$ with size at most $\lambda + k$. Moreover, $Z \cup F_0$ is a minimal cut; otherwise, there would exist an edge $e \in Z \cup F_0$ such that either $Z\setminus\{e\}$ is a cut in $G - F_0$, or the $(s,t)$-min-cut values in $G - F_0$ and $G - (F_0 \setminus\{e\})$ would coincide.

Thus, by considering all minimal $(s,t)$-cuts of size at most $\lambda + k$ in $G$ and removing $F$ from each, one of them must coincide with $Z$. Furthermore, no smaller cut than $Z$ can be generated, as this would contradict the assumption that $Z$ is an $(s,t)$-min-cut in $G - F$.

\begin{reminder}{Lemma~\ref{lemma:k-failures-minimal-cut}} 
For any failing set $F$ of size $k$, $\mincut(s,t,G-F)$ is given by
$$
\min\Big\{ |C|-|F_0|~ \Big | ~C \text{ is minimal cut in $G$ of size at most }\lambda+k, F_0=F\cap C \Big\}.
$$
\end{reminder}

\section{Lower Bound Results}
\label{section:lower_bounds}

\subsection{Dual Fault-Tolerant Min-cut Oracle}
In this subsection, we show that the construction of our dual fault-tolerant min-cut oracle of $O(\lambda n)$ size is essentially tight up to the word size.

\begin{figure}[!ht]
\centering
\includegraphics[scale=.4]{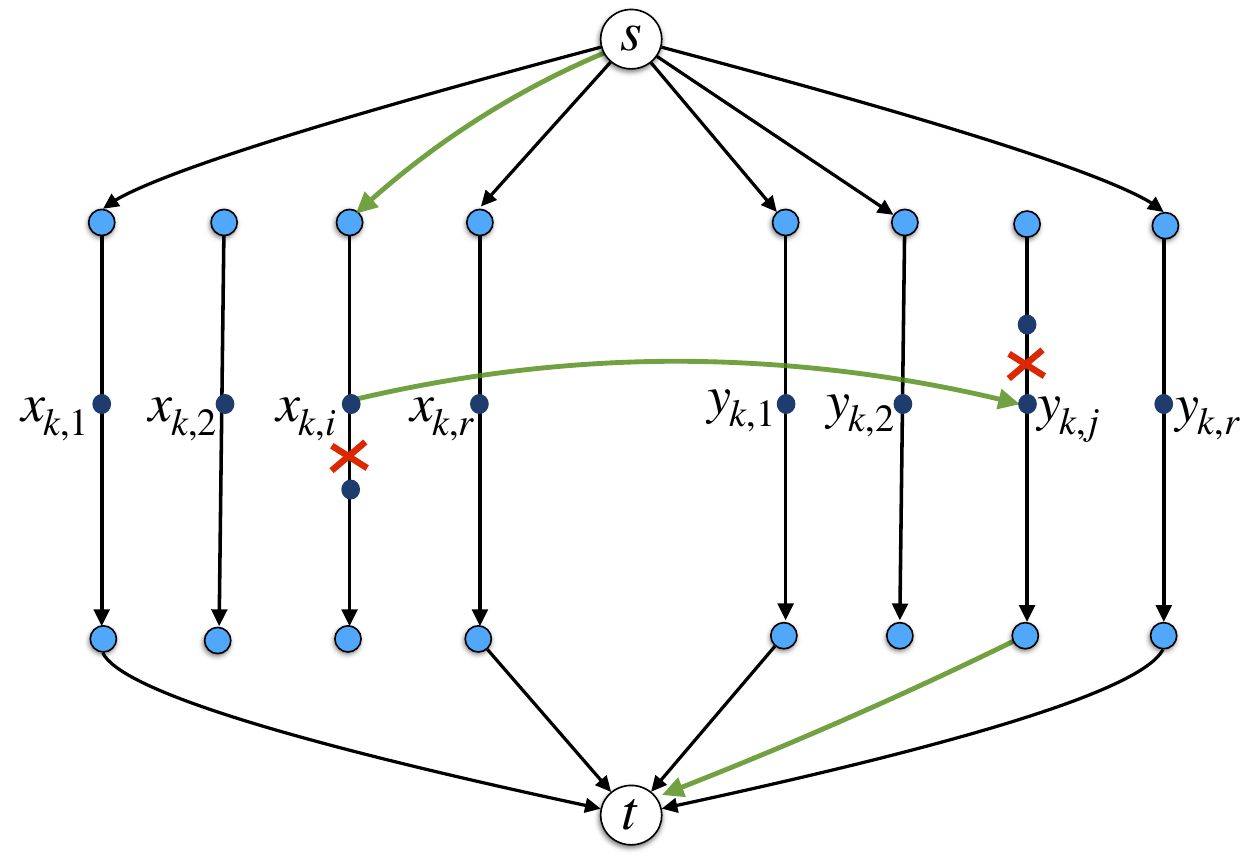}
\caption{Lower bound construction for dual fault-tolerant min-cut oracle.}
\label{fig:lower-bound}
\end{figure}

Let $r,L,n$ be positive integers satisfying $n=\Theta(Lr)$, and $Z_1,\ldots,Z_L$ be arbitrary $r\times r$ binary matrices.
We construct a directed graph $G = (V, E)$ with $n$ vertices, source $s$, and sink $t$ as follows.
We take $2r$ internally vertex-disjoint directed paths: $X_1, \ldots, X_r$ and $Y_1, \ldots, Y_r$, each of length $L + 1$, where, $X_i = (s, x_{1,i}, x_{2,i}, \ldots, x_{L,i}, t)$ and $Y_j = (s, y_{1,j}, y_{2,j}, \ldots, y_{L,j}, t)$, for $i, j \in [r]$. The vertex set $V(G)$ is the union of all vertices on the paths $X_i$ and $Y_j$.
The edge set $E(G)$ includes all edges from the $2r$ paths, as well as edges defined by the matrices $Z_k[i,j]$. Specifically, for each $i, j \in [r]$ and each $k \in [L]$, we add an edge $(x_{k,i}, y_{k,j})$ to $G$ if and only if the $(i,j)^{\text{th}}$ bit of matrix $Z_k$ is 1. This construction is illustrated in \Cref{fig:lower-bound}.

We claim that the $(i,j)^{\text{th}}$ bit of any matrix $Z_k$ can be computed using dual fault-tolerant min-cut queries.
Specifically, on failure of edges $e_1=(x_{k,i}, x_{k+1,i})$ and $e_2=(y_{k-1,j}, y_{k,j})$, the $(s,t)$-max-flow decreases by one if edge $(x_{k,i}, y_{k,j})$ exists; otherwise, it decreases by two. 
This is because if $(x_{k,i},y_{k,j})\in E(G)$, then upon the failure of $e_1$ and $e_2$, the flow entering node $x_{k,i}$ can be rerouted through edge $(x_{k,i},y_{k,j})$ to exit from node $y_{k,j}$, reducing the $(s,t)$-max-flow in $G$ by exactly one. Such rerouting is not feasible if edge $(x_{k,i},y_{k,j})$ is not in $G$. 

Therefore, given access to a dual fault-tolerant min-cut oracle, we can determine all $r \times r$ entries of the matrices $Z_1, \ldots, Z_L$. We summarize this result in the following theorem.

\begin{theorem}
\label{theorem:lb-oracle-even-flow}
For any positive integers $n$ and $r$ $(r \leq n)$, there exists an $n$-vertex directed graph with a source $s$, a sink $t$, and an $(s,t)$-max-flow of $2r$, such that any dual fault-tolerant min-cut oracle for this graph requires $\Omega(nr)$ bits.
\end{theorem}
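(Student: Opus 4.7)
The plan is to establish the $\Omega(nr)$ lower bound by an information-theoretic argument, using the graph $G$ constructed just above the theorem to encode $Lr^2$ adversarial bits (the entries of $Z_1,\ldots,Z_L$) and then showing that each bit is recoverable from a single dual-failure min-cut query. I will set $L=\Theta(n/r)$ so that $G$ has $n = 2 + 2Lr = \Theta(Lr)$ vertices and, as I verify below, $(s,t)$-max-flow exactly $2r$.

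First I would confirm $\maxflow(s,t,G)=2r$: the $2r$ internally vertex-disjoint paths $X_1,\ldots,X_r,Y_1,\ldots,Y_r$ give a lower bound, while the out-degree of $s$ is exactly $2r$, giving the matching upper bound. Next, for each triple $(k,i,j)\in[L]\times[r]\times[r]$, I would analyze $G-F$ with $F=\{(x_{k,i},x_{k+1,i}),\,(y_{k-1,j},y_{k,j})\}$ and prove the key identity
\[
\mincut(s,t,G-F) \;=\; \begin{cases} 2r-1 & \text{if } Z_k[i,j]=1,\\[1pt] 2r-2 & \text{if } Z_k[i,j]=0. \end{cases}
\]

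The structural fact that drives the analysis is that every crossing edge is of the form $X\!\to\!Y$ at a fixed level (there are no $Y\!\to\!X$ edges, no crossings among distinct $X$-columns, and none among distinct $Y$-columns). Two immediate consequences in $G-F$ are: (a) the segment $y_{1,j}\!\to\!\cdots\!\to\!y_{k-1,j}$ is a dead-end, so $(s,y_{1,j})$ carries zero flow; and (b) the segment $x_{k+1,i}\!\to\!\cdots\!\to\!x_{L,i}$ has no incoming source, so $(x_{L,i},t)$ carries zero flow. Together these give the universal upper bound $\mincut(s,t,G-F)\le 2r-1$. When $Z_k[i,j]=1$, the matching lower bound is witnessed by $2r-1$ edge-disjoint $(s,t)$-paths: the $r-1$ intact $X_{i'}$-paths ($i'\neq i$), the $r-1$ intact $Y_{j'}$-paths ($j'\neq j$), and the rerouted path $s\!\to\!x_{1,i}\!\to\!\cdots\!\to\!x_{k,i}\!\to\!y_{k,j}\!\to\!\cdots\!\to\!y_{L,j}\!\to\!t$.

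The main obstacle is showing the sharper upper bound $\mincut(s,t,G-F)\le 2r-2$ in the case $Z_k[i,j]=0$. I would argue this by path decomposition: any unit of flow entering $(s,x_{1,i})$ must exit through some $(y_{L,j'},t)$, since $(x_{L,i},t)$ is dead. Reaching the suffix of $Y_j$ from $x_{1,i}$ in $G-F$ would require a crossing $(x_{k'',i},y_{k'',j})$ with $k''\ge k$ (to land past the failure on $Y_j$), while within the $X_i$-prefix we can only reach $x_{k'',i}$ with $k''\le k$; the two forces $k''=k$, and that edge exists only if $Z_k[i,j]=1$. Hence under $Z_k[i,j]=0$, the unit from $(s,x_{1,i})$ is forced into $(y_{L,j'},t)$ for some $j'\neq j$, but then the unit from $(s,y_{1,j'})$, which is trapped on $Y_{j'}$ and can only exit via the same edge, is blocked and lost; so at most $2r-2$ units of flow can be realized. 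With the identity verified, the information-theoretic step closes the argument: any oracle answering dual-failure min-cut queries on $G$ lets us recover every bit of every $Z_k$, so its representation needs at least $Lr^2=\Omega(nr)$ bits.
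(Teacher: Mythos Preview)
Your proposal is correct and follows essentially the same approach as the paper: you use the identical graph construction to encode $Lr^2=\Theta(nr)$ adversarial bits and recover each bit $Z_k[i,j]$ from the dual-failure query $F=\{(x_{k,i},x_{k+1,i}),(y_{k-1,j},y_{k,j})\}$. Your analysis of the case $Z_k[i,j]=0$ (via path decomposition, showing the unit on $(s,x_{1,i})$ must steal the terminal edge $(y_{L,j'},t)$ from some $Y_{j'}$) is in fact more detailed than the paper's, which simply asserts that ``such rerouting is not feasible'' when the crossing edge is absent.
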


This result holds even if the $(s,t)$-max-flow is increased to $2r + 1$ by adding an edge directly from the source to the sink. This proves \Cref{theorem:lb-min-cut}.

\subsection{A Lower Bound on Families $\A$ and $\B$ under Dual Failures}

In this subsection, we establish a lower bound of $\Omega(n)$ on the size of the families $\A$ and $\B$ in the presence of dual edge failures.

Consider a directed graph $G = (V, E)$ with a designated source $s$ and sink $t$. Throughout this section, we use $\A$ to denote the family of maximum flows in $G$ such that for any set $F\subseteq E$ of size two satisfying $\maxflow(s,t,G)=\maxflow(s,t,G-F)$, the family $\A$ contains a maximum flow of $G - F$. Additionally, we use $\B$ to represent the family of maximum flows in $G - F$, for each set $F\subseteq E$ of size two. We aim to show that both $\A$ and $\B$ must have a size of at least $\Omega(n)$.

We construct a directed graph $G = (V, E)$ with $n$ vertices as follows: The graph contains two internally vertex-disjoint directed paths, $X = (x_1, x_2, \ldots, x_L)$ and $Y = (y_1, y_2, \ldots, y_L)$, each of length $L = (n - 2)/2$. These paths share their endpoints, that is, $x_1 = y_1$ and $x_L = y_L$. Additionally, an edge is added from the source $s$ to $x_1$, from $x_L$ to the sink $t$, and from $x_i$ to $y_i$, for each $i \in [2, L - 1]$.

It is straightforward to verify that the minimum $(s, t)$-cut size in this graph is one.

Now consider the failure of two edges: $(x_i, x_{i+1})$ and $(y_{i-1}, y_i)$. Under these failures, the path 
$ R_i := (s, x_1, x_2, \ldots, x_i, y_i, y_{i+1}, \ldots, y_L, t)$ becomes the unique $(s, t)$-path in the graph. This implies that there are at least $L = \Omega(n)$ distinct maximum flows required in family $\A$ as well as in family $\B$.

We thus obtain the following theorem.

\begin{theorem}
For any positive integer $n$, there exists an $n$-vertex directed graph with a source~$s$, a sink $t$, and an $(s,t)$-maximum flow of value one such that any construction of the families $\A$ and $\B$ resilient to dual failures require at least $\Omega(n)$ flows.
\label{theorem:FT-family-lower-bound}
\end{theorem}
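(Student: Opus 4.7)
The plan is to exhibit an explicit $n$-vertex directed graph $G$ together with a collection of $\Omega(n)$ dual-failure sets, each of which forces a distinct maximum flow into the family. Concretely, take $L=(n-2)/2$ and build $G$ from two internally vertex-disjoint directed paths $X=(x_1,\dots,x_L)$ and $Y=(y_1,\dots,y_L)$ that share endpoints $x_1=y_1$ and $x_L=y_L$; append a source edge $(s,x_1)$ and a sink edge $(x_L,t)$, and for every $i\in[2,L-1]$ add a ``bridge'' $(x_i,y_i)$. A single trivial cut shows $\maxflow(s,t,G)=1$.

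Next I would exhibit, for each $i\in\{2,\dots,L-1\}$, a failing set $F_i=\{(x_i,x_{i+1}),(y_{i-1},y_i)\}$ and verify two facts about $G-F_i$: (a) the unique $s$-$t$ simple path is $R_i=(s,x_1,\dots,x_i,y_i,y_{i+1},\dots,y_L,t)$, obtained by walking down $X$ until index $i$, crossing the bridge, and finishing along $Y$; and (b) consequently $\maxflow(s,t,G-F_i)=1$. Fact~(a) is a short case check: the only way to cross from the $X$-path to the $Y$-path (or vice versa) is through a bridge $(x_j,y_j)$, and the two failures precisely force the crossing to happen at index $j=i$. Fact~(b) follows since $R_i$ exists and there is only one edge leaving $s$.

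Now I would use these observations to lower bound both $|\mathcal{A}|$ and $|\mathcal{B}|$. Since the max-flow value in $G-F_i$ equals that of $G$, the family $\mathcal{A}$ must contain a maximum flow of $G-F_i$; since $R_i$ is the unique $s$-$t$ path in $G-F_i$, this max-flow is uniquely determined and must saturate exactly the edges of $R_i$. The edge set of $R_i$ contains the bridge $(x_i,y_i)$, which belongs to no other $R_j$ with $j\neq i$, so the $L-2$ required flows are pairwise distinct. The same reasoning applies verbatim to $\mathcal{B}$, yielding $|\mathcal{A}|,|\mathcal{B}|\ge L-2=\Omega(n)$, which is the desired bound.

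I do not anticipate a real obstacle here; the construction is already essentially described in the paragraph preceding the theorem. The only point that requires a moment of care is the uniqueness assertion in Fact~(a): one must check that the two deletions cut \emph{every} alternative routing through the bridges, i.e., that a path cannot leave $X$ before index $i$ and later return to $X$, and symmetrically for $Y$. Since all bridges are oriented from $X$ to $Y$, once a path leaves $X$ via some bridge it can never re-enter $X$; combined with the two deletions, this pins down $R_i$ as the unique option and completes the argument.
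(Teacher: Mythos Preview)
Your proposal is correct and follows exactly the paper's own argument: the same two-path-plus-bridges graph, the same failure sets $F_i=\{(x_i,x_{i+1}),(y_{i-1},y_i)\}$, and the same observation that each $G-F_i$ has $R_i$ as its unique $(s,t)$-path, forcing $\Omega(n)$ distinct flows into both $\A$ and $\B$. Your added justification for uniqueness (bridges are one-way from $X$ to $Y$, so a path cannot re-enter $X$) is a welcome clarification that the paper leaves implicit.
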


\section{Circulation with Lower Limits}
\label{section:circulation-with-lower-limits}

The circulation problem with lower bounds is a specific type of network flow problem. Consider a directed graph $G = (V, E,d,\ell,\mu)$ such that $d$ is a function mapping $V$ to real numbers, and $\ell,\mu$ are both functions mapping $E$ to non-negative reals. The objective here is to find a flow $f: E \to \mathbb{R}^+\cup\{0\}$ that satisfies:
\subparagraph{1. Capacity constraints} $ \ell(e) \leq f(e) \leq \mu(e) $, for each $ e\in E $.
\subparagraph{2. Flow conservation} For each $ v \in V $,
$$
\sum_{x \in \inv(v)} f(x,v) - \sum_{y \in \outv(v)} f(v,y) ~=~ d(v).
$$

The criteria for a circulation to exist is fairly well-known; however, for completeness, we will prove the necessary and sufficient conditions in the following lemma.

\begin{lemma}
Let $G=(V,E,d,\ell,\mu)$ be an instance of the circulation problem. A circulation for $G$ exists if and only if $\sum_{v\in V}d(v)=0$, and, for all cuts $(A,B)$ in $V$, the following inequality holds.
$$
d(B)+\ell(B,A)~\leq~ \mu(A,B),
$$
where, 
$$
d(B)=\sum_{v\in B} d(v)
\text{,~~~~} 
\ell(B,A)=\displaystyle \sum_{\substack{(x,y)\in E,\\x\in B,~y\in A}}\ell(x,y)
\text{,~~~and~~~} 
\mu(A,B)=\displaystyle \sum_{\substack{(x,y)\in E,\\x\in A,~y\in B}}\mu(x,y).
$$
\label{lemma:circulation-with-lower-limits}
\end{lemma}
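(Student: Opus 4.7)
The plan is to prove both directions, with the reverse direction handled by a reduction to the standard max-flow problem.

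For the forward direction (necessity), suppose a circulation $f$ exists. Summing the flow-conservation equations over all $v \in V$, each edge contributes its flow once positively (at its head) and once negatively (at its tail), yielding $\sum_{v\in V} d(v) = 0$. For any cut $(A,B)$, summing conservation over $v \in B$ cancels contributions of edges with both endpoints in $B$ and gives
$$\sum_{\substack{(x,y)\in E \\ x\in A,\, y\in B}} f(x,y) \;-\; \sum_{\substack{(x,y)\in E \\ x\in B,\, y\in A}} f(x,y) \;=\; d(B).$$
Upper-bounding the first sum by $\mu(A,B)$ and lower-bounding the second by $\ell(B,A)$ immediately delivers $d(B) + \ell(B,A) \leq \mu(A,B)$.

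For the reverse direction (sufficiency), I first eliminate the lower bounds via the substitution $f'(e) = f(e) - \ell(e)$. In the transformed problem the edge capacities become $\mu(e) - \ell(e)$, the lower bounds are zero, and the demand at each $v$ becomes
$$d'(v) \;:=\; d(v) \;+\; \sum_{y \in \outv(v)} \ell(v,y) \;-\; \sum_{x \in \inv(v)} \ell(x,v),$$
which still satisfies $\sum_v d'(v) = 0$. Next I augment by introducing a super-source $s^*$ and super-sink $t^*$: for every vertex $v$ with $d'(v) > 0$, add an arc $(v, t^*)$ of capacity $d'(v)$; for every $v$ with $d'(v) < 0$, add an arc $(s^*, v)$ of capacity $-d'(v)$. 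Let $T = \sum_{v:\, d'(v) > 0} d'(v)$. A feasible $f'$ exists if and only if the maximum $(s^*, t^*)$-flow in the augmented graph equals $T$, i.e., saturates every arc incident to $s^*$ or $t^*$. By the standard max-flow min-cut theorem, this is equivalent to every $(s^*, t^*)$-cut having capacity at least $T$. Parameterizing such a cut by the induced partition $V = A \cup B$ of the original vertices (with $s^*$ on the $A$-side and $t^*$ on the $B$-side), the cut capacity equals
$$\sum_{\substack{v \in A \\ d'(v) > 0}} d'(v) \;+\; \sum_{\substack{v \in B \\ d'(v) < 0}} \bigl(-d'(v)\bigr) \;+\; \bigl(\mu(A,B) - \ell(A,B)\bigr).$$

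The main obstacle is the bookkeeping in the final algebraic identity: I need to verify that this cut capacity being $\geq T$ is equivalent to $d(B) + \ell(B,A) \leq \mu(A,B)$. The key computation is $\sum_{v\in B} d'(v) = d(B) + \ell(B,A) - \ell(A,B)$, obtained by swapping the order of summation in the definition of $d'(v)$ and observing that the contributions of lower bounds on edges internal to $B$ cancel pairwise. Substituting the decomposition $T = \sum_{v \in A,\, d'(v) > 0} d'(v) + \sum_{v \in B,\, d'(v) > 0} d'(v)$ into the cut inequality, combining the $d'$-terms over $B$, and applying this identity collapses the expression to precisely $\mu(A,B) - d(B) - \ell(B,A) \geq 0$, completing the proof.
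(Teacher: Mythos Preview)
Your proposal is correct and follows essentially the same approach as the paper: both reduce to a circulation without lower bounds via the shift $f'(e)=f(e)-\ell(e)$ (yielding the same modified demand $d'(v)=\delta(v)$), add a super-source and super-sink, apply max-flow/min-cut, and finish with the identity $\sum_{v\in B} d'(v) = d(B)+\ell(B,A)-\ell(A,B)$. Your write-up additionally spells out the necessity direction explicitly, which the paper omits.
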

\begin{proof}
Consider an instance $G=(V,E,d,\ell,\mu)$ of the circulation problem. We first define a new instance $G^*=(V,E,\delta,c)$ of the circulation problem (without lower limits)  as follows. 
For each $v\in V$, let
$$\delta(v)~=~d(v)~+~\sum_{y\in \outv(v)}\ell(v,y)~-~\sum_{e\in \inv(v)}\ell(x,v).$$

The rationale behind defining $\delta(v)$ is that if, for each $e\in E$, we pass $\ell(e)$ amount of flow, then the problem reduces to a circulation problem without lower limits where, for any node~$v$, the demand of $v$ increases by the value
$~\sum_{e\in \outedges(v)}\ell(e)~-~\sum_{e\in \inedges(v)}\ell(e)$. 
Note that in doing so, the capacity of any edge $e$ must reduce by $\ell(e)$. Thus, for each $e\in E$, we set $c(e)=\mu(e)-\ell(e)$. Observe that a circulation in $G$ exists if and only if a circulation in $G^*$ exists.\\

Now, to solve the circulation in $G^*$, compute a graph $\widetilde G=(\widetilde V,\widetilde E,c)$ as follows. The set~$\widetilde V$ is obtained by adding two additional nodes $s,t$ to $V$. We include in $\widetilde E$ the edges in the set $E$. In addition,
\begin{itemize}
\item for each $x\in V$ with $\delta(x)<0$, add an edge from $s$ to $x$ and set $c(s,x)=|\delta(x)|$;
\item for each $y\in V$ with $\delta(y)>0$, add an edge from $y$ to $t$ and set $c(y,t)=|\delta(y)|$.
\end{itemize}

\vspace{2mm}
Let $V^-$ (resp. $V^+$) be the sets of nodes $v$ for which $\delta(v)$ is negative (resp. positive). It is easy to observe that the new circulation problem (without lower limits) is solvable if and only if the following holds.

\begin{enumerate}
\item $\sum_{v\in V}\delta(v)=0$, or equivalently, $\sum_{v\in V}d(v)=0$; and
\item there is an $(s,t)$ flow in $\widetilde G$ of value at least $\sum_{v\in V^+}\delta(v)$.\\[-2mm]
\end{enumerate}

\noindent
The second condition can be rewritten as follows: for all partitions $(A,B)$ of $V$,
the capacity of  cut $(s\cup A,B\cup t)$ is at least $\sum_{v\in V^+}\delta(v)$. That is,

\begin{align*}
\sum_{v\in V^+}\delta(v)~\leq~~ c(s\cup A,B\cup t)~~
&=~~ \sum_{v\in V^- \cap B}c(s,v)+c(A,B)+\sum_{v\in V^+ \cap A}c(v,t)\\[2mm]
&=~~ -\sum_{v\in V^- \cap B}\delta(v)+c(A,B)+\sum_{v\in V^+ \cap A}\delta(v).
\end{align*}

\noindent
This implies: 
\begin{align*}
c(A,B)~~&\geq~~\sum_{v\in V^+}\delta(v)-\sum_{v\in V^+ \cap A}\delta(v) +\sum_{v\in V^- \cap B}\delta(v)\\
&=~~ \sum_{v\in B}\delta(v)\\
&=~~ \sum_{v\in B} \Big(d(v)~+~\sum_{y\in \outv(v)}\ell(v,y)~-~\sum_{e\in \inv(v)}\ell(x,v)\Big)\\
&=~~ d(B)~+~~\ell(B,A)~~-~~\ell(A,B).
\end{align*}

\vspace{2mm}

\noindent
Therefore, a circulation for $G=(V,E,d,\ell,\mu)$ exists if and only if $\sum_{v\in V}d(v)=0$, and $\mu(A,B)\geq d(B)+\ell(B,A)$, for every cut $(A,B)$ in $G$.
\end{proof}

\section{Review of $\Omega(n^2)$ space bound in Min-cut Sensitivity Oracle of \cite{BaswanaBP22}}
\label{section:baswana-min-cut-oracle-space}
 
A natural question that arises is what happens to the space complexity of the  dual fault-tolerant oracle of Baswana et al.~\cite{BaswanaBP22} on parameterizing the problem by the $(s,t)$-min-cut size $\lambda$? Specifically, if $\lambda = o(n)$, does the oracle of \cite{BaswanaBP22} still require $\Omega(n^2)$ space?

We demonstrate that this is indeed the case. To establish this, we construct a graph $G$ with an $(s,t)$-min-cut of size 1, for which the oracle in \cite{BaswanaBP22} requires $\Theta(n^2)$ space, even when both deleted edges are intra-cluster and belong to the same equivalence class. In contrast, since $\lambda = 1$, our oracle, as presented in \Cref{theorem:dual-min-cut}, requires only $O(n)$ space.

To provide context, we briefly outline what the oracle in \cite{BaswanaBP22} stores for handling {\em intra-cluster} edge failures. Since our focus is on analysing the space complexity of oracle, we omit details of their query procedure.

For each cluster $W \in \W$, let $u_W \in W$ be an arbitrary representative vertex. To handle queries where both failed edges are intra-cluster and lie in the same cluster $W$, the oracle in \cite{BaswanaBP22} stores the following:

\begin{itemize}
\item For each $x\in W$, let ${\cal C}^N_{x, u_W}$ be the family of nearest $(\min+1)$-cuts in $G$ such that $x, u_W$ lie on the source side. Then, the oracle stores the family 
${\cal N}_{x, u_W} = \{ B_C\cap W ~|~ (A_C,B_C) \in {\cal C}^N_{x, u_W} \}$.
\item For each $x\in W$, let ${\cal C}^F_{x, u_W}$ be the family of farthest $(\min+1)$-cuts in $G$ such that $x, u_W$ lie on the sink side. Then, the oracle stores the family 
${\cal F}_{x, u_W} = \{ A_C\cap W ~|~ (A_C,B_C) \in {\cal C}^F_{x, u_W} \}$.
\end{itemize}

Baswana et al.~\cite{BaswanaBP22} proved that for a given $x \in W$, all sets in the family ${\cal N}_{x, u_W}$ are disjoint, allowing it to be stored in $O(|W|)$ space. Similarly, storing ${\cal F}_{y, u_W}$ for each $y \in W$ also requires $O(|W|)$ space. Thus, for handling failures within induced graph $G[W]$, their oracle requires a total of $O(|W|^2)$ space. We show that this bound is tight even when $\lambda = 1$.

\begin{figure}[!ht]
\centering
\includegraphics[scale=.38]{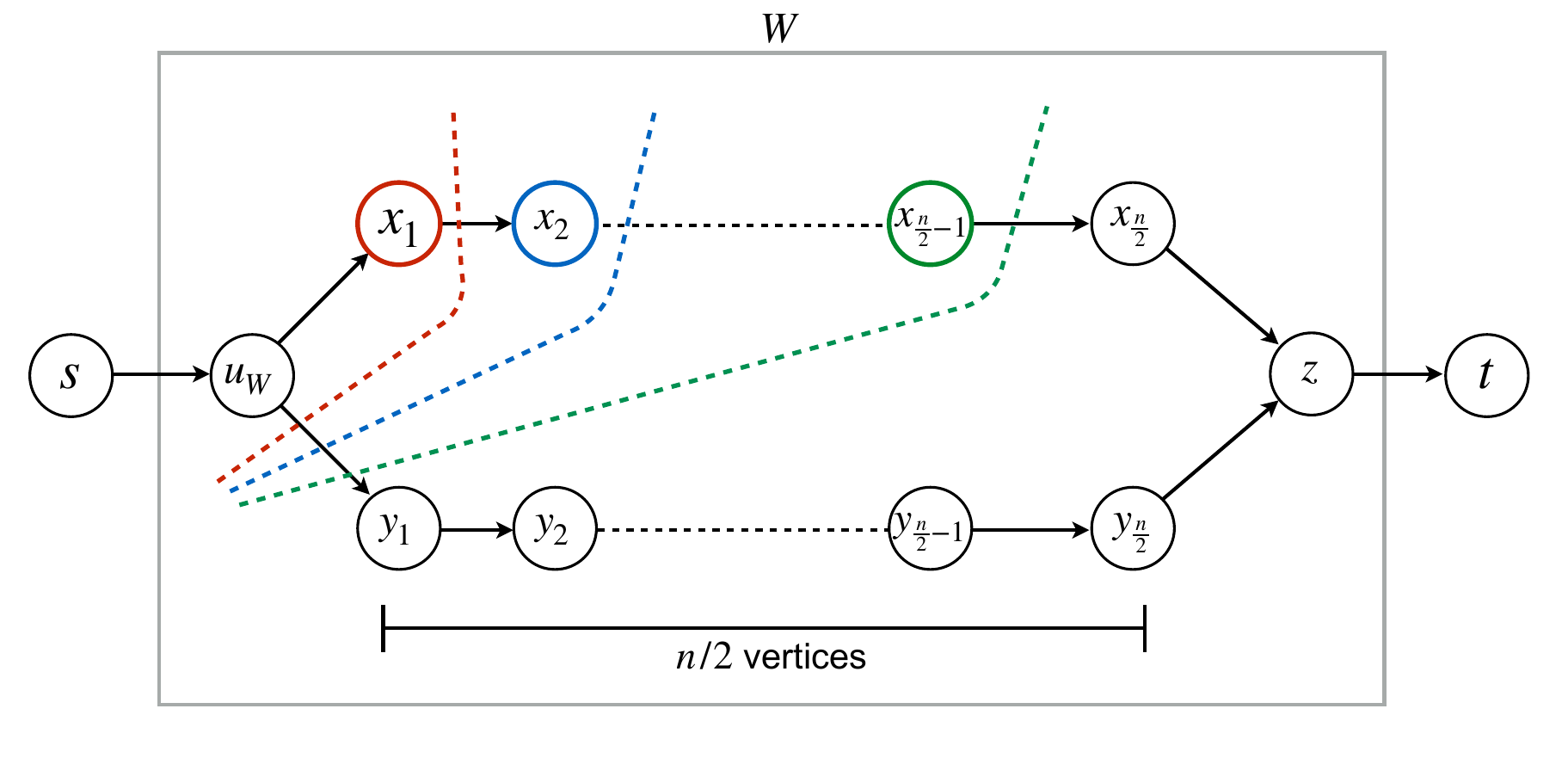}
\caption{Depiction of nearest $(\min+1)$-cuts in $G$ such that $x_i,u_W$ lie on the source side.}
\label{fig:baswana-nearest-cuts}
\end{figure}
\vspace{-3mm}

\paragraph{Tight Space Bound Example}
Consider a graph $G$ with $n+4$ vertices with a source $s$, sink~$t$, and two internally vertex-disjoint paths—one from $u_W$ to $z$ via vertices $x_1, x_2, ..., x_{n/2}$, and another from $u_W$ to $z$ via vertices $y_1, y_2, ..., y_{n/2}$. Additionally, there is an edge from $s$ to $u_W$ and from $z$ to $t$. See \Cref{fig:baswana-nearest-cuts}. The graph has an $(s,t)$-min-cut size of 1.

Now consider the unique equivalence class $W$ in $G$ that does not contain $s$ and $t$. For each vertex $x_i \in W$, there exists a unique nearest $(\min+1)$-cut such that both $x_i$ and $u_W$ lie on the source side. For each $i$, the family 
$ {\cal N}_{x_i,u_W}$ contains the set $\{ x_{i+1}, x_{i+2}, \dots, x_{n/2}, y_1, y_2, \dots, y_{n/2}, z \}$ of size at least $n/2$. Therefore, storing all such sets for each vertex in $W$ requires at least $\Omega(n^2)$ space.

This shows that even when $\lambda = 1$, the dual fault-tolerant min-cut oracle of~\cite{BaswanaBP22} requires quadratic space.

\end{document}